\newcommand{\fullversion}{}
\newcommand{\shortversion}{}
\newcolumntype{C}[1]{>{\centering\let\newline\\\arraybackslash\hspace{0pt}}m{#1}}
\newcommand{\MS}[1]{\ensuremath{\Delta}}
\newcommand{\set}[1]{\ensuremath{\mathcal{S}_{#1}}}
\newcommand{\eps}{\ensuremath{\varepsilon}}
\newcommand{\predec}{{\textsf{pred}}}
\newcommand{\select}{{\textsf{select}}}
\newcommand{\rank}{{\textsf{rank}}}
\newcommand{\rankA}{{\textsf{drankA}}}
\newcommand{\vrankA}{{\textsf{rankA}}}
\newcommand{\vselectA}{{\textsf{selectA}}}
\newcommand{\dselectA}{{\textsf{dselectA}}}
\newcommand{\suffixsum}{{\textsf{ss}}}
\newcommand{\iss}{{\textsf{iss}}}
\newcommand{\dsuffixA}{{\textsf{ssA}}}
\newcommand{\vissA}{{\textsf{issA}}}
\newcommand{\ssA}{{\textsf{ssA}}}
\newcommand{\issA}{{\textsf{issA}}}
\newcommand{\summ}{{\textsf{sum}}}
\newcommand{\search}{{\textsf{search}}}
\newcommand{\pred}{{\textsf{pred}}}
\newcommand{\theTitle}{Approximate Query Processing over Static Sets and Sliding Windows}
\title{\theTitle}
\titlerunning{\theTitle}%optional, please use if title is longer than one line
\author{Ran Ben Basat}{Harvard University, Cambridge, USA}{ran@seas.harvard.edu}{}{Supported by the Zuckerman Foundation, the Technion Hiroshi Fujiwara cyber security research center, and the Israel Cyber Directorate.}%mandatory, please use full name; only 1 author per \author macro; first two parameters are mandatory, other parameters can be empty.
\author{Seungbum Jo}{University of Siegen, Germany}{Seungbum.Jo@uni-siegen.de}{ https://orcid.org/0000-0002-8644-3691}{The author of this paper is supported by the DFG research project
	LO748/11-1}%mandatory, please use full name; only 1 author per \author macro; first two parameters are mandatory, other parameters can be empty.
\author{Srinivasa Rao Satti}{Seoul National University, South Korea}{ssrao@cse.snu.ac.kr}
{https://orcid.org/0000-0003-0636-9880}{}
\author{Shubham Ugare}{IIT Guwahati, Guwahati, India}{ugare.dipak@iitg.ac.in}{}{}
\authorrunning{R.\,B. Basat, S. Jo, S.\,R. Satti and S. Ugare}%mandatory. First: Use abbreviated first/middle names. Second (only in severe cases): Use first author plus 'et al.'
\subjclass{Theory of computation$\rightarrow$Data compression}% mandatory: Please choose ACM 2012 classifications from https://www.acm.org/publications/class-2012 or https://dl.acm.org/ccs/ccs_flat.cfm . E.g., cite as "General and reference $\rightarrow$ General literature" or \ccsdesc[100]{General and reference~General literature}. 
\keywords{Streaming, Algorithms, Sliding window, Lower bounds}%mandatory
\newtheorem*{theorem*}{Theorem}
\newtheorem*{lemma*}{Lemma}
\newcommand{\newInputLetter}{\ensuremath{{\rho}}}
\newcommand{\UBnBlocks}{\ceil{{n/\nu}+1}}
\newcommand{\lastBlockEnd}{\ensuremath{g}}
\newcommand{\lnrLBSymbol}{\mathcal B_{\ell,n,\lnrErrSymbol}}
\newcommand{\minDelta}{1}
\newcommand{\lnrErrSymbol}{\delta}
\newcommand{\sensitivity}{\ensuremath{\widetilde{\lnrErrSymbol}}}
\newcommand{\ceil}[1]{ \left\lceil{#1}\right\rceil}
\newcommand{\floor}[1]{ \left\lfloor{#1}\right\rfloor}
\newcommand{\parentheses}[1]{ \left({#1}\right)}
\newcommand{\biggParentheses}[1]{ \bigg({#1}\bigg)}
\newcommand{\logp}[1]{\log\parentheses{#1}}
\newcommand{\lgp}[1]{\lg\parentheses{#1}}
\newcommand{\Omegap}[1]{\Omega\parentheses{#1}}
\newcommand{\Op}[1]{O\parentheses{#1}}
\newcommand{\Thetap}[1]{\Theta\parentheses{#1}}
\newcommand{\cdotpa}[1]{\cdot\parentheses{#1}}
\newcommand{\oneOverE}{ \eps^{-1} }
\newcommand{\range}[2][0]{#1,1,\ldots,#2}
\newcommand{\frange}[1]{\set{\range{#1}}}
\newcommand{\sFactor}{(1+o(1))}
\newcommand{\sNegFactor}{(1-o(1))}
\newcommand{\brackets}[1]{\left[#1\right]}
\newcommand{\nBits}{\mathfrak b}
\newcommand{\ranker}{\ensuremath{\mathbb A}}
\newcommand{\numBits}{\ensuremath{\mathit{numElems}}}
\newcommand{\setBits}{\ensuremath{\mathit{totalSum}}}
\newcommand{\lastBit}{\ensuremath{\mathit{oldest_\newInputLetter}}}
\newcommand{\outBits}{\ensuremath{\mathit{out}}}
\newcommand{\outBitsVal}{\ensuremath{\parentheses{\nu - \parentheses{(i-o)\mod \nu}}}}
\newcommand{\bs}{{\sc Basic-Summing}}
\newcommand{\add}  [1][] { {\sc Add}$(#1)$}
\newcommand{\window}{n}
\newcommand{\lgw}{\lg \window}
\newcommand{\blockOffset}{o}
\newcommand{\inputVariable}{x}
\newcommand{\bsrange}{ \ell }
\begin{document}
\maketitle
\begin{abstract}
Indexing of static and dynamic sets is fundamental to a large set of applications such as information retrieval and caching. Denoting the characteristic vector of the set by $B$, we consider the problem of encoding sets and multisets to support \emph{approximate} 
versions of the operations $\rank{}(i)$ (i.e., computing $\sum_{j \le i}B[j]$) 
and $\select{}(i)$ (i.e., finding $\min\{p\mid\rank{}(p)\ge i\}$) queries.
We study multiple types of approximations (allowing an error in the query or the result) and present lower bounds and succinct data structures for several variants of the problem.
We also extend our model to sliding windows, in which we process a stream of elements and compute \emph{suffix sums}. This is a generalization of the window summation problem that allows the user to specify the window size \emph{at query time}. Here, we provide an algorithm that supports updates and queries in constant time while requiring just $(1+o(1))$ factor more space than the fixed-window summation~algorithms. 
\end{abstract}
\section{Introduction}
\label{sec:intro}

Given a bit-string $B[1 \dots n]$ of size $n$, one of the fundamental and well-known problems proposed by
Jacobson~\cite{Jacobson:1988:SSD:915547}, is to construct a space-efficient data structure
which can answer $\rank{}$ and $\select{}$ queries on $B$ efficiently.
For $b \in \{0, 1\}$, these queries are defined as follows.
%one can define $\rank{}$ and $\select{}$ on $B$ as follows.

\begin{itemize}
\item{$\rank{}_{b}(i, B)$ :} returns the number of $b$'s in $B[1 \dots i]$.
\item{$\select{}_{b}(i, B)$ :} returns the position of the $i$-th $b$ in $B$.
\end{itemize}

A bit vector supporting a subset of these operations is one of the basic building blocks
in the design of various succinct data structures. Supporting these operations in constant time, with close to the optimal 
amount of space, both theoretically and practically, has received a wide range of attention~\cite{DBLP:journals/cj/JoJORS17, Munro:2001:SES:375519.375532, DBLP:conf/wea/NavarroP12, DBLP:conf/alenex/OkanoharaS07, DBLP:journals/talg/RamanRS07}.
Some of these results also explore trade-offs that allow more query time while reducing the space.

We also consider related problems in the streaming model,
where a quasi-infinite sequence of integers arrives, and our algorithms need to support the operation of appending a new item to the end of the stream. 
For $i\in\{1,\ldots, n\}$, let $S_i$ be the sum of the last $i$ integers. Here, $n$ is the maximal suffix size we support queries for.
For streaming, we consider processing a stream of elements, and answering two types of queries, \textit{suffix sum} ($\suffixsum{}$) and \textit{inverse suffix sum} ($\iss{}$), defined as:
\begin{itemize}
	\item {$\suffixsum{}(i, n)$:} returns $S_i$ for any $1 \le i \le n$.
	\item {$\iss{}(i, n)$:} returns the smallest $j$, $1 \le j \le n$, such that $\suffixsum{}(j, n) \ge i$.	
\end{itemize}

In this paper, our goal is to obtain space efficient data structures for supporting a few relaxations of 
these queries efficiently using an amount of space below the theoretical minimum (for the unrelaxed versions), ideally.
To this end, we define {\em approximate} versions of \rank{} and \select{} queries, and 
propose data structures for answering \textit{approximate \rank{} and \select{} queries} on multisets and bit-strings.
We consider the following approximate queries with an \textit{additive} error $\delta > 0$.

\begin{itemize}
\item {$\vrankA{}_{b}(i, B, \delta)$:} returns any value $r$ which satisfies $\rank{}_{b}(i-\delta, B) < r \le \rank{}_{b}(i, B)$. If $\rank{}_{b}(i-\delta, B) = \rank{}_{b}(i, B)$, then $\vrankA{}_{b}(i, B, \delta) = \rank{}_{b}(i, B)$.
\item {$\rankA{}_{b}(i, B, \delta)$:} returns any value $r$ which satisfies $\rank{}_{b}(i, B)-\delta < r \le \rank{}_{b}(i, B)$.
\item {$\vselectA{}_{b}(i, B, \delta)$:} returns any position $p$ which satisfies
$\select{}_{b}(i-\delta, B) < p \le \select{}_{b}(i, B)$.
\item {$\dselectA{}_{b}(i, B, \delta)$:} returns any position $p$ which satisfies 
$\select{}_{b}(i, B)-\delta < p \le \select{}_{b}(i, B)$.
\item {$\ssA{}(i,n, \delta)$:} returns any value $r$ which satisfies $\suffixsum{}(i, n) -\delta < r \le \suffixsum(i, n)$.
\item {$\issA{}(i, n,\delta)$:} returns any value $r$ which satisfies $\iss{}(i-\delta, n) < r \le \iss(i, n)$.
\end{itemize}

We propose data structures for supporting approximate \rank{} and \select{} queries on bit-strings efficiently. Our data structures uses less space than that is required to answer the exact queries and most of data structures use optimal space. 
We also propose a data structure for supporting $\ssA{}$ and $\issA{}$ queries on binary streams while supporting updates efficiently.
Finally, we extend some of these results to the case of larger alphabets. 
For all these results, we assume the 
standard word-RAM model~\cite{miltersen-survey} with word size 
$\Theta(\lg{n})$ if it is not explicitly mentioned.

\subsection{Previous work}
\noindent
{\bf Rank and Select over bit-strings. }
Given a bit-string $B$ of size $n$, it is clear that at least $n$ bits are necessary to support
$\rank{}$ and $\select{}$ queries on $B$. Jacobson~\cite{Jacobson:1988:SSD:915547}
proposed a data structure for answering $\rank{}$ queries on $B$ in constant time using $n+o(n)$ bits.
Clark and Munro~\cite{Clark:1996:EST:313852.314087}
extended it to support both $\rank{}$ and $\select{}$ queries in constant time with $n+o(n)$ bits.
For the case when there are $m$ $1$'s in $B$, at least 
$\mathcal{B}(n, m)$ bits\footnote{$\mathcal{B}(n, m) = \lg{\ceil{{n \choose m}}}$ bits
is the information-theoretic lower bound on space
for storing a subset of size $m \le n$ from the universe $\{1,2, \dots n\}$.} are necessary to support $\rank{}$ and $\select$ on $B$. Raman et al.~\cite{DBLP:journals/talg/RamanRS07} proposed a data structure that supports both operations in constant time while using $\mathcal{B}(n, m)+o(n)+O(\lg\lg{m})$ bits. 
Golynski et al.~\cite{DBLP:journals/algorithmica/GolynskiOR014} gave an asymptotically optimal time-space trade-off for supporting $\rank{}$ and $\select{}$ queries on $B$.
A slightly related problem of {\em approximate color counting} has been considered in El-Zein et al.~\cite{DBLP:conf/isaac/El-ZeinMN17}.
\\\\
\noindent
\ifdefined\fullversion
A natural generalization of the static case is answering queries with respect to a \emph{sliding window} over a data stream. The sliding window model was extensively studied for multiple problems including summing~\cite{DBLP:conf/swat/Ben-BasatEFK16,DBLP:journals/siamcomp/DatarGIM02}, heavy hitters~\cite{DBLP:journals/percom/Ben-BasatEF18,DBLP:conf/infocom/Ben-BasatEFK16}, Bloom filters~\cite{DBLP:conf/infocom/AsafBEF18} and counting distinct elements~\cite{fusy2007estimating}.
\fi

{\bf Algorithms that Sum over Sliding Windows. }
Our \suffixsum{} queries for streaming are a generalization of the problem of summing over sliding windows. That is, window summation is a special case of the suffix sum problem where the algorithm is always asked for the sum of the last $i \le n$ elements.
Approximating the sum of the last $n$ elements over a stream of integers in $\{0,1,\ldots,\ell\}$, was first introduced by Datar et al.~\cite{DBLP:journals/siamcomp/DatarGIM02}. 
They proposed a $(1+\eps)$ multiplicative approximation algorithm that uses $O\parentheses{\oneOverE\parentheses{\lg^2 n +\lg\bsrange\cdotpa{\lgw+\lg\lg\bsrange}}}$ bits and operates in amortized time $\Op{{\lg\bsrange}/{\lgw}}$ or $O(\lg (\bsrange\cdot n))$ worst case.
In~\cite{GibbonsT02}, Gibbons and Tirthapura presented a $(1+\eps)$ multiplicative approximation algorithm that operates in constant worst case time while using similar space for $\bsrange=\window^{O(1)}$.
\cite{DBLP:conf/swat/Ben-BasatEFK16} studied the potential memory savings one can get by replacing the $(1+\eps)$ multiplicative guarantee with a $\lnrErrSymbol$ additive approximation. They showed that $\Thetap{{\bsrange\cdot n}/{\lnrErrSymbol}+\lgw}$ bits are required and sufficient. 
Recently, \cite{DBLP:conf/mfcs/Ben-BasatEF18} showed the potential memory saving of a bi-criteria approximation, which allows error in both the sum and the time axis, for sliding window summation. 
\cite{DBLP:journals/corr/abs-1804-10740} looks at a generalization of the ssA queries to general alphabet, where at query time we also receive an element 
$x$ and return an estimate for the frequency of $x$ in the last $i$ elements.

It is worth mentioning that these data structures \emph{do} allow computing the sum of a window whose size is given at the query time. Alas, the query time will be slower as they do not keep aggregates that allow quick computation. 
Specifically, we can compute a $(1+\epsilon)$ multiplicative approximation in $O(\oneOverE\lg (\ell n\eps))$ time using the data structures of \cite{DBLP:journals/siamcomp/DatarGIM02} and \cite{GibbonsT02}. We can also use the data structure of~\cite{DBLP:conf/swat/Ben-BasatEFK16} for an additive approximation of $\lnrErrSymbol$ in $O(n\ell/\delta)$ time.

\begin{table}
\centering
\scalebox{0.9}{
\begin{tabular}{c | c | c | c}
\hline
Query & Space (in bits) & Query time & Error\\
\hline
\multicolumn{4}{c}{Lower bounds}\\
\hline
$\rankA{}_1$ , $\vselectA{}_1$& $\floor{n/\delta}$ &    &\multirow{4}{*}{$\delta$, additive}\\\cline{1-3}
$\rankA{}_1$, $\vselectA{}_1$ &  $\mathcal{B}(\floor{n/\delta},\floor{m/\delta})$ &    &\\\cline{1-3}
$\vrankA{}_1$, $\dselectA{}_1$ & $\floor{n/2\delta}\lg{\delta}$ &   & \\
$\dselectA{}_1$ &  $O((n/\delta)\lg^{O(1)}{\delta})$ & $\Omega(\lg{\lg{n}})$ & \\
\hline
\multicolumn{4}{c}{Upper bounds}\\
\hline
$\rankA{}_1$, $\vselectA_1$ &  $n/\delta+o(n/\delta)$ & \multirow{3}{*}{$O(1)$}   &\multirow{4}{*}{$\delta$, additive}\\\cline{1-2}
$\rankA{}_1$ , $\vselectA_1$&   $\mathcal{B}(n/\delta, m/\delta)+o(n/\delta)$ &  &\\\cline{1-2}
$\vrankA{}_1$ &  $(n/\delta)\lg{\delta}+o((n/\delta)\lg{\delta})$ & & \\\cline{1-3}
$\dselectA{}_1$ &  $(n/\delta)\lg{\delta}+o((n/\delta)\lg{\delta})$ & $t(n/\delta, n)$ & \\
\hline
\end{tabular}}
\caption{Summary of results of upper and lower bounds for approximate $\rank{}$ and $\select{}$ queries on bit-string of size $n$ ($m$ is the number of $1$'s in $B$). The function $t(n,u)$ is defined as $t(n,u) = O(\min\{\lg\lg n \lg{\lg{u}}/\lg{\lg{\lg{u}}}, \sqrt{\lg{n}/\lg{\lg{n}}}\})$.
}
\vspace*{-0.5cm}
\label{tab:summary}
\end{table}

\begin{table*}[]
	\centering
	\scriptsize
	\hspace*{-0.2cm}	
\scalebox{0.9}{	
	\begin{tabular}{|c|c|c|c|c|}
		\hline
		& Guarantee & Space (in bits) & Update Time & Query time \\\hline
		DGIM02~\cite{DBLP:journals/siamcomp/DatarGIM02}  & $(1+\eps)$-multiplicative & $O(\oneOverE\lg(\ell n)\lg (n\lg \ell))$ & $O(\lgp{\ell n})$ & $O(\oneOverE\lgp{\ell n\eps})$\\\hline
		GT02~\cite{GibbonsT02}  & $(1+\eps)$-multiplicative & $O(\oneOverE\lg^2(\ell n))$ & $O(1)$ & $O(\oneOverE\lgp{\ell n\eps})$\\\hline											
		BEFK16~\cite{DBLP:conf/swat/Ben-BasatEFK16}  & $\lnrErrSymbol$-additive, for $\lnrErrSymbol=\Omegap{\ell}$ & $\Thetap{{\bsrange\cdot n}/{\lnrErrSymbol}+\lgw}$ & $O(1)$ & $O({\bsrange\cdot n}/{\lnrErrSymbol})$\\\hline		
		BEFK16~\cite{DBLP:conf/swat/Ben-BasatEFK16}  & $\lnrErrSymbol$-additive, for $\lnrErrSymbol=o\parentheses{\ell}$ & $\Thetap{n\lgp{{\ell}/{\lnrErrSymbol}}}$ & $O(1)$ & $O(n)$\\\hline				
		This paper  & $\lnrErrSymbol$-additive & Same as in~\cite{DBLP:conf/swat/Ben-BasatEFK16} & $O(1)$ & $O(1)$\\\hline						
		
	\end{tabular}}\smallskip
	\normalsize
	\caption{Comparison of data structures for \suffixsum{} queries over stream of integers in $\{0,\dots ,\ell\}$. All works can answer fixed-size window queries (where $i\equiv n$) in $O(1)$ time. Worst case times~are~specified.}
	\label{tbl:comp}
	\vspace{-0.5cm}
\end{table*}

\subsection{Our results}

In this paper, we obtain the following results for the approximate $\rank{}$, $\select{}$, $\suffixsum{}$ and $\iss{}$ queries with additive error. 
Let $B$ be a bit-string of size $n$.

\noindent{\bf 1. $\rank{}$ and $\select{}$ queries with additive error \bm{$\delta$}:}
In this case, we first show that $\floor{n/\delta}$ bits are necessary for answering $\rankA_1{}$ and $\vselectA_1{}$ queries on $B$ and propose a $(\ceil{n/\delta}+o(n/\delta))$-bit data structure that supports $\rankA_1{}$ and $\vselectA_1{}$ queries on $B$ in constant time. 
For the case when there are $m$ $1$'s in $B$, we show that
$\mathcal{B}(\floor{n/\delta},\floor{m/\delta})$ bits are necessary for answering $\rankA_1{}$ and $\vselectA_1{}$ queries on $B$, and obtain $\mathcal{B}(\floor{n/\delta}, \floor{m/\delta})+o(n/\delta)$-bit data structure that supports $\rankA_1{}$ and $\vselectA_1{}$ queries on $B$ in constant time.
For $\vrankA{}_1$ and $\dselectA{}_1$ queries on $B$, we show that $\floor{n/2\delta}\lg{\delta}$ bits are  necessary for answering both queries, and obtain an $(n/\delta)\lg{\delta}+o((n/\delta)\lg{\delta})$-bit data structure that supports $\vrankA{}_1$ queries in $O(1)$ time, and $\dselectA_1{}$ queries in 
$O(\min\{\lg{\lg{(n/\delta)}}\lg{\lg{n}}/\lg{\lg{\lg{n}}}, \sqrt{\lg{(n/\delta)}/\lg{\lg{(n/\delta)}}}\})$ time.
Furthermore, we show that there exists an additive error $\delta$ such that
any $O((n/\delta)\lg^{O(1)}{\delta})$-bit data structure requires at least $\Omega(\lg{\lg{n}})$ time to answer $\dselectA{}_1$ queries on $B$.

Using the above data structures, 
we also obtain data structures for answering approximate $\rank{}$ and $\select{}$ queries on a given multiset $S$ from the universe $U = \{1,2 \dots n\}$ with additive error $\delta$, where $\rank{}(i, S)$ query returns the value $| \{ j \in S | j \le i \} |$,  and $\select{}(i, S)$ query returns the $i$-th smallest element in $S$.
We consider two different cases: (i) \vrankA{}, \rankA{} \vselectA{}, and \dselectA{} queries when $|S| = m$, and (ii) \rankA{} and \vselectA{} queries when the frequency each elements in $S$ is at most $\ell$. 
Furthermore for case (ii), we first show that at least $\floor{n/\ceil{\delta/\ell}}\lg{(\max{(\floor{\ell/\delta},1)}+1)}$ bits are necessary for answering $\rankA{}$ queries, and obtain an optimal space structure that supports $\rankA{}$ queries in constant time, and
an asymptotically optimal space structure that supports both $\rankA$ and $\vselectA$ queries in constant time when $\ell = O(\delta)$. 

We also consider the $\rankA{}$ and $\vselectA{}$ queries on strings over large alphabets.
Given a string $A$ of length $n$ over the alphabet $\Sigma=\{1,2, \dots, \sigma\}$ of size $\sigma$, we obtain 
an $(2n/\delta\lg{(\sigma+1)}+o((n/\delta)\lg{(\sigma+1)})$-bit data structure that supports \rankA{} and \vselectA{} on $A$ in $O(\lg{\lg{\sigma}})$~time.
We summarize our results for bit-strings in Table~\ref{tab:summary}.
\\\\
\noindent{\bf 2. \suffixsum{} and \iss{} queries with additive error \bm{$\delta$}:}
We first consider a data structure for answering $\suffixsum{}$ and $\iss{}$ queries on binary stream, i.e., 
all integers in the stream are $0$ or $1$. 
For exact $\suffixsum{}$ and $\iss{}$ queries on the stream, 
we  propose an $n+o(n)$-bit data structure for answering those queries in constant time 
while supporting constant time updates whenever a new element arrives from the stream.
This data structure is obtained by modifying
 the data structure of Clark and Munro~\cite{Clark:1996:EST:313852.314087} for answering \rank{} and \select{} queries on bit-strings.
Using the above structure, we obtain an $(n/\delta+o(n/\delta)+O(\lg{n}))$-bit structure that supports $\ssA{}$  and $\issA{}$ 
queries on the stream in constant time while supporting constant time updates. 
Since at least $\floor{n/\delta}$ bits are necessary for answering $\rankA_1{}$ (or $\vselectA_1{}$) queries on bit-strings, 
and $\floor{\lg{n}}$ bits are necessary for answering $\suffixsum (n,n)$ queries~\cite{DBLP:conf/swat/Ben-BasatEFK16}, 
the space usage of our data structure is succinct (i.e., optimal upto lower-order terms) when $n/\delta = \omega(\lg{n})$, and asymptotically optimal otherwise.

We then consider the generalization that allows integers in the range $\{0,1,\ldots,\ell\}$, for some $\ell\in\mathbb N$. 
First, we present an algorithm that uses the optimal $n\lgp{\ell+1}(1+o(1))$ bits for exact suffix sums. Then, we provide a second algorithm that uses $\floor{n/\ceil{\delta/\ell}}\lg{(\max{(\floor{\ell/\delta},1)}+1)}(1+o(1)) + O(\lg n)$ bits for solving \ssA{}. Specifically, our data structure is succinct when $n/\delta=\omega(\lg n/\ell)$, and is asymptotically optimal otherwise, and improves the query time of \cite{DBLP:conf/swat/Ben-BasatEFK16} while using the same space.
Table~\ref{tbl:comp} presents this comparison.

\section{Queries on bit-strings}
\label{sec:additive}
In this section, we first consider the data structures for answering approximate \rank{} and \select{} queries on bit-strings and multisets. 
We also show how to extend our data structures on static bit-strings to the sliding windows on binary streams, for answering 
approximate \suffixsum{} and \iss{} queries.

\subsection{Approximate \rank{} and \select{} queries on bit-strings} 
We now consider the approximate \rank{} and \select{} queries on bit-strings with additive error $\delta$. 
We only show how to support $\vrankA_1$, $\rankA_1$, $\dselectA_1$, and $\vselectA_1$ queries.
To support $\vrankA_0$, $\rankA_0$, $\dselectA_0$, and $\vselectA_0$ queries, one can construct the same data structures
on the bit-wise complement of the original bit-string. 
We first introduce a few previous results which will be used in our structures.
The following lemmas describe the optimal structures for supporting \rank{} and \select{} queries on bit-strings.

\begin{lemma}[\cite{Clark:1996:EST:313852.314087}]
\label{lem:clark}	
For a bit-string $B$ of length $n$, there is a data structure of size $n + o(n)$ bits that supports $\rank_0{}$, $\rank_1{}$, $\select_0{}$, and $\select_1{}$ queries in $O(1)$ time.
\end{lemma}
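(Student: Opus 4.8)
The plan is to assemble the three classical ingredients --- a two-level rank directory, a universal popcount/select lookup table, and Clark and Munro's recursive select directory --- so that the only $\Theta(n)$-bit object stored is the string $B$ itself, with every auxiliary structure occupying $o(n)$ bits and every query touching a constant number of cells.

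First I would handle $\rank_1$ (with $\rank_0(i,B)=i-\rank_1(i,B)$). Cut $B$ into superblocks of $\log^2 n$ bits and each superblock into blocks of $\tfrac{1}{2}\log n$ bits; store for each superblock boundary the number of $1$s before it (there are $O(n/\log^2 n)$ such values of $O(\log n)$ bits, i.e.\ $O(n/\log n)$ bits), and for each block boundary the number of $1$s since the start of its superblock ($O(n/\log n)$ values of $O(\log\log n)$ bits). A universal table indexed by a $\tfrac{1}{2}\log n$-bit word and a position, returning the popcount of the prefix, has $O(\sqrt{n}\log n\log\log n)=o(n)$ bits. A query is one superblock lookup, one block lookup, and one table lookup, so $O(1)$ time.

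The main obstacle is $\select$, because the $1$s may be arbitrarily clustered, so a fixed-granularity directory cannot simultaneously be small and answer in $O(1)$ time; Clark and Munro's fix is a two-round recursion that treats sparse and dense ranges separately. For $\select_1$ I would group the $1$s into buckets of $b:=\log^2 n$ consecutive $1$s and record the position of each bucket's first $1$. Call a bucket \emph{long} if it spans more than $b^2=\log^4 n$ positions; there are $O(n/\log^4 n)$ of these, so I can afford to store all $b$ of their $1$-positions explicitly, at cost $O\big((n/\log^4 n)\cdot\log^2 n\cdot\log n\big)=o(n)$ bits. Inside a short bucket (span $\le\log^4 n$) I recurse: split its $b$ ones into sub-buckets of $b':=(\log\log n)^2$ ones, storing each sub-bucket's first position relative to the bucket start in $O(\log\log n)$ bits (this is $o(n)$ bits in total, since there are $\le (n/\log^2 n)\cdot\big(\log^2 n/(\log\log n)^2\big)$ sub-buckets). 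A sub-bucket that spans more than $(b')^2=(\log\log n)^4$ positions is again handled verbatim --- there are $O\big(n/(\log\log n)^4\big)$ of them globally, costing $O(n/\log\log n)=o(n)$ bits --- while a sub-bucket spanning at most $(\log\log n)^4=o(\log n)$ positions fits inside a $\tfrac{1}{2}\log n$-bit window of $B$, so a single universal table mapping a word and an index to the position of that $1$, of $O(\sqrt{n}\log n\log\log n)=o(n)$ bits, answers the query in $O(1)$ time. Running the same construction on the bit-wise complement of $B$ gives $\select_0$.

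Combining these, the total space is $n+o(n)$ bits and each of $\rank_0,\rank_1,\select_0,\select_1$ takes $O(1)$ time, which is the claim. The only delicate point is the choice of the thresholds $\log^2 n$, $\log^4 n$, $(\log\log n)^2$, $(\log\log n)^4$: I would check once that at each level the long ranges are few enough for their answers to be written out in $o(n)$ bits, and that dense ranges shrink fast enough that two rounds of recursion reach window size $\le\tfrac{1}{2}\log n$, and otherwise cite Clark and Munro~\cite{Clark:1996:EST:313852.314087} for the remaining routine details.
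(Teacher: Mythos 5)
The paper does not prove this lemma; it is quoted directly as a result of Clark and Munro~\cite{Clark:1996:EST:313852.314087}, so there is no internal proof to compare against. Your sketch is a correct reconstruction of that cited construction: the two-level rank directory with $\Theta(\log^2 n)$-bit superblocks and $\Theta(\log n)$-bit blocks plus a universal popcount table is the standard Jacobson scheme, and the bucketed select structure with the dense/sparse dichotomy pushed through two levels of recursion until windows fit in $\tfrac{1}{2}\log n$ bits is precisely the Clark--Munro select directory. The bookkeeping checks out (long buckets/sub-buckets are rare enough that writing their answers explicitly stays within $o(n)$ bits, and short sub-buckets span $o(\log n)$ positions so a single table lookup suffices), and running the same machinery on the complement of $B$ handles the $0$-queries, so $n+o(n)$ bits and $O(1)$ time follow. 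In short: correct, and it is essentially the proof that the paper delegates to the reference.
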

   
\begin{lemma}[\cite{DBLP:journals/talg/RamanRS07}]
\label{lem:RRR}	  
For bit-string $B$ of length $n$ with $m$ 1's, there is a data structure of size 
\begin{itemize}
\item{(a)} $\mathcal{B}(n,m) + o(m)$ bits that supports $\select_1{}$ query in $O(1)$ time, and
\item{(b)} $\mathcal{B}(n,m) + o(n)$ bits that supports $\rank_0{}$, $\rank_1{}$, $\select_0{}$, and $\select_1{}$ queries in $O(1)$~time.
\end{itemize}
\end{lemma}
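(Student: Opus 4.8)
The plan is to reconstruct the "class/offset" (four‑Russians) encoding of Raman, Raman and Satti. First I would cut $B$ into consecutive blocks of $u=\lceil (\lg n)/2\rceil$ bits, and replace block $B_i$ by a pair $(c_i,o_i)$: the \emph{class} $c_i=\rank_1(u,B_i)\in\{0,\dots,u\}$, stored in a fixed‑width field of $\lceil\lg(u+1)\rceil$ bits, and the \emph{offset} $o_i\in\{0,\dots,\binom{u}{c_i}-1\}$, the rank of $B_i$ among the $\binom{u}{c_i}$ blocks of its class in colex order, stored in a variable‑width field of $\lceil\lg\binom{u}{c_i}\rceil$ bits. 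Since $\sum_i c_i=m$ and $\sum_i u=n$, Vandermonde's identity gives $\prod_i\binom{u}{c_i}\le\binom{n}{m}$, so the offsets occupy $\sum_i\lceil\lg\binom{u}{c_i}\rceil\le\mathcal{B}(n,m)+O(n/u)=\mathcal{B}(n,m)+o(n)$ bits, while the class array takes $(n/u)\lceil\lg(u+1)\rceil=O(n\lg\lg n/\lg n)=o(n)$ bits. This already gives the $\mathcal{B}(n,m)+o(n)$ space bound of part~(b); part~(a) needs a sharper, $1$‑centric variant discussed below.

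Next I would add the $o(n)$‑bit navigation machinery needed for constant time. (i) A \emph{universal lookup table} mapping $(c,o,j)$ to the $j$‑th bit and to the rank within the first $j$ bits of the block of class $c$ and offset $o$; it has $2^{O(u)}\poly(u)=O(\sqrt n\,\poly\lg n)=o(n)$ entries, and also lets us recover $o_i$ from a raw block and vice versa. (ii) A two‑level prefix‑sum index over the class array: group $\Theta(\lg n)$ blocks into a \emph{superblock}, store at each superblock boundary the absolute cumulative $\rank_1$ and the absolute starting bit‑position of its offsets inside the variable‑width array ($o(n)$ bits total), and within each superblock store the same two quantities \emph{relative} to the superblock start ($O((n/u)\lg(u\lg n))=o(n)$ bits). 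Then $\rank_1(i)$ is answered by locating the block containing position $i$, reading the stored cumulative rank at its superblock plus its within‑superblock offset, and adding the partial‑block rank from the table; $\rank_0(i)=i-\rank_1(i)$.

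For $\select_1$ (and symmetrically $\select_0$) in constant time I would use the standard three‑level "sparse/dense" decomposition, as in Lemma~\ref{lem:clark}: sample every $\lfloor\lg^2 n\rfloor$‑th $1$, which splits the ones into groups; a group whose $1$'s span $\ge\lg^4 n$ positions is \emph{sparse}, and there are $O(n/\lg^4 n)$ of them, so we may store the $\lg^2 n$ exact answers of each sparse group explicitly in $o(n)$ bits; a \emph{dense} group spans $<\lg^4 n$ positions, so we recurse on that shrunken range with an analogous two‑level index, the recursion bottoming out at the lookup table after $O(1)$ levels. Combining, every query touches $O(1)$ memory words, which together with the previous paragraph yields the $\rank_0/\rank_1/\select_0/\select_1$ bundle of part~(b).

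Finally, part~(a) asks the redundancy to be $o(m)$ rather than $o(n)$, which matters only when $m=o(n/\poly\lg n)$ and forbids even the $o(n)$‑bit table and prefix‑sum arrays used above. Here I would not block the universe but block the \emph{ones}: group every $b=\Theta(\lg m)$ consecutive $1$'s, record the span each group occupies, and inside a span encode the subset by a class/offset pair as before; the binomials are now taken over the spans, whose lengths sum to at most $n$, so their product is again at most $\binom{n}{m}$. The delicate point — and the step I expect to be the main obstacle — is charging the encoding of the span lengths and of the coarser navigation structure to $o(m)$ bits rather than $o(n)$; this needs a careful, somewhat recursive layered encoding, and is precisely where the argument must depart from the textbook construction. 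The remaining details are as in~\cite{DBLP:journals/talg/RamanRS07}.
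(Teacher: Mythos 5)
The paper does not prove this lemma: it is stated purely as a citation to Raman, Raman and Satti, so there is no internal proof to compare against.

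Your reconstruction of part~(b) is the standard RRR ``fully indexable dictionary'' construction and is essentially correct: block size $\Theta(\lg n)$, class/offset encoding with the Vandermonde bound, a $2^{O(\lg n /2)}$‑entry decoding table, and a two‑level absolute/relative prefix‑sum index; the select machinery you sketch (sparse groups stored explicitly, dense groups handled by a bounded‑depth recursion ending at the table) is also the standard route and lands at $\mathcal{B}(n,m)+o(n)$ bits with $O(1)$ rank/select for both symbols.

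Part~(a), however, has a genuine gap that you yourself flag. The difficulty is not merely ``charging the span lengths to $o(m)$'': the universe‑blocking scheme of (b) fundamentally cannot be adapted by ``blocking the ones,'' because after fixing groups of $b$ consecutive ones you still need a way to locate the boundaries of each group's span (which live in a universe of size $n$, not $m$) and to decode a variable‑length subset record without any $\Theta(n/\lg n)$‑sized directory. The actual RRR indexable‑dictionary construction does something qualitatively different: it partitions the universe into $\Theta(m)$ fixed ranges, entropy‑codes the occupancy vector of those ranges (a multinomial, not a binomial, bound), stores each range's contents with a self‑delimiting code, and uses a second, recursively smaller dictionary (plus a perfect hash / quotienting step) so that the total redundancy is charged against $m$ rather than $n$. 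None of that is derivable from the four‑Russians blocking you set up for part~(b), and writing ``the remaining details are as in RRR'' leaves the central argument unproved. If you want a self‑contained proof of (a), you need to start from the $m$‑range bucketing and the multinomial entropy bound, not from the class/offset block decomposition.
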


We use results from~\cite{DBLP:journals/tcs/HonSS11} and~\cite{DBLP:conf/wads/RamanRR01}, which describe efficient data structures for supporting the following queries on integer arrays. For a standard word-RAM model with word size $O(\lg{U})$ bits, let $A$ be an array of $n$ non-negative integers. For $1 \le i \le n$ and any non-negative integer $x$, (i) $\summ{}(i)$ returns the value $\sum_{j = 1}^{i} A[j]$, and (ii) $\search{}(x)$ returns the smallest $i$ such that $\summ{}(i) > x$.
We use the following function to state the running time of some of the ({\em Searchable Partial Sum}) queries in the lemma below, and in the rest of the paper.

$$SPS(n,U)  = \left\{\begin{array}{ll}
\textrm{$O(1)$} & \textrm{\hspace{0.3cm}if $n=polylog(U)$}\\
O(\min{ \{ \lg\lg n \lg\lg U / \lg\lg\lg U, \sqrt{ \lg n/ \lg\lg n } \} } ) & \textrm{\hspace{0.3cm}otherwise}\\
\end{array} \right.$$

\begin{lemma}[\cite{DBLP:journals/tcs/HonSS11}, \cite{DBLP:conf/wads/RamanRR01}]
\label{lem:prefix}
An array of $n$ non-negative integers, each of length at most $\alpha$ bits, can be stored using $\alpha n+o(\alpha n)$ bits, 
to support $\summ{}$ queries on $A$ in constant time, and $\search{}$ queries on $A$ in $SPS(n,n2^{\alpha})$ time. 
Moreover, when $\alpha = O(\lg\lg{n})$, we can answer both queries in $O(1)$ time. 
\end{lemma}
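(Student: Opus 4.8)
The plan is to keep $A$ in a packed layout of exactly $\alpha n$ bits --- the $\alpha$-bit binary encodings of $A[1],\dots,A[n]$ concatenated --- and to add on top of it a two-level index of precomputed prefix sums whose total size is $o(\alpha n)$ bits. Cut $[1,n]$ into \emph{blocks} of $b=\Theta\parentheses{\lg n/\alpha}$ consecutive entries, so that the bits of one block always fit in $O(1)$ machine words (if $\alpha>\lg n$ take $b=1$), and group blocks into \emph{superblocks} of $s=b\cdot\mathrm{polylog}(n)$ entries. For every superblock store its exact prefix sum $\summ{}(js)$ (an $O(\lg n+\alpha)$-bit number, $n/s$ of them), and for every block store its prefix sum \emph{relative to the start of its superblock} (an $O(\lg s+\alpha)$-bit number, $n/b$ of them). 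The polylogarithmic gap between the two levels makes both totals $o(\alpha n)$; the only delicate regime is $\alpha=\omega(\lg n)$, where $b=1$ forces one stored offset per entry and one must check that the coarse superblock sums together with the narrow within-superblock offsets still keep the redundancy sublinear in $\alpha n$.

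With this index in place, $\summ{}(i)$ is computed in $O(1)$ time as the sum of three quantities: the stored superblock prefix sum, the stored block offset, and the sum of the $\le b$ packed entries of $A$ preceding position $i$ inside its block. The last quantity lives in $O(1)$ words of the packed array and is obtained either from one universal lookup table --- legitimate because a block plus an intra-block index occupies $O(\lg n)$ bits, so the table has $n^{O(1)}$ entries and, with the constants chosen so the exponent is below $1$, fits in $o(n)$ bits --- or by the standard multiply-by-a-spacing-constant summation of packed fields. For $\search{}(x)$ one first \emph{locates} the block that contains the answer (a search over the monotone sequence of superblock sums, followed by a search over the within-superblock block offsets) and then scans the $\le b$ packed entries of that block, the latter again in $O(1)$ by word tricks or tabulation. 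The block-location step is precisely a \emph{static searchable partial sums} query on an auxiliary array of at most $n$ integers from a universe of size $n 2^\alpha$, and feeding this instance into the word-RAM structures of~\cite{DBLP:conf/wads/RamanRR01} and~\cite{DBLP:journals/tcs/HonSS11} (fusion-tree / van~Emde~Boas type predecessor search on the cumulative block sums, all within the $o(\alpha n)$-bit budget) gives exactly the claimed $SPS(n,n2^\alpha)$ running time.

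Finally, when $\alpha=O(\lg\lg n)$ every entry --- and, after the coarse/fine split, every stored within-superblock offset --- is of polylogarithmic magnitude, so a whole block of $\Theta\parentheses{\lg n/\lg\lg n}$ entries packs into $O(1)$ words and a single $o(n)$-bit universal table resolves both the intra-block sum and the intra-block search, while the remaining inter-block step now lives in a polylogarithmic universe and is handled in $O(1)$ by the small-integer searchable-partial-sums constructions of~\cite{DBLP:journals/tcs/HonSS11,DBLP:conf/wads/RamanRR01}; this is the ``moreover'' clause. The step I expect to be the actual work is not the bookkeeping above but this last import: the nontrivial word-RAM searchable-partial-sums / predecessor machinery behind the $SPS$ bound and its $O(1)$ collapse for polylog-sized values --- which is exactly what the two cited papers provide, so I would invoke them as black boxes rather than reprove them.
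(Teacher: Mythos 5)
This lemma is invoked by the paper as a black-box import from the cited works (Hon--Sadakane--Sung and Raman--Raman--Rao); there is no in-paper proof to compare against, so what you have written is a reconstruction of the outline behind those constructions. Your skeleton --- a packed $\alpha n$-bit array, an $o(\alpha n)$-bit superblock/block index, $O(1)$ intra-block sums via $o(n)$-bit tabulation or word-parallel field summation, and a searchable-partial-sums/predecessor structure over the cumulative block sums supplying the $SPS(n,n2^\alpha)$ term --- is the right one, and deferring that last term to the citations is exactly how the paper itself treats the lemma. The $\alpha=O(\lg\lg n)$ collapse to $O(1)$ via small-universe tabulation is also as intended.

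The one place you hedge deserves a real case split rather than a remark. When $\alpha=\omega(\lg n)$ you cannot take $b=1$ and store a within-superblock offset per entry: each such offset is $\Theta(\lg s + \alpha)=\Theta(\alpha)$ bits, so the index alone costs $\Theta(\alpha n)$ bits, not $o(\alpha n)$. The standard fix abandons the two-level index in that regime and simply stores the $n$ exact prefix sums, each in $\lg n + \alpha$ bits: the total is $n(\lg n+\alpha)=\alpha n\bigl(1+\lg n/\alpha\bigr)=\alpha n+o(\alpha n)$ precisely because $\alpha=\omega(\lg n)$. Then $\summ{}$ is a direct array read and $\search{}$ is a predecessor query on a monotone length-$n$ sequence over the universe $[n2^\alpha]$, again within $SPS(n,n2^\alpha)$. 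With that split made explicit (two-level index for $\alpha=O(\lg n)$, explicit prefix sums for $\alpha=\omega(\lg n)$), your argument matches the content of the cited results; the genuinely non-elementary piece, the word-RAM predecessor machinery underlying $SPS$, is correctly left as an import rather than reproved.
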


\noindent\textbf{Supporting \rankA{} and \vselectA{} queries. }
We first consider the problem of supporting $\rankA_1{}$ or $\vselectA_1$ 
queries with additive error $\delta$ on a bit-string $B$ of length $n$. 
We first prove a lower bound on space used by any data structure
that supports either of these two queries.

\begin{theorem}
\label{thm:lb1}
Any data structure that supports $\rankA{}_1$ or $\vselectA_1$ queries with 
additive error $\delta$ on a bit-string of length $n$ requires at least 
$\floor{n/\delta}$ bits. Also if the bit-string has $m$ 1's in it, then
at least $\mathcal{B}(\floor{n/\delta},\floor{m/\delta})$ bits are necessary for 
answering the above queries.
\end{theorem}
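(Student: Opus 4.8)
The plan is to prove both lower bounds by an encoding/information-theoretic argument: I will show that any data structure supporting $\vrankA_1$ (or $\vselectA_1$) queries can be used to reconstruct an arbitrary bit-string of length $\floor{n/\delta}$ (respectively, with $\floor{m/\delta}$ ones), so its size is at least the number of bits needed to name such a string. Concretely, given an arbitrary target string $C[1\dots N]$ with $N=\floor{n/\delta}$, I would build $B\in\{0,1\}^n$ by placing, for each $j\in\{1,\dots,N\}$, the bit $C[j]$ into position $j\delta$ of $B$ (all other positions set to $0$). Then $\rank_1(j\delta,B)=\sum_{t\le j}C[t]$, and crucially $\rank_1(j\delta,B)-\rank_1(j\delta-\delta,B)=C[j]\in\{0,1\}$.

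The key observation is that the additive-$\delta$ slack in $\vrankA_1$ is exactly the gap between consecutive sampled positions, so it pins down each $C[j]$. Querying $r_j := \vrankA_1(j\delta, B, \delta)$, the definition forces $\rank_1(j\delta-\delta,B) < r_j \le \rank_1(j\delta,B)$ (and equality $r_j=\rank_1(j\delta,B)$ when the two ranks coincide, i.e.\ when $C[j]=0$). In either case $r_j - r_{j-1}$ — or more simply $r_j$ compared against $r_{j-1}$, after checking the boundary convention — recovers whether $\rank_1$ increased at step $j$, hence recovers $C[j]$. Since $C$ was arbitrary over $2^N$ possibilities, the data structure together with the (fixed, string-independent) query algorithm must be able to take $2^N$ distinct values, giving the $\floor{n/\delta}$ bound. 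For $\vselectA_1$ I would run the dual reduction: with the same construction, $\select_1(i,B)$ lands at one of the positions $j\delta$, and $\vselectA_1(i,B,\delta)$ is confined to the interval $(\select_1(i-\delta,B),\select_1(i,B)]$ of width a multiple of $\delta$; sweeping $i$ over multiples of $\delta$ and reading off which block each answer falls in reconstructs the positions of the ones, hence $C$ again. For the second (sparse) bound, restrict the target strings to those with exactly $\floor{m/\delta}$ ones among the $N=\floor{n/\delta}$ sampled positions; the resulting $B$ has at most $m$ ones (at most $\delta\cdot\floor{m/\delta}\le m$), the same recovery argument applies, and there are $\binom{\floor{n/\delta}}{\floor{m/\delta}}$ such strings, yielding the $\mathcal{B}(\floor{n/\delta},\floor{m/\delta})$ bound after taking $\lg\lceil\cdot\rceil$.

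The main obstacle — really a bookkeeping point rather than a deep one — is handling the boundary conventions in the definitions of the approximate queries cleanly: the special case "$\rank_1(i-\delta)=\rank_1(i)\Rightarrow \vrankA_1=\rank_1(i)$" must be used to distinguish $C[j]=0$ from $C[j]=1$ unambiguously, and for $\vselectA_1$ one must be careful when $i-\delta\le 0$ or when some sampled positions hold zeros (so $\select_1(i,B)$ is well-defined only for $i$ up to the total number of ones). I would address this by querying at the "safe" indices (multiples of $\delta$ for rank; and for select, $i$ ranging over $\{\delta,2\delta,\dots\}$ up to $\delta\cdot(\text{number of ones})$, which the decoder learns incrementally), and by noting that from the pair $(r_{j-1},r_j)$ — or from comparing with the running count — the single bit $C[j]$ is always determined. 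A secondary subtlety is ensuring the argument is against the data structure's \emph{size} and not merely its entropy: since the decoding procedure is a fixed algorithm independent of $C$, an injective map from $\{0,1\}^N$ (resp.\ from the $\binom{N}{\lfloor m/\delta\rfloor}$ strings) into the set of possible encodings is exhibited, which is exactly what a worst-case space lower bound requires.
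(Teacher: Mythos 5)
Your argument targets the wrong query, and the construction you choose cannot be rescued by simply renaming. The theorem asserts a lower bound for $\rankA_1$, whose answer must satisfy $\rank_1(i,B)-\delta < r \le \rank_1(i,B)$ --- an error of up to $\delta$ in the \emph{rank value}. Your reduction instead invokes $\vrankA_1$, whose answer must lie in $\bigl(\rank_1(i-\delta,B),\,\rank_1(i,B)\bigr]$ --- an error of up to $\delta$ in the \emph{index}; these are the queries the paper treats separately in Theorem~\ref{thm:ldselect}, not in Theorem~\ref{thm:lb1}. With your encoding (single bit $C[j]$ at position $j\delta$, zeros elsewhere) the distinction is fatal. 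The $\vrankA_1$ window has width $\rank_1(j\delta,B)-\rank_1((j-1)\delta,B)=C[j]\le 1$, which indeed pins the answer to $\rank_1(j\delta,B)$; but the $\rankA_1$ window is always $\delta$ wide in the rank axis, while flipping $C[j]$ shifts $\rank_1(j\delta,B)$ by only $1$. For $\delta\ge 2$ the admissible $\rankA_1$ answer sets for $C[j]=0$ and $C[j]=1$ overlap in $\delta-1\ge 1$ values, so a single encoding can answer all $\rankA_1$ queries validly for both strings and your decoder cannot recover $C[j]$. The analogous problem afflicts your $\vselectA_1$ argument: with one sampled bit per block, consecutive ones in $B$ can be arbitrarily far apart, so the admissible set $\bigl(\select_1(m-\delta,B),\,\select_1(m,B)\bigr]$ is not controlled and, again, two target strings can share every valid answer.

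The paper resolves this by taking $S$ to be the strings in which each length-$\delta$ block is \emph{entirely} zeros or \emph{entirely} ones. At the first differing block $i$ the two values of $\rank_1(i\delta,\cdot)$ then differ by exactly $\delta$, which places the two $\rankA_1(i\delta,\cdot,\delta)$ windows $(k,k+\delta]$ and $(k-\delta,k]$ edge-to-edge and disjoint, and similarly separates the $\vselectA_1$ windows at the appropriate query index. If you replace your ``one sampled bit per $\delta$-block'' encoding with this ``all-or-nothing block'' encoding, your injective-encoding framing and the counting in the $m$-ones case both carry over and the proof closes; as currently written, the reduction proves a bound for index-error queries $\vrankA_1$/$\dselectA_1$ rather than for the $\rankA_1$/$\vselectA_1$ queries that Theorem~\ref{thm:lb1} asserts.
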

\begin{proof}
Consider a bit-string $B$ of length $n$ divided into $\floor{n/\delta}$ blocks $B_1$, $B_2$, \dots $B_{\floor{n/\delta}}$ such that
for $1 \le i < \floor{n/\delta}$, $B_i = B[\delta(i-1)+1 \dots \delta i]$ and 
$B_{\floor{n/\delta}} = B[\delta(\floor{n/\delta}-1)+1 \dots n]$ (the last block may contain more than $\delta$ bits). 
Let $S$ be the set of all possible bit-strings satisfying the condition that all the bits within a block are the same (i.e., either all zeros or all ones).
Then it is easy to see that $|S| = 2^{\floor{n/\delta}}$.
We now show that any two distinct bit-strings in $S$ will have different 
answers for some $\rankA_1{}$ query (and also some $\vselectA_1{}$ query).
Consider two distinct bit-strings $B$ and $B'$ in $S$, and let $i$ be the index 
of the leftmost block such that $B_{i} \neq B'_{i}$. 
Then it is easy to show that there is no value which is the answer of both $\rankA_1{}(i \delta, B, \delta) $ and $\rankA_1{}(i \delta, B', \delta)$ queries
and also there is no position of $B$ which is the answer of both $\vselectA_1{}(j, B, \delta)$ and $\vselectA_1{}(j, B', \delta)$ queries, where $j$ is the number of 1's in $B[1 \dots i \delta]$. 
Thus any structure that supports either of these queries must distinguish between 
every element in $S$, and hence $\floor{n/\delta}$ bits are necessary to answer $\rankA_1$ or $\vselectA_1$ queries.

For the case when the number of $1$'s in the bit-string is fixed to be $m$, 
we choose $\floor{m/\delta}$ blocks from each bit-string and make all bits in 
the chosen blocks to be $1$'s (and the rest of the bits as $0$'s). 
Since there are $\floor{n/\delta} \choose \floor{m/\delta}$ ways for select 
such $\floor{m/\delta}$ blocks in a bit-string of length $n$, it implies that $\mathcal{B}(\floor{n/\delta},\floor{m/\delta})$ 
bits are necessary to answer $\rankA_1$ and $\vselectA_1$ queries in this case.
\end{proof}

Now we describe a data structure for supporting $\rankA_1$ and $\vselectA_1$ queries in constant time, using optimal space.

\begin{theorem}\label{thm:ub1}
For a bit-string $B$ of length $n$, there is a data structure that uses $n/\delta+o(n/\delta)$ bits and supports $\rankA{}_1$ and $\vselectA_1$ queries with additive error $\delta$, in constant time. If there are $m$ 1's in $B$, the data structure uses $\mathcal{B}(n/\delta, m/\delta)+o(n/\delta)$ bits and supports the queries in $O(1)$ time.
\end{theorem}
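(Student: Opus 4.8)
The plan is to coalesce blocks of $\delta$ consecutive positions of $B$ into a single bit-string $B'$ of length about $n/\delta$, equip it with a standard constant-time $\rank_1/\select_1$ index, and answer each query by a closed-form expression in $i$, a constant number of $\rank_1/\select_1$ calls on $B'$, and a correction term that depends only on $i\bmod\delta$. Concretely, first pad $B$ with $0$'s up to length $\delta N$ with $N=\ceil{n/\delta}$ (this changes neither $\rank_1(i,B)$ for $i\le n$ nor $\select_1(i,B)$ for $i\le m$); let $c_j$ be the number of $1$'s in the $j$-th block $B[\delta(j-1)+1\dots\delta j]$ and $R_j=c_1+\dots+c_j$, so $R_0=0$, $R_N=m$ and $0\le c_j\le\delta$. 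Define $B'$ of length $N$ by $B'[j]=\floor{R_j/\delta}-\floor{R_{j-1}/\delta}$. Since $c_j\le\delta$, the interval $(R_{j-1},R_j]$ contains at most one multiple of $\delta$, so $B'[j]\in\{0,1\}$ and $B'$ is a bit-string; moreover $\sum_j B'[j]=\floor{m/\delta}$, $\rank_1(j,B')=\floor{R_j/\delta}$, and $\select_1(g,B')=\min\{j:R_j\ge g\delta\}$. Storing $B'$ via Lemma~\ref{lem:clark} uses $N+o(N)=n/\delta+o(n/\delta)$ bits; when $B$ has exactly $m$ ones, $B'$ has $\floor{m/\delta}$ ones and can be stored via Lemma~\ref{lem:RRR}(b) in $\mathcal B(N,\floor{m/\delta})+o(N)=\mathcal B(n/\delta,m/\delta)+o(n/\delta)$ bits; either way $\rank_1$ and $\select_1$ on $B'$ cost $O(1)$.

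For $\rankA_1(i,B,\delta)$ I would set $j=\ceil{i/\delta}$, $t=i-\delta(j-1)\in\{1,\dots,\delta\}$, $u=\rank_1(j-1,B')$ and $v=\rank_1(j,B')\in\{u,u+1\}$, and return $\delta u+t(v-u)$. Correctness is a short interval argument: we always have $R_{j-1}\le\rank_1(i,B)\le R_j$, and counting the $1$'s among the first $t$ and the last $\delta-t$ positions of block $j$ also $\rank_1(i,B)\le R_{j-1}+t$ and $\rank_1(i,B)\ge R_j-(\delta-t)$. If $v=u$ then $R_{j-1},R_j\in[\delta u,\delta u+\delta-1]$, so $\rank_1(i,B)\in[\delta u,\delta u+\delta-1]$ and the returned value $\delta u$ lies in $(\rank_1(i,B)-\delta,\rank_1(i,B)]$; if $v=u+1$ then $R_{j-1}\in[\delta u,\delta u+\delta-1]$ and $R_j\in[\delta u+\delta,\delta u+2\delta-1]$, so $\delta u+t\le\rank_1(i,B)\le\delta u+\delta-1+t$ and the returned value $\delta u+t$ again lies in $(\rank_1(i,B)-\delta,\rank_1(i,B)]$. (When $i\le\delta$ we have $j=1$, $u=0$; the last, possibly short, original block is absorbed by the padding.)

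For $\vselectA_1(i,B,\delta)$, handle $i<\delta$ separately by returning $p=1$ (valid, since then $\rank_1(1,B)\le 1\le i$ and $\select_1(i-\delta,B)=0$); otherwise set $g=\floor{i/\delta}\ge 1$, $\tau=i-\delta g\in\{0,\dots,\delta-1\}$ and $j^{*}=\select_1(g,B')$ (well-defined because $g\le\floor{m/\delta}$ when $i\le m$), and return $p=\delta(j^{*}-1)+\tau+1$. From $j^{*}=\select_1(g,B')$ we get $R_{j^{*}}\ge g\delta$ and $R_{j^{*}-1}\le g\delta-1$, hence (using $c_{j^{*}}\le\delta$) also $R_{j^{*}-1}\ge(g-1)\delta$ and $R_{j^{*}}\le g\delta+\delta-1$. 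Since $p$ is the $(\tau+1)$-st position of block $j^{*}$, counting $1$'s in its first $\tau+1$ and last $\delta-\tau-1$ positions gives $\rank_1(p,B)\le R_{j^{*}-1}+\tau+1\le g\delta+\tau=i$ and $\rank_1(p,B)\ge R_{j^{*}}-(\delta-\tau-1)\ge(g-1)\delta+\tau+1=i-\delta+1$, i.e. $\rank_1(p,B)\in(i-\delta,i]$. Translating back, $\rank_1(p,B)\ge i-\delta+1$ forces $\select_1(i-\delta,B)<\select_1(i-\delta+1,B)\le p$ (with $\select_1(0,B):=0$), and $\rank_1(p,B)\le i$ forces $p\le\select_1(i,B)$ -- in the tight case $\rank_1(p,B)=i$ the chain above becomes equalities, pinning $R_{j^{*}-1}=g\delta-1$ and forcing the first $\tau+1$ positions of block $j^{*}$ to be all $1$'s, so $p$ equals $\select_1(i,B)$ exactly. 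Both queries clearly run in $O(1)$ time.

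The step I expect to be the main obstacle is the design of these query rules: $B'$ determines each block-boundary rank $R_j$ only up to an additive $\delta$, so no answer pinned to a block boundary can be correct for every offset, and one must feed $i\bmod\delta$ back into the answer (the terms $t(v-u)$ and $\tau+1$) so that the two-sided estimates collapse to a window of width exactly $\delta$. The remaining points -- $i\le\delta$, $i$ lying in the last (short) original block, and the boundary case $i=m$ -- are routine once the zero-padding convention is fixed.
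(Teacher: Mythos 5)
Your construction is the paper's: the coarse bit-string $B'$ you define via $B'[j]=\lfloor R_j/\delta\rfloor-\lfloor R_{j-1}/\delta\rfloor$ is exactly the paper's ``$B'[j]=1$ iff block $j$ of $B$ contains a $k\delta$-th one,'' you store it with the same Lemma~\ref{lem:clark} / Lemma~\ref{lem:RRR}(b) structures, and your $\rankA_1$ answer $\delta u + t(v-u)$ is the paper's $\delta\cdot\rank_1(\lfloor i/\delta\rfloor,B')+(i\bmod\delta)\,B'[\lceil i/\delta\rceil]$ in different notation, proved by the same two-sided interval bound on $\rank_1(i,B)$. The one place you differ is the $\vselectA_1$ formula: your $p=\delta(j^{*}-1)+\tau+1$ carries an extra $+1$ relative to the paper's $\delta(\select_1(\lfloor i/\delta\rfloor,B')-1)+(i\bmod\delta)$, and that $+1$ is actually needed (for $B=1101$, $\delta=2$, $i=2$ the paper's expression evaluates to $0$, not a valid position); so your proposal silently corrects a small slip in the paper's stated formula while otherwise following the same approach.
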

\begin{proof}
We divide the $B$ into $\ceil{n/\delta}$ blocks $B_1$, $B_2$, \dots $B_{\ceil{n/\delta}}$ such that
for $1 \le i < \ceil{n/\delta}$, $B_i = B[\delta(i-1)+1 \dots \delta i]$ and 
$B_{\ceil{n/\delta}} = B[\delta(\ceil{n/\delta}-1)+1 \dots n]$. 
Now we define a new bit-string $B'$ of length $\ceil{n/\delta}$ such that for $1 \le i \le \ceil{n/\delta}$, 
$B'[i] = 1$ if $B_i$ contains $j\delta$-th 1 in $B$ for any $j \le i$, and otherwise $B'[i]=0$ 
(note that for any $1 \le j \le \ceil{n/\delta}$, any block of $B$ has at most one position of $j\delta$-th $1$ in $B$).
By Lemma~\ref{lem:clark}, we can support $\rank_1$ and $\select_1$ queries on $B'$ in constant time, using $n/\delta+o(n/\delta)$ bits. 
Now we claim that $C = \delta \cdot \rank{}_1(\floor{i/\delta})+(i \mod \delta)B'[\ceil{i/\delta}]$ gives an answer of the $\rankA_1{}(i, B, \delta)$ query. 
Let $D = \delta \cdot \rank{}_1(\floor{i/\delta})$, and let $d$ be the position of $D$-th 1 in $B$. 
From the definition of $B'$, we can easily show that if $B'[\ceil{i/\delta}] = 0$ or $(i \mod \delta) = 0$, the claim holds 
since there are less than $\delta$ 1's in $B[d \dots i]$.
Now consider the case when $B'[\ceil{i/\delta}] = 1$ and $(i\mod \delta) \neq 0$. 
Then there are at most $(\delta+(i \mod \delta)-1)$ 1's in $B[d \dots i]$ 
when $(\delta \floor{i/\delta}+1)$ is the position of the $(D+\delta)$-th 1 in $B$, and
all the values in $B[(\delta \floor{i/\delta}+2) \dots i]$ are 1. 
Also there are at least $\delta-(\delta-(i \mod \delta)) = (i \mod \delta)$ 1's in $B[d \dots i]$ when $(\delta \ceil{i/\delta})$ is the position of the $(D+\delta)$-th 1 in $B$ and all the values in $B[\delta \floor{i/\delta}+ (i \mod \delta)+1 \dots \delta \ceil{i/\delta}]$ are 1. 
By the similar argument, we can show that one can answer the $\vselectA_1(i, B, \delta)$ query in $O(1)$ time by returning $\delta(\select{}_1(\floor{i/\delta}, B')-1)+(i \mod d)$.

Finally, in the case when there are $m$ 1's in $B$, there are at most $\floor{m/\delta}$ 1's in $B'$. 
Therefore by Lemma~\ref{lem:RRR}(b), we can support $\rankA_1$ and $\vselectA_1$ queries (as before) 
in $O(1)$ time, using $\mathcal{B}(\ceil{n/\delta}, \floor{m/\delta})+o(n/\delta)$ bits.
\end{proof}

Note that in the above proof, we can answer $\rankA_1$ (or $\vselectA_1$) queries on $B$ 
using any data structure that supports $\rank{}_1$ (or $\select{}_1$) queries on $B'$. 
Thus, if $B$ is very sparse, i.e., when $\mathcal{B}(n/\delta, m/\delta) \ll o(n/\delta)$ 
(in this case, the space usage of the structure of Theorem~\ref{thm:ub1} is sub-optimal), 
one can use the structure of~\cite{DBLP:conf/alenex/OkanoharaS07}
that uses $(m/\delta)\lg(n/m)+O(m/\delta)$ bits (asymptotically optimal space), 
to support $\rankA_1$ queries in $O(\min\{\lg{m},\lg{(n/m)}\})$ time, and $\vselectA_1$ queries in constant time.
\\\\
\noindent\textbf{Supporting $\vrankA{}$ and $\dselectA{}$ queries. }
Now we consider the problem of supporting $\vrankA_1$ and $\dselectA_1$ queries with additive error $\delta$ on bit-strings of length $n$. 
The following theorem describes a lower bound on space. 
%See Appendix~\ref{app:ldselect} for a proof.

\begin{theorem}
(*)
\ifdefined\shortversion\footnote{Proofs of the results marked (*) are omitted due to space limitation, and will appear in the full version.}
\else
\footnote{Proofs of the results marked (*) are deferred to the appendix.}
\fi
\label{thm:ldselect}
Any data structures that supports $\vrankA_1$ or $\dselectA_1$ queries with additive error $\delta$ on a bit-string of length $n$ requires at least $\floor{n/2\delta}\lg{\delta}$ bits. 
\end{theorem}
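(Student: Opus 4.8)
I would prove this by the usual incompressibility argument: exhibit a family $\mathcal S$ of bit-strings of length $n$, with $|\mathcal S|$ as large as $\delta^{\floor{n/2\delta}}$, such that no two members of $\mathcal S$ may be assigned the same encoding by any structure that answers $\vrankA_1$ queries with additive error $\delta$, and — by a separate choice of query — the same is true for $\dselectA_1$. Since an encoding is a binary string, this forces a worst-case space of $\lg|\mathcal S| = \floor{n/2\delta}\lg\delta$ bits. Concretely, put $k=\floor{n/2\delta}$, split the first $2\delta k$ positions into $k$ consecutive blocks of length $2\delta$ each, and set the remaining positions to $0$. For a vector $\mathbf c=(c_1,\dots,c_k)$ with each $c_j\in\{1,\dots,\delta\}$, let $B^{\mathbf c}$ be the string whose $j$-th block is $1^{c_j}0^{2\delta-c_j}$ — that is, $c_j$ leading $1$s followed by a run of $2\delta-c_j\ge\delta$ zeros. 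Every block then contains at least one $1$ and ends with at least $\delta$ consecutive $0$s. Take $\mathcal S=\{B^{\mathbf c}\}$, so $|\mathcal S|=\delta^{k}$.

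\textbf{Distinguishing for $\vrankA_1$.} Given $B^{\mathbf c}\neq B^{\mathbf c'}$, let $j$ be the least index with $c_j\neq c_j'$ and assume $c_j<c_j'$. Query $\vrankA_1(2\delta j,\cdot,\delta)$. The positions $2\delta j-\delta+1,\dots,2\delta j$ form exactly the all-zero second half of block $j$ in \emph{both} strings, so $\rank_1(2\delta j-\delta,\cdot)=\rank_1(2\delta j,\cdot)$; by the definition of $\vrankA_1$ the answer is then \emph{forced} to equal $\rank_1(2\delta j,\cdot)$, which is $\sum_{\ell\le j}c_\ell$ for $B^{\mathbf c}$ and $\sum_{\ell\le j}c_\ell'$ for $B^{\mathbf c'}$. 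Since the first $j-1$ coordinates agree while $c_j\neq c_j'$, these two forced answers differ, so any correct structure must give $B^{\mathbf c}$ and $B^{\mathbf c'}$ different encodings. This part carries no side conditions (the query position lies inside the string and rank is always defined), so it already yields the full $\floor{n/2\delta}\lg\delta$ bound.

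\textbf{Distinguishing for $\dselectA_1$.} With the same $j$ and $c_j<c_j'$, query $\dselectA_1(r,\cdot,\delta)$ where $r=\big(\sum_{\ell<j}c_\ell\big)+c_j+1$. In $B^{\mathbf c'}$ the $r$-th $1$ is the $(c_j{+}1)$-st $1$ of block $j$, hence sits at position $2\delta(j-1)+c_j+1\le 2\delta j-\delta$, so every valid answer is $\le 2\delta j-\delta$. In $B^{\mathbf c}$ the $r$-th $1$ is the first $1$ lying strictly after block $j$, hence at position $\ge 2\delta j+1$, so every valid answer is $\ge 2\delta j-\delta+2$. The two allowed answer windows are disjoint, so the structure must again separate the two strings.

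\textbf{The point that needs care.} The $\dselectA_1$ argument relies on the $r$-th $1$ of $B^{\mathbf c}$ — the string with the \emph{smaller} $c_j$ — actually existing. For $j<k$ this is immediate, since block $j{+}1$ is non-empty (this is exactly why I keep every $c_j\ge 1$). For $j=k$ one adds a single sentinel $1$ in the last free position (available whenever $2\delta\nmid n$), which keeps $|\mathcal S|=\delta^{k}$; when no room is left one instead freezes one block to a fixed nonzero value, losing only an additive $\lg\delta$ in the bound (or invokes the convention that $\select_1$ of an out-of-range index returns a position beyond the string, which makes the windows disjoint outright). Everything else is routine arithmetic on block boundaries, so I expect this termination/boundary bookkeeping to be the only real obstacle.
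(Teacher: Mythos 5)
Your proposal is correct and follows the paper's incompressibility plan: construct a family of $\delta^{\floor{n/2\delta}}$ strings of length $n$ in which each length-$2\delta$ block carries a leading $1$-run of length in $\{1,\dots,\delta\}$ followed by at least $\delta$ zeros, and argue that no two members may share an encoding. The $\dselectA_1$ half matches the paper's argument almost verbatim --- both choose a rank $r$ whose $1$ lies inside block $j$ for the string with the larger $c_j$ but only in a subsequent block for the other string, giving disjoint answer windows --- and your sentinel remark for $j=k$ patches a boundary case the paper leaves implicit. The meaningful divergence is in the $\vrankA_1$ half: you query at the block boundary $2\delta j$, where the window $(2\delta j-\delta,\,2\delta j]$ is all-zero in every member of the family, and then invoke the tie-break clause in the definition of $\vrankA_1$ that \emph{forces} the answer to equal $\rank_1(2\delta j)$ exactly; this makes the answers for $B^{\mathbf c}$ and $B^{\mathbf c'}$ deterministic singletons that differ whenever $c_j\neq c_j'$. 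The paper instead queries at a position inside the $1$-run, where the two admissible answer sets take the form $(\mathrm{prev},\mathrm{prev}+c_j]$ and $(\mathrm{prev},\mathrm{prev}+c_j']$ and therefore \emph{overlap}, so that query does not by itself separate the two strings. Your forced-answer observation is not just a cosmetic variant --- it is what actually makes the $\vrankA_1$ distinguishing step rigorous, and you are right to flag the boundary bookkeeping for $\dselectA_1$ as the other place that requires care.
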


We now show that for some values of $\delta$, any data structure that uses up to a $\lg^{O(1)} \delta$ 
factor more than the optimal space cannot support $\dselectA_1$ queries in constant time. 
%The proof of the theorem appears in Appendix~\ref{app:dselectLgLgLowerBound}.

\begin{theorem}\label{thm:dselectLgLgLowerBound}(*)
Any $((n/\delta) \lg^{O(1)}{\delta})$-bit data structure that supports $\dselectA_1$ 
queries with an additive error $\delta = O(n^c)$, for some constant $0 < c \le 1$ on a 
bit-string of length $n$ requires $\Omega(\lg{\lg{n}})$ query time.
\end{theorem}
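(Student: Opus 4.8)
The plan is to prove this by a reduction from the static \emph{predecessor search} problem, whose word-RAM complexity is well understood: for a set of $N$ keys from a universe of size $N^{1+\Omega(1)}$ stored in $N\cdot 2^{O(\lg\lg N)}$ words, the known lower bounds (Beame--Fich, and more sharply the Patrascu--Thorup time--space trade-off) force query time $\Omega(\lg\lg N)$. The link to our problem is the fact already exploited for the upper bound via Lemma~\ref{lem:prefix}: a $\dselectA_1$ query on a bit-string is essentially a \search{} query on the array of per-block $1$-counts, and a \search{} query is a predecessor query among that array's prefix sums. I would run this correspondence backwards.

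Concretely, fix a constant $\gamma\in(0,1)$, set $\delta:=n^{\gamma/(1+\gamma)}$ (the error value the theorem exhibits: $\delta=\Theta(n^{c})$ with $c=\gamma/(1+\gamma)\le\tfrac12$), use blocks of length $3\delta$, and let $N:=\Theta(n/\delta)$ be their number. Starting from a hard predecessor instance --- an arbitrary $N$-subset $Q$ of $\{1,\dots,u\}$ with $u=\Theta(N^{1+\gamma})=\Theta(n)$ --- I would first reduce all consecutive gaps to at most $\delta$ by inserting at most $u/\delta=O(N)$ dummy keys (recording in two auxiliary $O(N\lg N)$-bit arrays how many dummies precede each key and where the nearest preceding original key lies), obtaining a set $P=\{p_1<\dots<p_{N'}\}$ with $N'=\Theta(N)$ and gaps $C[t]:=p_t-p_{t-1}\in\{1,\dots,\delta\}$. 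The bit-string $B$ of length $n$ is defined by placing, in block $t$, exactly $C[t]$ ones in that block's \emph{middle third} --- positions $(t-1)3\delta+\delta+1,\dots,(t-1)3\delta+\delta+C[t]$ --- and zeros elsewhere; I would additionally store the plain sorted array $(p_1,\dots,p_{N'})$ in $O(N\lg N)$ bits, answering ``index $\mapsto p_t$'' in $O(1)$.

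To answer a predecessor query $q$ for $Q$: if $q\ge p_{N'}$ return $p_{N'}$; otherwise the $(q{+}1)$-th $1$ of $B$ lies in block $t^\star:=\min\{t:p_t>q\}$, and since the ones of every block are confined to its middle third we have $\select_1(q{+}1,B)\in[(t^\star-1)3\delta+\delta+1,\ (t^\star-1)3\delta+2\delta]$; consequently \emph{every} value $\pi$ a $\dselectA_1(q{+}1,B,\delta)$ call may return satisfies $(t^\star-1)3\delta<\pi<t^\star\cdot 3\delta$. Thus one $\dselectA_1$ call recovers $t^\star=\lceil\pi/(3\delta)\rceil$ exactly, and $O(1)$ further lookups in the auxiliary arrays (to step from index $t^\star{-}1$ back to the nearest original key and fetch its value) output $\pred_Q(q)$. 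Hence a $\dselectA_1$ data structure of $O((n/\delta)\lg^{O(1)}\delta)=O(N\cdot\mathrm{polylog}\,N)$ bits, plus $O(N\lg N)$ auxiliary bits, yields a static predecessor structure for $N$ keys over a universe of size $N^{1+\Omega(1)}$ in $N\cdot 2^{O(\lg\lg N)}$ words answering each query with $O(1)$ work and one $\dselectA_1$ call; the predecessor lower bound then forces the $\dselectA_1$ time to be $\Omega(\lg\lg N)=\Omega(\lg\lg n)$, since $\lg N=\Theta(\lg n)$.

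The delicate part, I expect, is landing in exactly the parameter window where the predecessor lower bound is the clean $\Omega(\lg\lg n)$ rather than the weaker $\Omega(\lg\lg n/\lg\lg\lg n)$ obtained from a crude application of Beame--Fich: one needs the encoded universe polynomially larger than the key count \emph{and} the total space (structure plus auxiliary arrays) to stay in the near-linear $N\cdot 2^{O(\lg\lg N)}$-word regime where the Patrascu--Thorup trade-off evaluates to $\Theta(\lg\lg N)$ --- which is exactly why the hypothesis restricts to $\delta=O(n^{c})$ and to only a $\lg^{O(1)}\delta$ factor of extra space. The remaining point requiring care is verifying that the bounded-gap restriction on $P$ does not make predecessor easier; the dummy-insertion step reduces general predecessor to the bounded-gap case while only $O(1)$-factor increasing the number of keys and leaving the universe unchanged, so this is safe, but the associated bookkeeping has to be written out.
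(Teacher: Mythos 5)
Your proposal is correct and takes essentially the same approach as the paper: both reduce from the Patrascu--Thorup predecessor lower bound by inserting dummy keys to cap gaps at $\delta$, encoding the resulting gaps as zero-padded fixed-width unary blocks in a bit-string, and using a single $\dselectA_1$ call (which cannot escape the surrounding zero-padding) to recover the block index, followed by an $O((n/\delta)\lg n)$-bit auxiliary array to map back to the original predecessor. The only cosmetic difference is that you use $3\delta$-length blocks with the ones in the middle third, whereas the paper uses $2\delta$-length right-justified unary codes.
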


The following theorem describes a simple data structure for supporting $\dselectA_1$ queries. 
%See Appendix~\ref{app:ubds} for a proof.

\begin{theorem}(*)
\label{thm:ubds}
For a bit-string $B$ of length $n$, there is a data structure of size $(n/\delta)\lg{\delta}+o((n/\delta)\lg{\delta})$ bits, which supports $\vrankA_1$ queries on $B$ using $O(1)$ time and $\dselectA_1{}$ queries on $B$ using $SPS(n/\delta, n)$ time. 
\end{theorem}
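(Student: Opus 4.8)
The plan is to mimic the block-reduction of Theorem~\ref{thm:ub1}, but storing a per-block \emph{count} rather than a single bit. Partition $B$ into $\ceil{n/\delta}$ blocks $B_1,\ldots,B_{\ceil{n/\delta}}$ of $\delta$ consecutive bits (the last block possibly longer), and let $C[1\ldots\ceil{n/\delta}]$ be the array with $C[j]$ the number of $1$s in $B_j$, so $C[j]\in\{0,1,\ldots,\delta\}$. Store $C$ inside the searchable partial-sums structure of Lemma~\ref{lem:prefix} with $\alpha=\lceil\lg(\delta+1)\rceil$. This uses $\alpha\cdot\ceil{n/\delta}+o(\alpha n/\delta)=(n/\delta)\lg\delta+o((n/\delta)\lg\delta)$ bits, answers $\summ{}$ queries (note $\summ{}(j)=\rank{}_1(j\delta,B)$) in $O(1)$ time, and answers $\search{}$ queries in $SPS(\ceil{n/\delta},\ceil{n/\delta}\cdot2^\alpha)=SPS(n/\delta,n)$ time, the last equality because $\ceil{n/\delta}\cdot2^\alpha=O(n)$.

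For $\dselectA_1(i,B,\delta)$ I would, after handling $i>\summ{}(\ceil{n/\delta})$ (no $i$-th $1$) separately, compute $\gamma=\search{}(i-1)$, the smallest index with $\summ{}(\gamma)\ge i$; then block $B_\gamma$ contains the $i$-th $1$, i.e., $\select{}_1(i,B)\in((\gamma-1)\delta,\gamma\delta]$, and I return $p=(\gamma-1)\delta+1$. Since $\select{}_1(i,B)>(\gamma-1)\delta$ we have $p\le\select{}_1(i,B)$, and since $\select{}_1(i,B)\le\gamma\delta$ we have $p>\select{}_1(i,B)-\delta$, so $p$ is admissible; the cost is one $\search{}$ query, i.e., $SPS(n/\delta,n)$.

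For $\vrankA_1(i,B,\delta)$ the first idea is to return $r=\summ{}(\floor{i/\delta})=\rank{}_1(\floor{i/\delta}\cdot\delta,B)$. Because $\floor{i/\delta}\cdot\delta\le i$ we get $r\le\rank{}_1(i,B)$, and because $i-\delta<\floor{i/\delta}\cdot\delta$ we get $\rank{}_1(i-\delta,B)\le r$, so $r$ lies in the \emph{closed} interval $[\rank{}_1(i-\delta,B),\rank{}_1(i,B)]$. The delicate point is the strict left endpoint in the definition of $\vrankA{}$: when $\rank{}_1(i-\delta,B)=\summ{}(\floor{i/\delta})$ but $\rank{}_1(i,B)$ is larger, this value is inadmissible. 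I would patch this by also storing, in a plain array with $O(1)$ lookup, the offset within each block of its first $1$ (a sentinel marking empty blocks). Writing $\beta=\ceil{i/\delta}$ for the block holding position $i$ and $q=i-(\beta-1)\delta$ for its offset inside $B_\beta$, return $\summ{}(\beta-1)+1$ if the first $1$ of $B_\beta$ is at offset $\le q$, and $\summ{}(\beta-1)$ otherwise. In the former case the $(\summ{}(\beta-1)+1)$-st $1$ of $B$ lies at a position in $(i-\delta,i]$, making $\summ{}(\beta-1)+1$ admissible; in the latter case $\rank{}_1(i,B)=\summ{}(\beta-1)$ and $\rank{}_1(i-\delta,B)\le\summ{}(\beta-1)$, with equality exactly in the degenerate case singled out by the definition, making $\summ{}(\beta-1)$ admissible. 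Either way the work is one $\summ{}$ query plus one lookup, i.e., $O(1)$ time.

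The step I expect to be the main obstacle is precisely this last correctness proof: raw block counts do not determine an admissible $\vrankA_1$ answer when both the tail of $B_{\beta-1}$ past offset $q$ and the prefix of $B_\beta$ through offset $q$ are borderline, so the first-$1$ offsets (or an equivalent gadget) are genuinely required; and keeping the space at the stated leading term $(n/\delta)\lg\delta$, instead of twice that, asks for a joint encoding of the pair (count, first-$1$ offset) per block that exploits that a block with $c$ ones has its first $1$ at offset at most $\delta-c+1$.
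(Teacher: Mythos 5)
Your approach matches the paper's exactly: the same $\delta$-blocking, the same count array $C$ stored in the searchable-partial-sums structure of Lemma~\ref{lem:prefix}, with $\summ$ answering $\vrankA_1$ and $\search$ answering $\dselectA_1$. Your $\dselectA_1$ analysis is actually more careful than the paper's: the appendix proof computes $i=\search(j)$ (smallest index with $\summ>j$) and returns $(i-1)\delta$, which with the stated $\search$ convention lands one block too late whenever $\summ(\text{correct block})=j$; e.g.\ $\delta=3$, $B=100\,010\,100$, $j=2$ gives $\search(2)=3$ and a returned position $6>\select_1(2,B)=5$. Your $\gamma=\search(j-1)$ with return $(\gamma-1)\delta+1$ is airtight.

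On $\vrankA_1$, the concern you flagged is precisely the gap in the paper's own proof. It asserts $\rank_1(j-\delta,B)<\summ(\floor{j/\delta})\le\rank_1(j,B)$, but the strict left inequality is false in general: with $\delta=4$, $B=00001100$, $j=6$, we have $\rank_1(2,B)=0=\summ(1)$ while $\rank_1(6,B)=2$, so $\summ(\floor{j/\delta})$ is outside the prescribed half-open interval and the degenerate clause does not apply. You correctly identified a bug, not a subtlety you were missing. Moreover, one can build two strings that agree on all block counts yet require different valid $\vrankA_1$ answers under the strict definition (put a block's ones at its front versus its back, with an all-zero block before it), so the count array alone provably cannot support the strict definition.

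The difficulty is that your patch is incompatible with the theorem's claimed leading constant, as you yourself suspected. Storing a first-$1$ offset per block adds another $\approx\lg\delta$ bits per block; and your joint-encoding idea does not close the gap, since with $c$ ones the first one lies at one of $\delta-c+1$ offsets, giving $1+\sum_{c=1}^{\delta}(\delta-c+1)=\Theta(\delta^2)$ admissible pairs, still about $2\lg\delta$ bits per block. So under the strict semantics the leading space term is at least roughly $2(n/\delta)\lg\delta$, not $(n/\delta)\lg\delta$.

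The reading that makes both the theorem's space bound and the paper's query formula consistent is a mildly relaxed $\vrankA_1$ semantics: the answer should be $\rank_1(p,B)$ for some position $p\in(i-\delta,i]$, which replaces the strict lower bound on $r$ with a non-strict one. Under that reading $\summ(\floor{i/\delta})=\rank_1(\floor{i/\delta}\cdot\delta,B)$ is always admissible since $\floor{i/\delta}\cdot\delta\in(i-\delta,i]$, your patch is unnecessary, and the space is as stated. If the strict inequality is to be kept literally, the result as stated cannot hold.
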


\subsection{Approximate \rank{} and \select{} queries on multisets}
In this section, we describe data structures for answering approximate \rank{} and \select{} queries on a multiset with additive error $\delta$. 
Given a multiset $S$ where each element  is from the universe $U = \{1,2 \dots n\}$, the  $\rank$ and $\select$ queries on $S$ are defined as follows.

\begin{itemize}
\item{$\rank{}(i, S)$:} returns the value $| \{ j \in S | j \le i \} |$.
\item{$\select{}(i, S)$:} returns the $i$-th smallest element in $S$.
\end{itemize}

One can define approximate \rank{} and \select{} queries on multisets
(also denoted as \vrankA{}, \rankA{}, \vselectA{}, \dselectA{}) analogously to the queries on strings.
Any multiset $S$ of size $m$ can be represented as a \textit{characteristic vector} 
$B_S$ of size $m+n$, such that
$B_S = 1^{m_1}01^{m_2}0\dots1^{m_n}0$ when the element $k$ has multiplicity $m_k$ in the multiset $S$, for $1 \le k \le n$.
It is easy to show that by answering $\rank{}_{b}$ and $\select{}_{b}$ queries on $B_S$, for $b \in \{0,1\}$, 
one can answer $\rank{}$ and $\select{}$ queries on $S$. 
We now describe efficient structures for the following two cases.
\\\\
\noindent
{\bf (1) \vrankA{}, \rankA{}, \vselectA{}, and \dselectA{} queries when \bm{$|S| = m$} is fixed:} 
We construct a new string $B'_S$ of length $\floor{m/\delta}+n$ such that $B'_S$ only keeps
every $i\delta$-th $1$ from $B_S$, for $1 \le i \le \floor{n/\delta}$ (and removes all other $1$'s). 
To answer the query $\rankA(i,S,\delta)$, we first compute $\select{}_0(i, B'_S)-i = \floor{\rank(i, S)/\delta}$,
and return $\delta(\select{}_0(i, B'_S)-i)$ as the answer.
It is easy to see that $\delta\floor{\rank(i, S)/\delta}$ is an answer to the $\rankA(i, S, \delta)$ query. 
Similarly, we can answer the $\vselectA(i, S,\delta)$ query by returning $\rank_{0}(\select{}_1(\floor{i/\delta}, B'_S), B'_S)+1$.
We represent $B'_{S}$ using the structure of Lemma~\ref{lem:RRR}(b), which uses
$\mathcal{B}(n+\floor{m/\delta}, \floor{m/\delta})+o(n+\floor{m/\delta})$ bits and 
supports $\rank_0$, $\rank_1$, $\select_0$ and $\select_1$ queries on $B'_S$ in constant time.
Thus, both $\rankA$ and $\vselectA$ queries on $S$ can be supported in constant time.

For answering $\vrankA$ and $\dselectA$ queries on $S$, we first construct the data structure 
of Theorem~\ref{thm:ubds} to support $\dselectA_1$ queries on $B_S$. 
In addition, 
we maintain the data structure of Lemma~\ref{lem:prefix} to support $\summ$ and $\search$ queries 
on arrays $D[1 \dots \ceil{(n+m)/\delta}]$ and $E[1 \dots \ceil{(n+m)/\delta}]$ which are defined as follows.
For $1 \le i \le \ceil{(n+m)/\delta}$, $D[i]$ and $E[i]$ stores the number of $0$'s and $1$'s in the block $B_{S_i}$ respectively
(as defined in the proof of Theorem~\ref{thm:ubds}). 
By Lemma~\ref{lem:prefix} and Theorem~\ref{thm:ubds}, the total space for this data structure is $O((n'/\delta)\lg{\delta})$ 
bits. 
To answer $\vrankA(i, S, \delta)$, we first find the block $B_{S_{j}}$  of $B_S$ which contains $i$-th $0$ by answering $\search(i)$ query on $D$, and then return $\summ(j-1)$ query on $E$. 
To answer $\dselectA(i, S, \delta)$, we first find the block $B_{S_{j}}$ of $B_S$ which contains an 
answer of the $\dselectA_1(i, B_S, \delta)$ query, and then return $\summ(j-1)$ as the answer for $\dselectA(i, S, \delta)$. Note that if $j=1$, we return $0$ for both queries.
The total running time is $SPS(n'/\delta, n')$ for both $\vrankA$ and $\dselectA$ queries on $S$,
by Lemma~\ref{lem:prefix} and Theorem~\ref{thm:ubds}. For special case 
when $\min\{(n+m)/\delta, \delta\} = polylog(n+m)$, we can answer $\vrankA$ and $\dselectA$ queries on $S$ in constant time.
\\\\
\noindent
{\bf (2) \rankA{} and \vselectA{} queries when the frequency of each element in $S$ is at most $\ell$:}
We first show that at least $\floor{n/\ceil{\delta/\ell}}\lg({\max{(\floor{\ell/\delta},1)}+1})$ bits are are necessary for supporting $\rankA$ queries on $S$. %The proof of the following appears in Appendix~\ref{app:largeAlphabetLB}.

\begin{theorem}\label{thm:largeAlphabetLB}(*)
Given a multiset $S$ where each element is from the universe $U =\{1, 2, \dots , n\}$ of size $n$, any data structure that supports $\rankA$ queries on $S$ requires at least $\floor{n/\ceil{\delta/\ell}}\lg{(\max{(\floor{\ell/\delta},1)}+1)}$ bits, where $\ell$ is a bound on the  maximum frequency of each element in $S$.
\end{theorem}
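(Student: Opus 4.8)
The plan is a standard counting (information\nobreakdash-theoretic) argument: I will build a family $\mathcal F$ of multisets over $U=\{1,\dots,n\}$, each with maximum element\nobreakdash-frequency at most $\ell$, that is so large that no two of its members can be answered correctly by the same encoding, and then conclude that any data structure needs $\lg\abs{\mathcal F}$ bits.

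Write $k=\ceil{\delta/\ell}$ and $t=\maxs{\floor{\ell/\delta},1}$, so the target bound is $\floor{n/k}\lg(t+1)$. First I would partition $\{1,\dots,k\floor{n/k}\}$ into $\floor{n/k}$ consecutive blocks $U_1,\dots,U_{\floor{n/k}}$, each of size $k$ (any leftover universe elements will always have multiplicity $0$). To each vector $\mathbf c=(c_1,\dots,c_{\floor{n/k}})\in\{0,1,\dots,t\}^{\floor{n/k}}$ I associate the multiset $S_{\mathbf c}$ in which the elements of block $U_b$ receive total multiplicity exactly $c_b\delta$ (split among the $k$ elements of $U_b$ in any way with each part at most $\ell$, e.g.\ by filling them to $\ell$ one at a time). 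This is well defined and respects the frequency bound because $t\delta\le k\ell$ in both regimes: if $\ell\ge\delta$ then $k=1$ and $t=\floor{\ell/\delta}$, so $t\delta\le\ell=k\ell$; if $\ell<\delta$ then $t=1$ and $k\ell=\ceil{\delta/\ell}\ell\ge\delta=t\delta$. Since every $c_b\delta\le t\delta\le k\ell$, each per\nobreakdash-block total is realizable inside one block, and hence $\abs{\mathcal F}=(t+1)^{\floor{n/k}}$ for $\mathcal F=\{S_{\mathbf c}\}$.

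Next I would show that distinct vectors force different $\rankA$ answers. Given $\mathbf c\ne\mathbf c'$, let $b$ be the least index with $c_b\ne c'_b$ and let $i$ be the largest element of $U_b$. Since $S_{\mathbf c}$ and $S_{\mathbf c'}$ agree on $U_1,\dots,U_{b-1}$, we have $\rank(i,S_{\mathbf c})=\sum_{b'\le b}c_{b'}\delta$ and likewise for $\mathbf c'$, so the two ranks differ by $\abs{c_b-c'_b}\delta\ge\delta$. A valid answer to $\rankA(i,S,\delta)$ must lie in the half\nobreakdash-open interval $(\rank(i,S)-\delta,\rank(i,S)]$, which contains exactly $\delta$ integers; since the two true ranks are at distance at least $\delta$, these two intervals are disjoint, so no single value is a legal answer to both queries. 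Consequently, if a data structure used fewer than $\lg\abs{\mathcal F}$ bits, two members of $\mathcal F$ would share an encoding, and the query routine would be forced to return a common answer to both of the above queries --- a contradiction. Hence at least $\floor{n/k}\lg(t+1)=\floor{n/\ceil{\delta/\ell}}\lg\bParentheses{\maxs{\floor{\ell/\delta},1}+1}$ bits are necessary.

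The argument is essentially routine; the only points that need genuine care are the uniform feasibility inequality $t\delta\le k\ell$ across the two cases $\ell\ge\delta$ and $\ell<\delta$ (together with the observation that a per\nobreakdash-block total of $c_b\delta\le k\ell$ can always be split into $k$ nonnegative parts each at most $\ell$), and the bookkeeping for the leftover universe elements at the tail; the disjointness of the two answer intervals and the pigeonhole step are immediate.
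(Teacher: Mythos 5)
Your proposal is correct and takes essentially the same route as the paper: both partition the universe into $\floor{n/\ceil{\delta/\ell}}$ blocks of size $\ceil{\delta/\ell}$, both index a family of multisets by vectors over $\{0,\dots,\maxs{\floor{\ell/\delta},1}\}$ (the paper packages this as the set $I$ of frequency values spread uniformly in each block, you as per-block totals $c_b\delta$), and both conclude via the pigeonhole/disjoint-interval argument applied to the first differing block. Your one-line feasibility check $t\delta\le k\ell$ is a tidier way of handling the two regimes $\delta\le\ell$ and $\delta>\ell$ that the paper treats as separate cases, but the underlying construction and separation argument coincide.
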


We describe a data structure which answers $\rankA{}$ and $\vselectA{}$ queries on $S$ in $O(1)$ time. For $\rankA{}$ queries, it uses the optimal space. 
The details are
\ifdefined\shortversion omitted due to space limitation.
\else
described in Appendix~\ref{sec:multisetupper}.
\fi
\subsection{Approximate \suffixsum{} and \iss{} queries on binary streams}
\label{sec:binarysuffixsum}
In this section, we consider a data structure for answering $\dsuffixA$ and $\vissA$ queries
on a binary stream. 
We first show how to modify the data structure of the Lemma~\ref{lem:clark}, for answering $\suffixsum (i, n)$ and $\iss (i, n)$ queries in constant time using $n+o(n)$ bits, while supporting updates in constant time. 
We break the stream into \textit{frames}, 
which is $n$-bit consecutive elements in the stream.
Since one can construct a data structure of Lemma~\ref{lem:clark} in online~\cite{Clark:1996:EST:313852.314087}, 
it is easy to show that we can answer $\suffixsum{}$ and $\iss{}$ queries in constant time using $2n+o(n)$ bits while supporting constant-time updates 
by maintaining two data structure of Lemma~\ref{lem:clark} such as
one for the current frame and other for the previous frame of the stream. 
To make this data structure using $n+o(n)$ bits, 
we construct a data structure of Lemma~\ref{lem:clark} on the new frame  
while replacing the oldest part of the data structure constructed on the previous frame.
The details of the succinct data structure
\ifdefined\shortversion are omitted due to space limitation.
\else
are described in Appendix~\ref{subsec:nonapprox}.
\fi

Next, we consider a data structure for answering  $\dsuffixA{} (i, n, \delta)$ and $\vissA{} (i, n, \delta)$ queries on the binary stream in constant time using $\ceil{n/\delta}+O(\lg{n})+ o(n/\delta)$ bits.
We first split each frame $f = f_1 \dots f_n$ into $\ceil{n/\delta}$ chunks 
$g_1 \dots g_{\ceil{n/\delta}}$ such that for $1 \le i \le \ceil{n/\delta}$,
$g_i = 1$ if and only if $f_{(i-1)\delta+1} \dots f_{\min{(n,i\delta)}}$ 
contains $j\delta$-th 1 in $f$ for any integer $j \le i$.
Now consider a (virtual) binary stream of $g_i$'s. 
Then we can construct an $\ceil{n/\delta}+o(n/\delta)$-bit data structure for answering $\suffixsum{} (i, n)$, $\iss{} (i, n)$ queries in constant time while supporting constant-time updates on the such stream
(In the rest of this section, all of $\suffixsum{}$ and $\iss{}$ queries are answered on the virtual stream). 
We also maintain $c$ and $tc$, 
which stores the number of 1's in the current frame and chunk of the stream respectively.
Finally, we maintain an value $t$ which is an index of the last-arrived element in the current frame. All these additional values can be stored using $O(\lg{n})$ bits.

When $f_t$ is arrived,
We first increase $c$ and $tc$ by 1 if $f_t = 1$.  
If $(t \mod \delta) = 0$ or $t = n$, we send $1$ to the virtual stream if
there is an integer $j \le t$ such that $c-tc \le j\delta \le c$, and send $0$ to the virtual stream otherwise.
After that, we update the data structure which supports $\suffixsum{}$ and $\iss{}$ queries on the virtual stream, and reset $tc$ to zero (if $t= n$, we also reset $c$ to zero).
Since we can update the data structure on the virtual stream in constant time,
the above procedure can be done in constant time.
Now we describe how to answer $\dsuffixA{}$ and $\vissA{}$ queries.
\begin{itemize}
\item{$\dsuffixA{}$ queries:}
To answer the $\dsuffixA{}(i,S, \delta)$ query, we return $0$ if  $i \le \delta$.
If not, let $f'_{i}$ be the $(\ceil{(i-(t \mod \delta))/\delta})$-th last element in the virtual stream, Then we return $tc + \delta\suffixsum (\floor{(i-(t \mod \delta))/\delta}, \ceil{n/\delta})+(i-(t \mod \delta) \mod \delta)f'_{i}$,  
which gives an answer of the $\dsuffixA{} (i,n, \delta)$ query by the same argument as the proof of Theorem~\ref{thm:ub1}.
\item{$\vissA{}$ queries:}
To answer the $\issA{}(i,n, \delta)$ query, we return $n-(t-(t \mod \delta))$ if  $i \le tc$.
Otherwise, we return $n-(\delta(\iss{} (\floor{(i-tc)/\delta}, \ceil{n/\delta})+((i-tc) \mod \delta))$
by the same argument as the proof of Theorem~\ref{thm:ub1}.
\end{itemize}
Since $\suffixsum{}$ and $\iss{}$ queries on the virtual stream take $O(1)$ time, 
we can answer both $\dsuffixA{}$ and $\vissA{}$ queries on the stream in $O(1)$ time.
Thus we obtain the following theorem.
\begin{theorem}
\label{thm:binarystream}
For a binary stream, there exists a data structure that uses $\ceil{n/\delta}+O(\lg{n})+o(n/\delta)$ bits and supports $\dsuffixA{}$ and $\vissA{}$ queries on the stream with additive error $\delta$, in constant time. Also, the structure supports updates in constant time.
\end{theorem}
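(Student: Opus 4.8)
The plan is to assemble the three ingredients developed in the discussion above into one structure and then verify its error, space, and time bounds. First I would invoke the online succinct variant of Lemma~\ref{lem:clark}: for any binary stream there is a data structure that, under $O(1)$-time appends, maintains an $\ceil{n/\delta}+o(n/\delta)$-bit encoding supporting $\suffixsum{}$ and $\iss{}$ queries in $O(1)$ time (this is the structure obtained above by rebuilding the Clark--Munro encoding frame by frame, overwriting its oldest part). We run this structure not on the raw stream but on the \emph{virtual} stream of chunk-indicator bits $g_1, g_2, \dots$, where $g_i$ is $1$ exactly when the $i$-th $\delta$-bit chunk of the current frame contains the position of some $j\delta$-th one of that frame. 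Alongside it we keep the three counters $c$ (ones so far in the current frame), $tc$ (ones so far in the current incomplete chunk) and $t$ (index in the current frame of the last arrived element), each $O(\lg n)$ bits.

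Second I would spell out the update rule and confirm it is $O(1)$ time: on arrival of $f_t$ we bump $c$ and $tc$ when $f_t=1$; when $t\bmod\delta=0$ or $t=n$ the chunk closes, so we feed the virtual stream a $1$ iff some multiple of $\delta$ lies in the interval $(c-tc,\,c]$ --- equivalently iff $\floor{c/\delta}>\floor{(c-tc)/\delta}$, the easy arithmetic test that replaces ``contains a $j\delta$-th one'' --- then perform one append to the virtual-stream structure, reset $tc$ to $0$, and also reset $c$ to $0$ when $t=n$. Each step is a constant number of word operations.

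Third, and this is the substantive part, I would prove correctness of the two queries by mirroring the argument of Theorem~\ref{thm:ub1}. For $\dsuffixA{}(i,n,\delta)$ with $i>\delta$, the last $i$ raw elements split into a prefix lying inside the current incomplete chunk (contributing exactly $tc$ ones, which we report exactly), a middle run of whole closed chunks whose total weight is $\delta$ times the number of their set virtual bits, i.e.\ $\delta\cdot\suffixsum{}(\floor{(i-(t\bmod\delta))/\delta},\ceil{n/\delta})$, and a single boundary chunk whose contribution is between $0$ and $\delta$ and is approximated by $((i-(t\bmod\delta))\bmod\delta)\cdot f'_i$; exactly as in Theorem~\ref{thm:ub1}, the resulting estimate $r$ obeys $\suffixsum{}(i,n)-\delta<r\le\suffixsum{}(i,n)$. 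Symmetrically, for $\issA{}(i,n,\delta)$ we peel off the current chunk (returning $n-(t-(t\bmod\delta))$ when $i\le tc$) and otherwise route the query through $\iss{}$ on the virtual stream, again with slack $\delta$ on the position axis. The one genuinely delicate point, which I expect to be the main obstacle in writing the proof carefully rather than any deep idea, is the frame boundary: since $n$ is the maximum queried suffix length, every query reaches back at most into the tail of the previous frame, so the online Clark--Munro variant still retains enough of the previous frame to answer it on the virtual stream, and resetting $c$ when $t=n$ cannot corrupt an ongoing query because $tc$ and the already-committed virtual bits encode all the information the query needs.

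Finally, I would collect the bounds: the virtual-stream structure costs $\ceil{n/\delta}+o(n/\delta)$ bits, the three counters cost $O(\lg n)$ bits, and every update and query runs in $O(1)$ time, giving the claimed $\ceil{n/\delta}+O(\lg n)+o(n/\delta)$-bit structure with constant-time updates and queries.
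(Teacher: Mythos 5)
Your proposal follows the paper's own construction essentially verbatim: the same online succinct rank/select structure built over the same virtual stream of chunk-indicator bits, the same three $O(\lg n)$-bit counters $c$, $tc$, and $t$, the same chunk-closing update rule, and the same reduction of the query correctness to the argument of Theorem~\ref{thm:ub1}. The one point you flag as delicate (the frame boundary, and the observation that $n$ is the maximal queried suffix length so the online Clark--Munro variant retains enough history) is indeed the right thing to worry about and is handled implicitly in the paper's appendix; otherwise the two arguments coincide.
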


Comparing to the lower bound of Theorem~\ref{thm:lb1} for answering $\rankA{}$ and $\vselectA{}$ queries on bit-strings (this also gives a lower bound for answering $\dsuffixA{}$ and $\vissA{}$ queries), the above data structure takes $\Omega(n/\delta)$ bits when $n/\delta = o(\lg{n})$.
However in the sliding window of size $n$, at least
$\floor{\lg{n}}$ bits are necessary~\cite{DBLP:conf/swat/Ben-BasatEFK16} for answering $\dsuffixA{}$  queries even the case when $i$ is fixed to $n$. 
Therefore the data structure of Theorem~\ref{thm:binarystream} supports $\dsuffixA$ and $\vissA$ queries with optimal space when $n/\delta = \omega(\lg{n})$, and asymptotically optimal otherwise.

\section{Queries on strings over large alphabet}
In this section, we consider non-binary inputs. First, we look at general alphabet and derive results for approximate \rank{} and \select{}. Then we consider suffix sums over integer streams.

\subsection{$\rankA{}$ and $\vselectA{}$ queries on strings over general alphabet}
Let $A$ be a string of length $n$ over the alphabet $\Sigma=\{1,2, \dots, \sigma\}$ of size $\sigma$.
Then, for $1 \le j \le \sigma$, the query $\rank{}_{j}(i, A)$ returns the number of $j$'s in $A[1 \dots i]$, 
and the query $\select_j(i, A)$ returns the position of the $i$-th $j$ in $A$ (if it exists).
Similarly, the queries $\rankA{}_{j}(i, A, \delta)$ and $\vselectA{}_{j}(i, A, \delta)$ are
defined analogous to the queries $\rankA{}$ and $\vselectA{}$ on bit-strings.
One can easily show that at least $\floor{n/\delta}\lg{\sigma}$ bits are necessary 
to support $\rankA$ and $\vselectA$ queries on $A$, by extending the proof of 
Theorem~\ref{thm:lb1} to strings over larger alphabets.
In this section, we describe a data structure that supports $\rankA$ and $\vselectA$ queries 
on $A$ in $O(\lg{\lg{\sigma}})$ time, using twice the optimal space.
We make use of the following result from~\cite{Golynski:2006:ROL:1109557.1109599} for supporting $\rank$ and $\select$ queries on strings over large alphabets.
We now use the following lemma to prove our main result for the section.

\begin{lemma}[\cite{Golynski:2006:ROL:1109557.1109599}]
\label{lem:gmr}
Given a string of length $n$ over the alphabet $\Sigma=\{1,2, \dots, \sigma\}$,
one can support $\rank{}_j$ queries in $O(\lg{\lg{\sigma}})$ time and $\select_j$ 
queries in $O(1)$ time, using $n\lg{\sigma}+o(n\lg{\sigma})$ bits, for any $1 \le j \le \sigma$.
\end{lemma}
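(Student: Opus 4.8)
\emph{Proof proposal.} The plan is to rebuild the Golynski--Munro--Rao structure by a two-level (block / within-block) decomposition, after first disposing of the extreme parameter ranges. If $\sigma>n$, keep the sorted list of the (at most $n$) symbols that actually occur in a predecessor structure such as a $y$-fast trie ($O(n\lg\sigma)$ bits, $O(\lg\lg\sigma)$-time predecessor), relabel $A$ onto an alphabet of size $n'\le n$, and route every query through this dictionary; since $n\lg n'\le n\lg\sigma$ this is for free. If $\sigma$ is polylogarithmic in $n$, a multiary wavelet tree already answers $\rank_j$ and $\select_j$ in $O(1)$ time within $n\lg\sigma(1+o(1))$ bits, so I may assume $\sigma\le n$ and $\sigma=\omega(\mathrm{polylog}\,n)$ -- the latter ensuring that any $O(n)$-bit auxiliary structure is a lower-order term. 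Pad and split $A$ into $K:=\ceil{n/\sigma}$ blocks $A^{(1)},\dots,A^{(K)}$ of length $\sigma$; every $\rank_j$ or $\select_j$ query is answered as a whole-block contribution plus a within-block contribution, served by one global structure and one per-block structure respectively.

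\textbf{Global part.} For each symbol $c$ let $\mathrm{occ}_k(c)$ be its number of occurrences in block $k$, and form the bit-vector $C_c=0^{\mathrm{occ}_1(c)}\,1\,0^{\mathrm{occ}_2(c)}\,1\cdots 1\,0^{\mathrm{occ}_K(c)}$, whose length is $K$ plus the total number of occurrences of $c$. Together with a small index recording where each $C_c$ begins, these total $O(n)$ bits, and attaching the constant-time $\rank/\select$ index of Lemma~\ref{lem:clark} costs only a lower-order amount. For a position $i$ in block $k=\ceil{i/\sigma}$, the number of $c$'s in blocks $1,\dots,k-1$ is $\select_1(k-1,C_c)-(k-1)$, so $\rank_c(i,A)$ equals that plus the within-block $\rank_c$ at offset $i-(k-1)\sigma$. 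For $\select_c(j,A)$, the block of the $j$-th $c$ is $\rank_1(\select_0(j,C_c),C_c)+1$; one then runs a within-block $\select_c$ in that block and shifts by $(k-1)\sigma$. Everything outside a block is $O(1)$.

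\textbf{Within-block part.} It remains to support, on a length-$\sigma$ string over $\{1,\dots,\sigma\}$, within-block $\select_c$ in $O(1)$ and $\rank_c$ in $O(\lg\lg\sigma)$ using $\sigma\lg\sigma(1+o(1))$ bits, so that the $K$ blocks together use $n\lg\sigma+o(n\lg\sigma)$ bits. For a block $A^{(k)}$, store the permutation $\pi$ of $\{1,\dots,\sigma\}$ that lists the positions of symbol $1$ in increasing order, then those of symbol $2$, and so on, together with the length-$2\sigma$ boundary vector $1\,0^{\mathrm{occ}_k(1)}\,1\,0^{\mathrm{occ}_k(2)}\cdots$; since the symbols' position sets partition $\{1,\dots,\sigma\}$, this already encodes $A^{(k)}$, so no separate copy of the block is kept. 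With $\mathrm{pre}(c)=\sum_{c'<c}\mathrm{occ}_k(c')$ (read off the boundary vector in $O(1)$ time), $\select_c(j,A^{(k)})=\pi(\mathrm{pre}(c)+j)$ is a single forward evaluation of $\pi$, while $\rank_c(p,A^{(k)})$ is the number of entries $\le p$ in the increasing run $\pi(\mathrm{pre}(c)+1)<\dots<\pi(\mathrm{pre}(c)+\mathrm{occ}_k(c))$, i.e.\ a predecessor query confined to that contiguous run of $\pi$.

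\textbf{Main obstacle.} Everything above is bookkeeping; the real content is to realise the within-block $\rank$ in $O(\lg\lg\sigma)$ time \emph{within} $\sigma\lg\sigma(1+o(1))$ bits. Binary-searching $\pi$ only gives $O(\lg\sigma)$, and the naive remedy -- a $y$-fast trie on $\{1,\dots,\sigma\}$ for each symbol's position set -- adds $\Theta(\sigma\lg\sigma)$ bits per block and destroys the leading constant. The technical heart of the construction is therefore a representation of $\pi$ (equivalently of the family of per-symbol position sets, which together form a permutation matrix) that simultaneously uses $\sigma\lg\sigma+o(\sigma\lg\sigma)$ bits, evaluates $\pi$ in $O(1)$ (for $\select$), and answers ``how many of the next $\mathrm{occ}_k(c)$ values of $\pi$ are $\le p$'' in $O(\lg\lg\sigma)$ -- and that behaves well for symbols with few occurrences, where $\lg\binom{\sigma}{\mathrm{occ}_k(c)}\ll\mathrm{occ}_k(c)\lg\sigma$ and a flat ``$\lg\sigma$ bits per element'' dictionary would already overspend. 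Composing such a within-block gadget with the global structure above then yields the claimed $n\lg\sigma+o(n\lg\sigma)$-bit representation with $\select_j$ in $O(1)$ and $\rank_j$ in $O(\lg\lg\sigma)$ time.
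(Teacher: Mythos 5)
This statement (Lemma~\ref{lem:gmr}) is cited in the paper as a black-box result of Golynski, Munro and Rao; the paper contains no proof of it, so there is nothing to mirror. Your proposal is a genuine attempt to reconstruct that construction, and up to the ``Main obstacle'' paragraph you recover the skeleton of the GMR structure essentially correctly: the reductions to $\sigma\le n$ and $\sigma$ superpolylogarithmic, the split into $\lceil n/\sigma\rceil$ blocks of length $\sigma$, the $O(n)$-bit global bit-vectors $C_c$ handled by Lemma~\ref{lem:clark}, and the within-block encoding as a permutation $\pi$ plus a $2\sigma$-bit boundary vector---all of this is sound bookkeeping and the space arithmetic checks out.

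The problem is that the proof stops exactly where the work begins. Your last paragraph states, explicitly and correctly, that the one non-trivial requirement is a representation of $\pi$ in $\sigma\lg\sigma(1+o(1))$ bits that evaluates $\pi$ in $O(1)$ time and answers a predecessor query confined to a contiguous run of $\pi$ in $O(\lg\lg\sigma)$ time, and then concludes with ``Composing such a within-block gadget with the global structure above then yields the claimed\dots''. That is assuming the conclusion: you have named the gadget, not built it, so no part of the time bound for $\rank_j$ has actually been established. To close the gap one must exhibit the gadget. In the GMR construction this is done by storing $\pi$ as a plain array ($\sigma\lg\sigma$ bits, $O(1)$ evaluation) and, for each symbol $c$ with $\mathrm{occ}_k(c)\ge\lg\sigma$, sampling every $\Theta(\lg\sigma)$-th entry of the run $\pi(\mathrm{pre}(c)+1),\dots,\pi(\mathrm{pre}(c)+\mathrm{occ}_k(c))$ into a succinct predecessor structure over the universe $\{1,\dots,\sigma\}$; a $\rank_c$ query then does one predecessor search among $O(\mathrm{occ}_k(c)/\lg\sigma)$ sampled values in $O(\lg\lg\sigma)$ time, followed by a binary search over the $O(\lg\sigma)$ surviving consecutive entries of $\pi$, again $O(\lg\lg\sigma)$. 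The sampled structures hold $\sum_c\mathrm{occ}_k(c)/\lg\sigma=O(\sigma/\lg\sigma)$ keys of $O(\lg\sigma)$ bits each, i.e.\ $O(\sigma)$ bits per block and $O(n)=o(n\lg\sigma)$ overall, while symbols with $\mathrm{occ}_k(c)<\lg\sigma$ are handled by direct binary search on the short run. Without some such argument the proposal does not prove the lemma; it only reduces it to its own hardest step.
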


The following theorem shows we can construct a simple data structure for supporting 
$\rankA{}_j$ and $\vselectA{}_j$ queries on $A$ using the above lemma.
%The proof of the theorem appears in Appendix~\ref{apx:generalstaticproof}.

\begin{theorem}(*)
\label{thm:large}
Let $A$ be a string of length $n$ over the alphabet $\Sigma=\{1,2, \dots, \sigma\}$. Then for any $1 \le j \le \sigma$, 
one can support $\rankA{}_j$ and $\vselectA{}_j$ queries in $O(\lg{\lg{\sigma}})$ time
using  $2n/\delta\lg{(\sigma+1)}+o((n/\delta)\lg{(\sigma+1)})$ bits.
\end{theorem}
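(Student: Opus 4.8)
The plan is to follow the reduction behind Theorem~\ref{thm:ub1}, but to encode the information of all $\sigma$ symbols into a \emph{single} string over the alphabet $\{0,1,\dots,\sigma\}$, so that one instance of the multi-ary structure of Lemma~\ref{lem:gmr} serves every query character; building one bit-vector per symbol would instead cost $\Theta(n\sigma/\delta)$ bits, which is far too much.

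First I would partition $A$ into $\ceil{n/\delta}$ blocks $A_1,\dots,A_{\ceil{n/\delta}}$ of $\delta$ consecutive positions (the last block possibly shorter), as in Theorem~\ref{thm:ub1}. Call the $(k\delta)$-th occurrence of a symbol $j$ in $A$ a \emph{$j$-milestone}. Since $\rank_j$ increases by at most $\delta$ over any block, each block owns at most one $j$-milestone, and the total number of milestones is $\sum_{j=1}^{\sigma}\floor{\rank_j(n,A)/\delta}\le n/\delta$. I then build a string $A''$ over $\{0,1,\dots,\sigma\}$ by scanning the blocks in order: for $A_i$, output in increasing order every symbol $j$ that owns a milestone in $A_i$, and then output a single $0$ as a separator. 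Thus $|A''|\le\ceil{n/\delta}$ (milestone symbols) $+\,\ceil{n/\delta}$ (separators), and $A''$ has alphabet size $\sigma+1$, so storing it via Lemma~\ref{lem:gmr} costs $2(n/\delta)\lg(\sigma+1)+o((n/\delta)\lg(\sigma+1))$ bits and supports, for every $c\in\{0,\dots,\sigma\}$, $\rank_c$ on $A''$ in $O(\lg\lg\sigma)$ time and $\select_c$ on $A''$ in $O(1)$ time.

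Queries reduce to $A''$ as follows. The symbols of block $A_b$ occupy the stretch of $A''$ between the $(b-1)$-st and $b$-th $0$; hence $\select_0(b,A'')$ marks the right end of that stretch (take $\select_0(0,A'')=0$), and $\rank_0(\cdot,A'')$ recovers the block index of a position. For $\rankA_j(i,A,\delta)$ let $b=\floor{i/\delta}$: the number of $j$-milestones inside $A_1,\dots,A_b$ is $\rank_j(\select_0(b,A''),A'')$, so — exactly as in Theorem~\ref{thm:ub1} — the value $\delta\cdot\rank_j(\select_0(b,A''),A'')$, increased by $(i\bmod\delta)$ when block $A_{\ceil{i/\delta}}$ also owns a $j$-milestone (a further $\rank_j$ evaluation on $A''$ decides this), is a valid answer, computed in $O(\lg\lg\sigma)$ time. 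Symmetrically, for $\vselectA_j(i,A,\delta)$ I would locate the $\floor{i/\delta}$-th $j$-milestone by $q=\select_j(\floor{i/\delta},A'')$, read off its block $b=\rank_0(q,A'')+1$ in $O(\lg\lg\sigma)$ time, and return $\delta(b-1)+(i\bmod\delta)$; since a $j$-milestone lying in $A_b$ is an occurrence of $j$ at a true position in $[\delta(b-1)+1,\delta b]$, correctness is the verbatim argument of Theorem~\ref{thm:ub1}.

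The only genuinely new design choice is the shape of $A''$: interleaving the $0$-separators keeps its length within $2\ceil{n/\delta}$ and its alphabet at $\sigma+1$ (this is what produces the factor-$2$ space bound), while still letting one call to Lemma~\ref{lem:gmr} simultaneously answer the milestone-counting $\rank_j/\select_j$ and the block-locating $\rank_0/\select_0$. Everything else — the $b=0$ and $i\le\delta$ corner cases, the short last block, and the precise off-by-one constants — is routine and is inherited unchanged from the proof of Theorem~\ref{thm:ub1}, so I do not anticipate any real obstacle there.
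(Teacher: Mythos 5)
Your proposal is correct and follows essentially the same route as the paper: both build a length-$O(n/\delta)$ auxiliary string over an alphabet of size $\sigma+1$ that interleaves the $(k\delta)$-th occurrences (``milestones'') of every symbol with one block-separator per block, and then invoke Lemma~\ref{lem:gmr} once on that string so that $\select$ of the separator locates block boundaries and $\rank_j$ counts milestones, exactly as in your reduction. The only cosmetic difference is that the paper first forms $A'$ by inserting \$-separators into $A$ and then takes $A''$ as a subsequence, whereas you construct $A''$ directly; the resulting structure and query algorithm are the same.
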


\subsection{Supporting \ssA{} queries over non-binary streams}
In this section, we consider the problem of computing suffix sums over a stream of integers in $\{1,2,\ldots,\ell\}$. 
This generalizes the result of the Theorem~\ref{thm:binarystream} for \ssA{}{}. 
For such streams, one can use \ssA{} binary search to solve \issA{}, while a constant time \issA{} queries are left as future work.
Specifically, we show a data structure that requires $\floor{n/\ceil{\delta/\ell}}\lg{(\max{(\floor{\ell/\delta},1)}+1)}(1+o(1))+O(\lg{n})$; i.e., it requires $1+o(1)$ times as many bits as the static-case lower bound of Theorem~\ref{thm:largeAlphabetLB} when $\delta = o(\ell \cdot n/\lg{n})$.

We note that this model was studied in~\cite{DBLP:conf/swat/Ben-BasatEFK16,DBLP:journals/siamcomp/DatarGIM02,GibbonsT02} for window-sum queries. That is, our work generalizes this model to allow the user to specify the window size $i\le n$ at query time while previous works only considered the sum of the last $n$ elements.
In fact, all previous data structure implicitly supports \ssA{} queries but with slower run time. 
\cite{GibbonsT02,DBLP:journals/siamcomp/DatarGIM02} requires $O(\epsilon^{-1}\lgp{\ell n\eps})$ time to compute a $(1+\eps)$ approximation for the sum of the last $n$ elements while~\cite{DBLP:conf/swat/Ben-BasatEFK16} needs $O\parentheses{{\ell\cdot n}/{\delta}}$ for a $\delta$-additive one. Here, we show how to compute a $\delta$-additive error for the sum of the last $i\le n$ elements in constant time for both updates and queries.
\\
\noindent\textbf{Exact \suffixsum{} queries}
En route to \ssA{}, we first discuss how to compute an exact answer for suffix sums queries. It is known, even for fixed window sizes, that one must use $n\lgp{\ell+1}$ bits for tracking the sum of a sliding window~\cite{DBLP:conf/swat/Ben-BasatEFK16}. Here, we show how to compute exact \ssA{} using succinct space of $n\lgp{\ell+1}(1+o(1))$ bits.
%Intuitively, the look-up table approach from Appendix~\ref{subsec:nonapprox} holds when $\ell$ is ``small''. 

We start by discussing why the current approaches cannot work for a large $\ell$ value.
If we use sub-blocks of size $\Theta(\lg n)$ as in~\cite{Clark:1996:EST:313852.314087, Jacobson:1988:SSD:915547}, then the lookup table will require $(\ell+1)^{\Theta(\lg n)}=n^{\Theta(\lgp{\ell+1)}}$ bits, which is not even asymptotically optimal for non-constant $\ell$ values. While one may think that this is solvable by further breaking the sub-blocks into sub-sub-blocks, sub-sub-sub-blocks, etc., it is not the case. 
To see this, consider a lookup table for sequences of length $2$. Then its space requirement will be $(\ell+1)^2$ bits. If $\ell$ is large (say, $\ell\ge n$) then this becomes $\Omegap{n\ell}=\omega(n\lgp{\ell+1})$, which is not even asymptotically optimal.

\begin{theorem}(*)
There exists a data structure that requires $n\lgp{\ell+1}(1+o(1))$ bits and support constant time (exact) suffix sums queries and updates.
\end{theorem}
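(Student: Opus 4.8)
The plan is to combine a static, succinct partial-sums structure with the frame-by-frame rebuilding scheme that underlies our binary-stream construction of Section~\ref{sec:binarysuffixsum}, generalizing ``number of $1$'s in a block'' to ``sum of a block''. \emph{The static building block:} I would first establish that a static array $A[1\dots n]$ with entries in $\{0,\dots,\ell\}$ admits a data structure of $n\logp{\ell+1}(1+o(1))$ bits answering $\summ(p)=\sum_{j\le p}A[j]$ in $O(1)$ time; a suffix sum of the last $i$ elements is then just $\summ(n)-\summ(n-i)$. Applying Lemma~\ref{lem:prefix} directly gives $\ceil{\logp{\ell+1}}n(1+o(1))$ bits, which is a constant factor too large when $\ell=O(1)$, so I would first pack $g=\max\{1,\floor{\tfrac12\logp{n}/\logp{\ell+1}}\}$ consecutive entries into a single ``superletter'' stored in $\ceil{g\logp{\ell+1}}$ bits. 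The array of $\ceil{n/g}$ superletters occupies $n\logp{\ell+1}+O(n/g)=n\logp{\ell+1}(1+o(1))$ bits, and we hand it to the (online-constructible, two-level-block) partial-sums structure of Lemma~\ref{lem:prefix}; a prefix sum $\summ(p)$ is recovered as the superletter-prefix-sum up to superletter $\floor{p/g}$ plus the sum of the first $p\bmod g$ entries of the next superletter, the latter read from a global table of $O(\sqrt n\log^2 n)=o(n)$ bits indexed by (superletter value, offset) when $g>1$, and empty when $g=1$. Crucially this avoids the content-indexed per-block table that, as noted before the theorem, blows up for large $\ell$.

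\emph{From static to a sliding window:} I would partition the stream into \emph{frames} of $n$ consecutive elements. At any time the last $n$ elements consist of a suffix of the previous frame followed by a prefix of the current frame, so it suffices to keep the incrementally built partial-sums structure for the current frame together with the one for the previous frame, the latter queried only on positions not yet overwritten. As in Section~\ref{sec:binarysuffixsum}, the two frames' raw superletter data is stored \emph{in a single array of $n$ slots}: after the $t$-th element of the current frame has arrived, slots $\le t-b$ hold current-frame data, slots $>t$ hold previous-frame data, and the most recent $b=o(n)$ raw entries of the current frame (covering one block) sit in a small side buffer. To answer $\suffixsum(i,n)$: if $i\le t$ return $\summ_{\mathrm{cur}}(t)-\summ_{\mathrm{cur}}(t-i)$; otherwise return $\summ_{\mathrm{cur}}(t)+\bigl(\summ_{\mathrm{prev}}(n)-\summ_{\mathrm{prev}}(n-(i-t))\bigr)$, and one checks that $\summ_{\mathrm{prev}}$ is then evaluated only at arguments in $[t,n]$, whose in-block reads stay within the not-yet-overwritten region. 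The block and super-block directories are $o(n\logp{\ell+1})$ bits each (so keeping them for both frames costs nothing asymptotically), the lookup table is shared, and the side buffer is $o(n\logp{\ell+1})$ bits; hence the total space is $n\logp{\ell+1}(1+o(1))$. An update reduces to appending the new element to the current frame's array and directories, overwriting one lagging slot of the previous frame, and occasionally finalizing one block sum — all $O(1)$; queries are $O(1)$ by the formulas above. (Before $n$ elements have been seen we simply return $\summ_{\mathrm{cur}}(t)$ whenever $i>t$.)

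\emph{Main obstacle:} The delicate point is the de-amortization — keeping \emph{both} updates and queries worst-case $O(1)$ while the two frames' raw data physically share one array: spreading the per-frame ``rotation'' (previous $\leftarrow$ current, reset $t$, flush the side buffer) and the reconstruction of the current directory evenly over the $n$ steps of a frame, and fixing the overwrite lag $b$ so that no $\summ_{\mathrm{prev}}$ query ever touches an already-overwritten slot. This is exactly the scheduling carried out for the binary stream in Section~\ref{sec:binarysuffixsum} (and for Lemma~\ref{lem:clark}), and I expect the same argument to go through verbatim once ``rank within a block'' is replaced by ``sum within a block'', with no new idea required beyond that bookkeeping.
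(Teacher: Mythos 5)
Your static building block is the crux, and as specified it does \emph{not} achieve the claimed space bound for all $\ell$. With $g=\max\{1,\floor{\tfrac12\log n/\log(\ell+1)}\}$ you get $g$ equal to a \emph{constant} $\ge 2$ whenever $\ell=n^{\Theta(1)}$ with $\ell+1\le n^{1/4}$ (e.g.\ $\ell=n^{1/5}$ gives $g=2$). To recover ``the superletter-prefix-sum up to superletter $\floor{p/g}$'' you must maintain, alongside the raw superletter array, some directory keyed to superletter boundaries whose finest increment is a per-superletter sum of magnitude up to $g\ell$, hence $\Theta(\log g+\log\ell)=\Theta(\log(\ell+1))$ bits per boundary; with $\ceil{n/g}$ boundaries and $g=\Theta(1)$ that directory alone is $\Theta\big((n/g)\log(\ell+1)\big)=\Theta(n\log(\ell+1))$ bits, so the total is $n\log(\ell+1)\cdot(1+\Omega(1))$, not $(1+o(1))$. (Making such a directory $o(n\log(\ell+1))$ requires $\log(g\ell+1)=o(g\log(\ell+1))$, i.e.\ $g=\omega(1)$.) Separately, ``hand it to the partial-sums structure of Lemma~\ref{lem:prefix}'' cannot be read literally: applied to the packed superletter array, that lemma reports prefix sums of the \emph{packed integer values}, not of the underlying entries.

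The fix preserves your high-level design but corrects the block-size rule: take $g=1$ (no superletters, no table) whenever $\ell=\omega(1)$, because then $\ceil{\log(\ell+1)}=\log(\ell+1)(1+o(1))$ and Lemma~\ref{lem:prefix} invoked directly on $A$ already gives $n\log(\ell+1)(1+o(1))$ bits with $O(1)$-time $\summ$; reserve the superletter packing with $g=\Theta(\log n/\log(\ell+1))=\omega(1)$ for $\ell=O(1)$, where the integrality loss is genuine. In that regime a superletter sum is $O(\log n)$, i.e.\ $O(\log\log n)$ bits, so a separate Lemma~\ref{lem:prefix} structure over the $\ceil{n/g}$ chunk sums costs $o(n)=o(n\log(\ell+1))$ bits, the raw superletters cost $n\log(\ell+1)(1+o(1))$ bits, and your lookup table is $o(n)$; this makes the static part sound. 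Your frame-by-frame de-amortization is then the same scheme the paper already executes for the binary stream and should carry over once the in-block table is replaced by this chunk-sum machinery.
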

\renewcommand{\set}[1]{\left\{#1\right\}}
\noindent\textbf{General \ssA{} queries}
Here, we consider the general problem of computing \ssA{} (i.e., up to an additive error of $\delta$).
Intuitively, we apply the exact solution from the previous section on a compressed stream that we construct on the fly.
A simple approach would be to divide the streams into consecutive chunks of size $\max{(\floor\mu,1)}=\max{(\floor{\delta/\ell},1)}$ and represent each chunk's sum as an input to an exact suffix sum algorithm.
However, this fails to achieve succinct space. For example, summing $\ceil{\delta/\ell}$ integers requires $O(\ceil{\delta/\ell}\lgp{\ell+1})=\Omega(\lg \ell)$ bits. However, $\lg \ell$ bits may be asymptotically larger than the $\floor{n/\ceil{\mu}}\lg{(\max{(\floor{1/\mu},1)}+1)}$ bits lower bound of Theorem~\ref{thm:largeAlphabetLB}.

We alleviate this problem by \emph{rounding} the arriving elements. Namely, when adding an input $x\in\set{0,1,\ldots,\ell}$, we first round its value to $Round_{\nBits}(x)\triangleq2^{-\nBits}\ell\cdot\floor{\frac{x2^{\nBits}}{\ell}}$ so it will require $\nBits\triangleq\ceil{\lgp{n/\mu}+\lg\lg n}$ bits.
The rounding allows us to sum elements in a chunk (using a variable denoted by $\mathfrak{r}$), but introduces a rounding error. To compensate for the error, we both consider a smaller chunks; namely, we use chunks of size $\nu \triangleq \max\set{\floor{\mu\cdotpa{1-1/\lg n}},1}$. We also consider $\sensitivity\triangleq \floor{\lnrErrSymbol\cdotpa{1-1/\lg n}}$ that is slightly lower than $\lnrErrSymbol$ to compensate for the rounding error when $\mu\le 1$.~\footnote{If $\sensitivity=\minDelta$, then we simply apply the exact algorithm from the previous subsection.} 
We then employ the exact suffix sums construction from the previous section for window size of $s\triangleq \UBnBlocks$ (the number of chunks that can overlap with the window) over  a stream of integers in $\{1,\ldots,z\}$, where $z\triangleq\floor{\mu^{-1}\nu}$ is a bound on the~resulting~items. We use $\newInputLetter$ to denote the input that respresents the current block.

The query procedure is also a bit tricky. Intuitively, we can estimate the sum of the last $i$ items by querying \ranker{} for the sum of the last $i/\nu$ inserted values and multiplying the result by $\sensitivity$; but there are a few things to keep in mind. First, $i/\nu$ may not be an integer. Next, the values within the current chunk (that has not ended yet) are not recorded in \ranker{}. Finally, we are not allowed to overestimate, so $\mathfrak{r}$'s propagation may be an issue.

To address the first issue, we weigh the oldest chunk's $\newInputLetter$ value by the fraction of that chunk that is still in the window. For the second, we add the value of $\mathfrak{r}$ to the estimation, where $\mathfrak{r}$ is the sum of rounded elements. Notice that we do not reset the value of $\mathfrak{r}$ but rather propagate it between 
%chunks, same as in Section~\ref{sec:binaryWindow}.
chunks.
Finally, to assure that our algorithm never overestimates we subtract $\sensitivity - 1/2$ from the result.
Our algorithm uses the following variables:
\begin{itemize}
	\item \ranker{} - an exact suffix sum algorithm, as described in the previous section. It allows computing suffix sums over the last $s= \UBnBlocks$ elements on a stream of integers in $\{1,\ldots,z\}$.	
	\item $\mathfrak{r}$ - tracks the sum of elements that is not yet recorded in \ranker{}.
	\item $o$ - the offset within the chunk.
	%\item $i_C$ - the index of the $C$ cyclic buffer ($i_C\in\xrange{q / (\lg q)^2}$).
	%\item $i_{SC}$ - the sub-chunks cyclic buffer index ($i_{SC}\in\xrange{2q / \lg q}$).
	%\item $Cur_C$ - a counter for the number of set bits in the currently constructed chunk.
	%\item $Cur_{SC}$ - the number of the set bits in the most recent sub-chunk.
\end{itemize}
A pseudo code of our method appears in Algorithm~\ref{alg:approxSliding}.
\begin{algorithm}[t]
	\caption{Algorithm for \ssA{}}\label{alg:approxSliding}
	\small
	%\begin{multicols}{2}
	\begin{algorithmic}[1]
		\State Initialization: $\mathfrak{r} \gets 0, o \gets 0, \ranker.\text{init()}$
		\Function{\add[\text{element }\inputVariable]}{}
		\State $\blockOffset \gets (\blockOffset+1)\mod \nu$
		\State $\mathfrak{r} \gets \mathfrak{r} + Round_{\nBits}(x)$				\label{line:sum}
		\If {$\blockOffset = 0$}\Comment{End of a chunk}\label{line:end-of-block}
		\State $\newInputLetter \gets \floor{\sensitivity^{-1} \cdot\mathfrak{r}}$ \label{line:setBit}
		\State $\mathfrak{r}\gets \mathfrak{r} - \sensitivity\cdot\newInputLetter$ \label{line:reduceS}
		\State $\ranker.\mbox{\sc Add}(\newInputLetter)$
		\EndIf
		\EndFunction		
		\Function{Query}{$i$}
		\If{$i \le o$} \Comment{Queried within the current chunk}
		\State \Return $\mathfrak{r} - \parentheses{\sensitivity - 1/2}$
		\Else
		\State $\numBits \gets \ceil{\frac{i-o}{\nu}}$
		\State $\setBits \gets \ranker.\mbox{\sc Query}\parentheses{\numBits}$
		\State $\lastBit \gets \setBits - \ranker.\mbox{\sc Query}\parentheses{\numBits-1}$
		\State $\outBits \gets \outBitsVal$
		\State \Return $\mathfrak r - \parentheses{\sensitivity - 1/2}+\sensitivity\cdot\setBits -\ell\cdot\lastBit\cdot\outBits$\label{line:est}
		\EndIf		
		\EndFunction		
	\end{algorithmic}
	%\end{multicols}
\end{algorithm}
Next follows a memory analysis of the algorithm. 
%with a proof given in Appendix~\ref{apx:slidingQsrSpaceProof}.
\begin{lemma}\label{lem:slidingQsrSpaceProof}(*)
	Algorithm~\ref{alg:approxSliding} requires $\sFactor\cdot\floor{n/\max{(\floor{\mu},1)}}\cdot\lg\big({\ceil{\mu^{-1}} + 1}\big) + O\parentheses{\lg n}$~bits.
\end{lemma}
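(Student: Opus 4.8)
The plan is to attribute the leading term entirely to \ranker{} and to show that every other variable the algorithm keeps costs only $\Op{\lg n}$ bits. Apart from \ranker{}, the state that must persist across operations is the offset $o$ and the accumulator $\mathfrak{r}$ (the quantities $\newInputLetter,\numBits,\setBits,\lastBit,\outBits$ are local to \textsc{Add} and \textsc{Query}). Since $o\in\{0,\ldots,\nu-1\}$ and we may assume $\delta\le n\ell$ (otherwise every suffix sum is already below $\delta$ and \ssA{} can always return $0$), we have $\nu\le\mu\le n$, so $o$ fits in $\Op{\lg n}$ bits; the local variables in \textsc{Query} are all at most $s=\UBnBlocks=\Op{n}$ and likewise fit in $\Op{\lg n}$ bits.

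The one genuinely delicate variable is $\mathfrak{r}$, and this is precisely what the rounding on line~\ref{line:sum} exists to control. Line~\ref{line:reduceS} always leaves $0\le\mathfrak{r}<\sensitivity$, so just before line~\ref{line:setBit} we have $\mathfrak{r}<\sensitivity+\nu\ell=\Op{\delta}$; written out in full this is $\Thetap{\logp{\delta}}$ bits, which can be much larger than $\Op{\lg n}$. The key point is that $Round_{\nBits}(x)$ is always an integer multiple of the quantum $2^{-\nBits}\ell$, hence so is $\mathfrak{r}$ at every moment; representing $\mathfrak{r}$ by the integer $\mathfrak{r}\cdot 2^{\nBits}/\ell$ (and carrying out lines~\ref{line:sum},~\ref{line:setBit},~\ref{line:reduceS} and~\ref{line:est} in these quantized units) needs only $\logp{(\sensitivity+\nu\ell)\,2^{\nBits}/\ell}=\Op{\lg n}$ bits, using $2^{\nBits}=\Op{(n/\mu)\lg n}$, $\nu\le\mu$, and $\sensitivity\le\delta=\mu\ell$. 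Thus the whole non-\ranker{} footprint is $\Op{\lg n}$ bits.

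It then remains to account for \ranker{}. By the exact-suffix-sums theorem of the previous subsection, a suffix-sum structure over a window of $s=\UBnBlocks$ elements whose values are bounded by $z$ occupies $\sFactor\cdot s\logp{z+1}$ bits, so it suffices to show $s\logp{z+1}=\sFactor\floor{n/\max{(\floor\mu,1)}}\logp{\ceil{\mu^{-1}}+1}+\Op{\lg n}$. I would split on the size of $\mu$. If $\mu\ge1$ then $\nu=\max\{\floor{\mu(1-1/\lg n)},1\}$, $\max{(\floor\mu,1)}=\floor\mu$ and $\ceil{\mu^{-1}}=1$; here $\mathfrak{r}<\sensitivity+\nu\ell<2\sensitivity+1$ forces every $\newInputLetter=\floor{\sensitivity^{-1}\mathfrak{r}}$ into $\{0,1\}$, so the virtual stream is binary ($z=1$) and \ranker{} uses $\sFactor\,s=\sFactor\ceil{n/\nu+1}$ bits, and one reduces $\ceil{n/\nu+1}$ to $\sFactor\floor{n/\floor\mu}+\Op{1}$ by trading off the $1-1/\lg n$ shrinkage against $\floor\mu$. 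If $\mu<1$ then $\nu=1$, so the chunks are singletons, $s=n+\Op{1}$, and $\max{(\floor\mu,1)}=1$; unwinding $\mathfrak{r}<\sensitivity+\ell$ and $\sensitivity=\lnrErrSymbol(1-o(1))$ bounds each $\newInputLetter$ by $\ceil{\mu^{-1}}+\Op{1}$, whence $\logp{z+1}=\sFactor\logp{\ceil{\mu^{-1}}+1}+\Op{1}$ and $s\logp{z+1}=\sFactor\,n\logp{\ceil{\mu^{-1}}+1}+\Op{\lg n}$. Summing with the auxiliary $\Op{\lg n}$ gives the claim; the footnote case $\sensitivity=1$ (where the exact algorithm is used verbatim) is consistent with it.

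The step I expect to be the real obstacle is this last arithmetic reconciliation: showing that the floors, ceilings, and the $1-1/\lg n$ shrink factors built into the definitions of $\nu$, $s$, and $z$ collapse, in \emph{both} regimes, to a single multiplicative $\sFactor$ together with an additive $\Op{\lg n}$ — in particular that the shrunk chunk size $\nu$ stays within a $1-o(1)$ factor of $\floor\mu$ and that the extra $\Op{1}$ in the value bound of the $\mu<1$ case is absorbed correctly. A prerequisite that must be pinned down first is the invariant that $\mathfrak{r}$ is never reset, is propagated faithfully between chunks, and is genuinely below $\sensitivity$ right after each chunk boundary, since the $\Op{\lg n}$ accounting of $\mathfrak{r}$ rests on that magnitude bound.
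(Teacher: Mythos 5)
Your decomposition---charging the leading term to \ranker{} and bounding the persistent scalars $o$ and $\mathfrak{r}$ by $\Op{\lg n}$ bits via a quantized representation of $\mathfrak{r}$---is exactly the paper's approach; the paper's own proof is in fact much terser, simply listing $\sFactor\cdot s\lgp{z+1}$ for \ranker{}, an $O(\nBits+\lg\nu)$ term for $\mathfrak r$, and $\ceil{\lg n}$ for $o$, and then asserting that substituting $s=\UBnBlocks$, $z=\floor{\mu^{-1}\nu}$, and $\nu=\max(\floor{\mu\sNegFactor},1)$ closes the arithmetic without doing the case split you spell out. Two small, non-load-bearing caveats: the algorithm has $\nu=1$ on the whole range $\mu<2(1-1/\lg n)$, not only $\mu<1$, so your ``$\mu\ge1\Rightarrow z=1$'' has a narrow exceptional slice (splitting on $\nu>1$ versus $\nu=1$, as the paper's correctness proof does, is cleaner); and $\sensitivity\cdot\newInputLetter$ is an integer but not in general a multiple of $2^{-\nBits}\ell$, so $\mathfrak r\cdot 2^{\nBits}/\ell$ is not literally preserved as an integer across line~\ref{line:reduceS}---one should either snap $\sensitivity\cdot\newInputLetter$ onto the grid (the extra error is absorbed by the existing budget) or store $\mathfrak r$ on the slightly finer common grid, which is still $\Op{\lg n}$ bits.
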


Thus, we conclude that our algorithm is succinct if the error satisfies $\lnrErrSymbol=o\parentheses{{\ell\cdot n}/{\lg n}}$. We note that a $\floor{\lg n}$ bits lower bound for \bs{} with an additive error was shown in~\cite{DBLP:conf/swat/Ben-BasatEFK16}, even when only fixed sized windows (where $i \equiv n$) are considered. Thus, our algorithm always requires $O(\lnrLBSymbol)$ space, even if  $\lnrErrSymbol=\Omegap{{\ell\cdot n}/{\lg n}}$. Here, $\lnrLBSymbol=\floor{n/\ceil{\delta/\ell}}\lg{(\max{(\floor{\ell/\delta},1)}+1)}$ is the lower bound for static data shown in~Theorem~\ref{thm:largeAlphabetLB}.
\begin{corollary}
	Let $\ell,n,\lnrErrSymbol\in\mathbb N^+$ such that $\mu\triangleq \lnrErrSymbol/\ell$ satisfies $$\parentheses{\mu=o\parentheses{{n}/{\lg n}}} \wedge \brackets{(\mu=o(1))\vee (\mu=\omega(1)) \vee (\mu\in\mathbb N) \vee (\mu^{-1}\in\mathbb N)},$$ 
	then Algorithm~\ref{alg:approxSliding} is succinct. For other parameters, it uses  $O(\lnrLBSymbol)$ space.
\end{corollary}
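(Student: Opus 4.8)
The plan is to read off the corollary from the memory bound of Lemma~\ref{lem:slidingQsrSpaceProof} by comparing it, term by term, against the static-case lower bound $\lnrLBSymbol=\floor{n/\ceil{\mu}}\cdot\lg\bParentheses{\max\set{\floor{1/\mu},1}+1}$ of Theorem~\ref{thm:largeAlphabetLB} together with the $\floor{\lg n}$ lower bound of~\cite{DBLP:conf/swat/Ben-BasatEFK16}, where $\mu=\lnrErrSymbol/\ell$. By Lemma~\ref{lem:slidingQsrSpaceProof} the space of Algorithm~\ref{alg:approxSliding} is $\sFactor\cdot\floor{n/\max\set{\floor\mu,1}}\cdot\lg\bParentheses{\ceil{\mu^{-1}}+1}+O(\lg n)$, so it suffices to establish three things: (i) under the stated hypothesis the leading term equals $\sFactor\cdot\lnrLBSymbol$; (ii) under that hypothesis the additive $O(\lg n)$ is $o(\lnrLBSymbol)$, so that $\lnrLBSymbol$ is also the true asymptotic lower bound and the algorithm is succinct; and (iii) for arbitrary valid parameters the space is $O(\lnrLBSymbol+\lg n)$, which matches the combined lower bound up to a constant factor.

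For (i), I would split along the four disjuncts. If $\mu\in\mathbb N$ then $\max\set{\floor\mu,1}=\mu=\ceil\mu$ and $\ceil{\mu^{-1}}+1=2=\max\set{\floor{1/\mu},1}+1$, and if $\mu^{-1}\in\mathbb N$ the symmetric identities hold; in both cases the leading term of Lemma~\ref{lem:slidingQsrSpaceProof} is \emph{literally} $\lnrLBSymbol$. If $\mu=\omega(1)$, then $1/\mu<1$ eventually, so both logarithmic factors equal $\lg 2$, while $\max\set{\floor\mu,1}=\floor\mu$ and $\ceil\mu=\floor\mu\,\sFactor$; since $\mu=o(n/\lg n)$ forces $n/\ceil\mu=\omega(1)$, a routine floor manipulation gives $\floor{n/\floor\mu}=\sFactor\cdot\floor{n/\ceil\mu}$. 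If $\mu=o(1)$, then $\max\set{\floor\mu,1}=\ceil\mu=1$, while $\ceil{\mu^{-1}}$ and $\floor{1/\mu}$ tend to infinity and differ by at most one, so $\lg(\ceil{\mu^{-1}}+1)=\sFactor\cdot\lg(\floor{1/\mu}+1)$. In all four cases the leading term is $\sFactor\cdot\lnrLBSymbol$.

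For (ii), the hypothesis $\mu=o(n/\lg n)$ gives $\ceil\mu=o(n/\lg n)$, hence $\floor{n/\ceil\mu}=\omega(\lg n)$; since the logarithmic factor in $\lnrLBSymbol$ is at least $\lg 2=1$, we get $\lnrLBSymbol\ge\floor{n/\ceil\mu}=\omega(\lg n)$, so $O(\lg n)=o(\lnrLBSymbol)$. Combining with (i), the total space is $\sFactor\cdot\lnrLBSymbol$, and because $\lnrLBSymbol=\omega(\lg n)$ it dominates the $\floor{\lg n}$ bound of~\cite{DBLP:conf/swat/Ben-BasatEFK16}, so $\lnrLBSymbol$ is (asymptotically) the true lower bound for the problem --- it is one already for the static special case by Theorem~\ref{thm:largeAlphabetLB} --- and the algorithm is succinct.

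For (iii), for the remaining parameter ranges I would use the crude bounds $\max\set{\floor\mu,1}\ge\ceil\mu/2$ and $\lg(\ceil{\mu^{-1}}+1)=O\bParentheses{\lg(\max\set{\floor{1/\mu},1}+1)}$ (each verified by a one-line split on whether $\mu\ge1$), which turn the leading term of Lemma~\ref{lem:slidingQsrSpaceProof} into $O(\lnrLBSymbol)$; and the additive $O(\lg n)$ costs at most a constant factor over the optimum because of the $\floor{\lg n}$ lower bound of~\cite{DBLP:conf/swat/Ben-BasatEFK16}. Hence the space is always $O(\lnrLBSymbol+\lg n)$, within a constant of optimal. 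The only real work is the bookkeeping in step~(i): one must track how $\floor\cdot$ and $\ceil\cdot$ of $\mu$ and of $\mu^{-1}$ behave near the transition point $\mu=1$ and confirm that the four listed conditions are exactly those under which the gap between the Lemma~\ref{lem:slidingQsrSpaceProof} estimate and $\lnrLBSymbol$ collapses to a $\sFactor$ factor; no idea beyond the rounding-error estimates already used to prove Lemma~\ref{lem:slidingQsrSpaceProof} is required.
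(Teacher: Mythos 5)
Your proof is correct, and it fills in exactly the routine bookkeeping that the paper leaves implicit: the corollary is stated without its own proof and is meant to be read off from Lemma~\ref{lem:slidingQsrSpaceProof} together with the surrounding discussion. Your four-way split on the disjuncts, the $(1+o(1))$ floor/ceiling manipulations for the $\mu=o(1)$ and $\mu=\omega(1)$ cases, and the use of $\mu=o(n/\lg n)$ to make $\floor{n/\ceil\mu}=\omega(\lg n)$ (so the additive $O(\lg n)$ term vanishes relative to $\lnrLBSymbol$) are all sound and are the argument the authors intend. One small but worthwhile observation you surface honestly in part~(iii): what you actually prove for the ``other parameters'' regime is $O(\lnrLBSymbol+\lg n)$, not the corollary's literal $O(\lnrLBSymbol)$ — these differ precisely when $\lnrLBSymbol=o(\lg n)$ (e.g., $\mu=\Omega(n)$) — and you correctly reconcile this by invoking the $\floor{\lg n}$ lower bound of~\cite{DBLP:conf/swat/Ben-BasatEFK16}, which is also what the paper's prose does; the corollary should be read as claiming the space is $O\bigl(\max\{\lnrLBSymbol,\lg n\}\bigr)$, i.e., a constant factor from the combined lower bound.
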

We now state the correctness of our algorithm. 
%The proof is deferred to Appendix~\ref{app:slidingCorrecntess}.
\begin{theorem}\label{thm:slidingCorrecntess}(*)
	Algorithm~\ref{alg:approxSliding} solves \ssA{} while processing elements and answering queries in constant time.
\end{theorem}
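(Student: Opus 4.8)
The plan is to establish two things: (i) that every line of Algorithm~\ref{alg:approxSliding} runs in constant time, and (ii) that the value returned by \textsc{Query}$(i)$ is a valid \ssA{} answer, i.e.\ it lies in the interval $(\suffixsum(i,n)-\lnrErrSymbol,\ \suffixsum(i,n)]$. The timing claim is the easy half: each call to \textsc{Add} does a constant number of arithmetic operations plus at most one call to $\ranker.\textsc{Add}$, and $\ranker$ is the exact suffix-sum structure from the previous subsection, which supports constant-time updates; similarly \textsc{Query} makes at most two calls to $\ranker.\textsc{Query}$ (also constant time) and a constant amount of arithmetic. One should remark that all intermediate quantities fit in $O(1)$ machine words: the rounded inputs $Round_{\nBits}(x)$ use $\nBits=O(\lg n)$ bits, the block values $\newInputLetter$ lie in $\{1,\dots,z\}$ with $z=\floor{\mu^{-1}\nu}$, and $\mathfrak r$ stays bounded (shown below), so nothing overflows a word.

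For correctness I would first set up notation: let $n$ be the true (exact) sum of the last $i$ stream elements, let $o$ be the current offset, and let the window of $i$ elements consist of a (possibly partial) prefix of the current chunk of length $o' \triangleq i - \nu\floor{(i-o)/\nu}$ ... actually cleaner: the last $i$ elements split into the $o$ elements already seen in the current (unfinished) chunk, some number $\numBits=\ceil{(i-o)/\nu}$ of completed chunks recorded in $\ranker$, of which the oldest is only partially inside the window. The key algebraic identity to verify is that at all times $\mathfrak r$ equals the sum of $Round_{\nBits}$ of all elements seen so far, minus $\sensitivity$ times the total value fed into $\ranker$; combined with line~\ref{line:reduceS} this gives $0\le \mathfrak r < \sensitivity$ after each completed chunk and $0\le\mathfrak r \le o\cdot\ell + \sensitivity$ in general, so $\mathfrak r$ is bounded by $O(\lnrErrSymbol)$ — this both bounds the word size and controls the final error. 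Then I would track three sources of discrepancy between the returned estimate and $\suffixsum(i,n)$: the rounding error $0\le x-Round_{\nBits}(x) \le \ell 2^{-\nBits} = \ell\cdot\mu/(n\lg n)$ accumulated over at most $n$ elements (total at most $\mu\ell/\lg n = \lnrErrSymbol/\lg n$); the floor in line~\ref{line:setBit} which loses at most $\sensitivity$ per chunk but is exactly the quantity retained in $\mathfrak r$ and propagated, hence contributes nothing net to the sum of \emph{completed} chunks currently in the window except for the single partial oldest chunk; and the deliberate under-correction terms $-(\sensitivity-1/2)$ and $-\ell\cdot\lastBit\cdot\outBits$ (which removes the portion of the oldest recorded chunk that has already slid out of the window, using $\outBits=\outBitsVal$ as the number of still-in-window positions of that chunk). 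Assembling these, the estimate equals $\suffixsum(i,n)$ minus a nonnegative error that is at most (rounding error) $+$ (chunk-granularity error $\le \sensitivity + \ell\cdot\nu$) $+$ $1/2$; plugging in $\sensitivity=\floor{\lnrErrSymbol(1-1/\lg n)}$ and $\nu=\max\{\floor{\mu(1-1/\lg n)},1\}$ (so $\ell\nu \le \lnrErrSymbol(1-1/\lg n)$ when $\mu\ge 1$, and the separate $\sensitivity$-adjustment absorbs the $\mu<1$ case per the footnote) shows the total error is strictly less than $\lnrErrSymbol$, while the $-1/2$ and the floors guarantee we never overestimate.

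The two sub-cases of \textsc{Query} need to be checked separately but are both instances of the same bookkeeping: when $i\le o$ the whole window lies in the current chunk, $\mathfrak r$ already holds the rounded sum of exactly those $o\ge i$ elements... here one must be slightly careful, since $\mathfrak r$ holds the rounded sum of all elements since $\ranker$ was last fed, which is exactly the current chunk's elements, and the query asks for the last $i\le o$ of them — so strictly the code is returning an estimate for the last $o$, not last $i$; I expect the intended reading is that within a chunk of size $\nu=O(\mu)$ the extra $o-i\le\nu$ elements contribute at most $\ell\nu\le\lnrErrSymbol$, which still needs the $-(\sensitivity-1/2)$ slack and a short argument that this does not overestimate — this reconciliation is, I think, the main obstacle and the place where the constants $\nu$, $\sensitivity$, $z$ were chosen precisely to make the inequalities close. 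The rest is the routine interval arithmetic sketched above, which I would organize as two displayed chains of inequalities (one for the lower bound $>\suffixsum(i,n)-\lnrErrSymbol$, one for the upper bound $\le\suffixsum(i,n)$), invoking Lemma~\ref{lem:slidingQsrSpaceProof} only for the word-size justification.
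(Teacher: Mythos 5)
Your plan follows the paper's own route: you identify the same invariant $0\le\mathfrak r\le\sensitivity-1$ at chunk boundaries, the same propagation identity relating $\mathfrak r$ and the values fed into $\ranker$, the same $\lnrErrSymbol/\lg n$ bound on the accumulated rounding error, and the same three-way decomposition of the error into rounding, chunk granularity, and under-correction. The constant-time claim is handled correctly. So the approach is right.

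The genuine gap is that the final assembly, which is the bulk of the paper's proof, is not carried out. The paper closes the inequalities by a case split: first on $\nu=1$ versus $\nu>1$, and then, for $\nu>1$, on whether $\lastBit=1$ or $\lastBit=0$. The $\lastBit=1$ case is not routine: the lower bound does not come from crude worst-case bounds on the excluded elements, but from the observation that the oldest recorded chunk pushed $\mathfrak r$ past $\sensitivity$ at Line~\ref{line:setBit}, which gives $\mathfrak{r_0}+\sum_{d=1}^{\nu}Round_{\nBits}(x_d)\ge\sensitivity$ and hence a lower bound on $\mathfrak{r_0}+\sum_{d=1}^{\outBits}Round_{\nBits}(x_d)$ that exactly offsets the $-\ell\cdot\outBits$ correction. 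Dually, the $\lastBit=0$ case uses the complementary inequality $\mathfrak{r_0}+\sum_{d=1}^{\nu}Round_{\nBits}(x_d)<\sensitivity$ to get the upper bound. Your sketch gestures at ``the constants were chosen precisely to make the inequalities close,'' but without this case split and without using the information carried by $\lastBit$ about $\mathfrak r$ at the oldest chunk's boundary, the error expression does not close; a naive $\ell\cdot\outBits$ bound on the excluded-element contribution is too lossy on its own. This is the missing idea.

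Separately, your worry about the $i\le o$ branch is not misplaced: the paper dispatches that case in one sentence as ``directly following'' from the invariant on $\mathfrak r_0$ and the within-chunk accumulation, which is unambiguous only when $i=o$. For $i<o$ the returned value $\mathfrak r-(\sensitivity-1/2)$ bounds $S_o$ from below, not $S_i$, and since $S_i\le S_o$ this does not immediately give the required one-sided guarantee; reconciling it needs the kind of explicit argument you anticipate (or a modified return value in that branch). You were right to flag it as an obstacle rather than wave it through, and any complete write-up should treat it with at least the same care as the $\lastBit$ split.
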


\bibliography{ref}
\ifdefined\fullversion
\newpage
\appendix
\section{Proof of Theorem~\ref{thm:ldselect}}\label{app:ldselect}

\begin{theorem*}
Any data structures that supports $\vrankA_1$ or $\dselectA_1$ queries with additive error $\delta$ on a bit-string of length $n$ requires at least $\floor{n/2\delta}\lg{\delta}$ bits. 
\end{theorem*}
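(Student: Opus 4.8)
I would use an incompressibility argument. The plan is to build a family $\mathcal{F}$ of length-$n$ bit-strings with $|\mathcal{F}| = \delta^{\floor{n/2\delta}}$ such that any two distinct members are \emph{separated}: there is a query whose set of admissible answers is disjoint on the two strings. Any data structure that answers $\vrankA_1$ (resp.\ $\dselectA_1$) correctly on all of $\mathcal{F}$ must then assign distinct memory images to distinct members --- otherwise it would return the same answer to the separating query on a conflated pair and err on one of them --- so it uses at least $\lg|\mathcal{F}| = \floor{n/2\delta}\lg\delta$ bits. The whole point is to design $\mathcal{F}$ so that the ``return \emph{any} admissible value/position'' freedom cannot be exploited to collapse two instances, and this forces different families for the two queries.

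\textbf{Family for $\vrankA_1$.} Cut $[1,n]$ into $N \triangleq \floor{n/2\delta}$ length-$2\delta$ blocks (any trailing positions are $0$) and in block $j$ put a single $1$ in its left half, at position $2\delta(j-1)+1+c_j$ with $c_j \in \{0,\dots,\delta-1\}$; this yields $\delta^N$ strings. For $B \ne B'$ with first discrepancy in block $j$ and, say, $c_j < c_j'$, I would take the query $\vrankA_1(i,\cdot,\delta)$ at $i \triangleq 2\delta(j-1)+1+c_j$ (the $j$-th $1$ of $B$). In $B$ the window $B[i-\delta+1\dots i]$ contains \emph{exactly one} $1$ (the one at $i$; the previous $1$ sits at a position $\le 2\delta(j-1)-\delta$, left of the window), so the admissible set is forced to the singleton $\{\rank_1(i,B)\}=\{j\}$; in $B'$ the same window is \emph{empty} (its $j$-th $1$ is past $i$, its $(j{-}1)$-st $1$ is as in $B$), so by the definition's tie clause the admissible answer is the singleton $\{\rank_1(i,B')\}=\{j-1\}$. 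The singletons are disjoint, so the query separates $B$ from $B'$.

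\textbf{Family for $\dselectA_1$.} Here I would encode in run lengths: block $j$ (length $2\delta$) is $1^{a_j}0^{2\delta-a_j}$ with $a_j \in \{1,\dots,\delta\}$, again $\delta^N$ strings, together with one trailing sentinel $1$ (or a suitably modified last block --- see below). With $A_j \triangleq \sum_{k<j}a_k$ the number of $1$'s before block $j$, for $B \ne B'$ first differing in block $j$ with $a_j < a_j'$ I would query $\dselectA_1(A_j+a_j+1,\cdot,\delta)$. In $B'$ the $(A_j{+}a_j{+}1)$-st $1$ still lies in block $j$, at position $\le 2\delta(j-1)+a_j+1$; in $B$ it is the first $1$ of the next block, at position $\ge 2\delta j+1$; the gap is $2\delta-a_j\ge\delta+1$. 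Since a $\dselectA_1$ answer must lie in the half-open length-$\delta$ interval ending at the true $\select_1$ value, the two admissible intervals are disjoint, so the query separates $B$ from $B'$.

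\textbf{Main obstacle.} The conceptual crux is engineering disjoint admissible sets despite the ``any valid answer'' slack --- for $\vrankA_1$ by making the query window hold at most one $1$ (so the answer is the exact rank), and for $\dselectA_1$ by using runs so that a single flipped symbol pushes a later $1$ across a gap exceeding $\delta$. The remaining loose end is purely bookkeeping: the $\dselectA_1$ separating query for $j=N$ needs a $1$ strictly after block $N$; when $2\delta\nmid n$ this is a sentinel $1$ at position $2\delta N+1\le n$, and when $2\delta\mid n$ one instead makes block $N$ equal to $1^{a_N}0^{2\delta-1-a_N}1$ (still $\delta$ choices, still forcing a gap $\ge\delta$ since $a_N\le\delta-1$ in the relevant case). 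After this, the bound follows from the routine ``distinct inputs $\Rightarrow$ distinct states $\Rightarrow$ $\lg|\mathcal{F}|$ bits'' argument; the only real care needed beyond it is checking the $\rank_1$/$\select_1$ offsets in each construction.
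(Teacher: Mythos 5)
Your proposal is correct and follows the same high-level route as the paper: an incompressibility argument over a family of $\delta^{\floor{n/2\delta}}$ bit-strings in which any two members are separated by some query whose sets of admissible answers are disjoint, forcing distinct memory images and hence $\floor{n/2\delta}\lg\delta$ bits.

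The difference is in the family design. The paper builds a \emph{single} family by alternating all-zero spacer blocks with adjustable blocks of the form $1^{k}0^{\ast}$ and reuses it for both $\vrankA_1$ and $\dselectA_1$: the zero spacers do double duty, isolating the query window for the $\vrankA_1$ separation and creating the $>\delta$ position gap for the $\dselectA_1$ separation, at the cost of leaving every other block information-free (as written the paper's exponent really counts only the information-carrying even blocks, i.e.\ roughly $\floor{n/4\delta}$, not $\floor{n/2\delta}$). You instead build \emph{two tailored families}—shifted single-$1$ blocks for $\vrankA_1$, where you exploit the tie clause to pin both admissible sets to singletons $\{j\}$ versus $\{j-1\}$, and run-length blocks $1^{a_j}0^{2\delta-a_j}$ for $\dselectA_1$, where a single mismatch pushes the relevant $1$ across a gap of at least $\delta+1$—and each family packs $\lg\delta$ bits into \emph{every} $2\delta$-block, so you genuinely attain the stated $\floor{n/2\delta}\lg\delta$. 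The price is that you need a separate (short) separation argument per query rather than one argument serving both, plus the boundary fix you already flag for the last block in the $\dselectA_1$ construction; that sentinel patch is correct, and your index bookkeeping checks out.
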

\begin{proof}
We first construct a set $V$ of bit-strings of length $n$ as follows.
We divide each bit-string $B$ into $\floor{n/2\delta}$ blocks $B_1$, $B_2$, \dots $B_{\floor{n/2\delta}}$ 
such that for $1 \le i < \floor{n/2\delta}$, $B_i = B[2\delta(i-1)+1 \dots 2\delta i]$ and 
$B_{\floor{n/2\delta}} = B[2\delta(\floor{n/2\delta}-1)+1 \dots n]$. 
Now for every $1 \le i \le \floor{n/2\delta}$, we set all bits in $B_i$ to $0$ if $i$ is odd.
If $i$ is even, we fill $B_i$ to $k \le \delta$ $1$'s followed $(\delta - k)$ $0$'s.
Thus there's only one choice of blocks $B_i$ (if $i$ is odd), and $\delta$ choices for blocks $B_i$ (if $i$ is even). Hence $|V| = \delta^{\floor{n/2\delta}}$.  	
Now consider two distinct bit-strings $B$ and $B'$ in $V$, and let $i$ be the even index of the leftmost block 
such that $B_{i} \neq B'_{i}$ and without loss of generality, $B_i$ and $B'_{i}$ has $k$ and $k'$ $1$s with $k < k'$ respectively.
Since for such block has $\delta$ zeros on both sides, it is easy to show that there is no value which is the answer of both $\vrankA_1((i-1)\delta+k', B, \delta)$ and $\vrankA_1((i-1)\delta+k', B', \delta)$ queries, and also there is no position in $B$ which is the answer of both $\dselectA_1(\ell, B, \delta)$ and $\vrankA_1(\ell, B', \delta)$ queries, where $\ell$ is number of $1$'s in $B'[1 \dots (i-1)\delta+k']$.
Thus any structure that supports either of these queries must distinguish between 
every element in $S$, and hence $\lg |V| = \floor{n/2\delta}\lg{\delta}$ bits are necessary to answer $\vrankA_1$ and $\dselectA_1$ queries.	
\end{proof}
\section{Proof of Theorem~\ref{thm:dselectLgLgLowerBound}}\label{app:dselectLgLgLowerBound}
\begin{theorem*}
	Any $((n/\delta) \lg^{O(1)}{\delta})$-bit data structure that supports $\dselectA_1$ 
	queries with an additive error $\delta = O(n^c)$, for some constant $0 < c \le 1$ on a 
	bit-string of length $n$ requires $\Omega(\lg{\lg{n}})$ query time.
\end{theorem*}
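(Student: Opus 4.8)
The plan is to reduce the static \emph{predecessor} problem to $\dselectA_1$ and then invoke the known cell-probe lower bounds for predecessor search. I would fix $\delta=\ceil{\sqrt n}$ (so $\delta=\Theta(n^{1/2})$, which has the required form $O(n^c)$ with $c=\tfrac12$), and set $m=\floor{n/(2\delta)}=\Theta(\sqrt n)$ and $u=m\delta=\Theta(n)$. The goal is to show that any $\dselectA_1$ structure on bit-strings of length $\Theta(n)$ that uses $O((n/\delta)\lg^{O(1)}\delta)$ bits and answers queries in time $t$ can be converted into a predecessor structure for a set of $m$ keys from $\{1,\dots,u\}$ using $m\cdot\lg^{O(1)}n$ bits --- i.e.\ $m\cdot\mathrm{polylog}(m)$ cells of width $\Theta(\lg n)$ --- and answering queries in $O(t)$ time. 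Because $u/m=\delta=n^{\Theta(1)}=m^{\Theta(1)}$, this is exactly the ``polynomially large universe, near-linear space'' regime, where the Beame--Fich bound and the optimal trade-offs of Patrascu and Thorup force query time $\Omega(\lg\lg u)=\Omega(\lg\lg n)$, and hence $t=\Omega(\lg\lg n)$.

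For the reduction I would essentially run the block construction behind Theorem~\ref{thm:ubds} in reverse. Given a set $R=\{r_1<\dots<r_m\}\subseteq\{1,\dots,u\}$ (set $r_0:=0$), I would write each gap $r_k-r_{k-1}$ as a sum of at most $\ceil{(r_k-r_{k-1})/\delta}$ pieces, each in $\{1,\dots,\delta\}$, obtaining an expanded array $a_1,\dots,a_{m'}$ with $m'\le 2m$ whose prefix sums form a superset of $R$. I would then build a bit-string $B$ consisting of blocks of size $\delta$ in which only the even-indexed blocks $B_2,B_4,\dots,B_{2m'}$ may carry ones, block $B_{2k}$ holding its $a_k$ ones packed at its front; thus $B$ has length $2m'\delta=O(n)$, and between the last one of $B_{2k}$ and the first one of $B_{2(k+1)}$ there are more than $\delta$ zeros. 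The crucial observation is that if $\select_1(i,B)$ lies in block $B_{2k^\ast}$ then \emph{every} admissible answer $p$ of $\dselectA_1(i,B,\delta)$ satisfies $\ceil{p/\delta}\in\{2k^\ast-1,2k^\ast\}$, so $k^\ast=\ceil{\ceil{p/\delta}/2}$ is recovered exactly regardless of which (possibly adversarial) $p$ is returned --- and $k^\ast$ is precisely $\search(i-1)$ on $a_1,\dots,a_{m'}$, the index of the first prefix whose sum is at least $i$.

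To answer $\predec(q)$ I would issue $\dselectA_1(q+1,B,\delta)$, recover $k^\ast$, and return $\mathrm{Tab}[k^\ast]$, where $\mathrm{Tab}$ is a table, precomputed at construction time, holding for each $k$ the largest element of $R$ among the first $k-1$ expanded prefix sums (the cases $q<r_1$ and $q\ge r_m$ are handled directly). This table has $m'$ entries of $O(\lg u)$ bits, i.e.\ $O((n/\delta)\lg n)$ bits, so the overall predecessor structure still uses $O((n/\delta)\lg^{O(1)}n)=m\cdot\mathrm{polylog}(m)$ bits and runs in $O(t)$ time, which completes the reduction.

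I expect the main difficulties to be twofold. First, one must pin down the parameters so that the reduction lands in the branch of the Patrascu--Thorup trade-off that yields the \emph{clean} $\Omega(\lg\lg n)$ (rather than a weaker $\Omega(\lg\lg n/\lg\lg\lg n)$): with $2^a=\mathrm{polylog}(n)$ (so $a=O(\lg\lg n)$) and $\lg(u/m)=\Theta(\lg n)$, the relevant term of their bound is $\lg\!\big(\lg(u/m)/a\big)=\lg\,\Theta(\lg n/\lg\lg n)=\Theta(\lg\lg n)$, and I would have to check that the other terms (such as $\log_w m$) are $\omega(\lg\lg n)$ in this regime so that the minimum in their statement is genuinely $\Theta(\lg\lg n)$. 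Second, one has to make sure that neither the $\delta$-slack inherent in $\dselectA_1$ nor the possibility of large gaps in $R$ corrupts the reduction; the even/odd block spacing (forcing every admissible answer to determine the exact block) and the gap-splitting together with the lookup table (which at most doubles the length and adds only $O((n/\delta)\lg n)$ bits) are exactly what handle these two points.
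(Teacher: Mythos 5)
Your proposal is essentially the same reduction the paper uses: both reduce static predecessor search to $\dselectA_1$, build a bit-string in which each ``group'' of ones is preceded by at least $\delta$ zeros so that any admissible $\dselectA_1$ answer pins down the containing group exactly, pad the gap sequence so consecutive differences are at most $\delta$ (your gap-splitting is the paper's insertion of dummy elements at multiples of $\delta$), store an $O((n/\delta)\lg n)$-bit auxiliary table mapping group index to the true predecessor, and then invoke the P\v{a}tra\c{s}cu--Thorup cell-probe lower bound in the same ``polynomial universe, near-linear space'' regime. The only cosmetic differences are that you hard-wire $\delta=\Theta(\sqrt n)$ (the paper keeps $\delta=O(n^c)$ general) and use alternating empty/occupied $\delta$-blocks where the paper uses right-justified unary codes in $2\delta$-bit blocks --- two equivalent ways to enforce the $\ge\delta$-zeros spacing.
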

\begin{proof}
We reduce the predecessor search problem to the problem of supporting $\dselectA_1$ queries.
Given a set $S \subseteq \{1, \dots, n \}$, a predecessor query, $\predec(i,S)$, for $1 \le i \le n$, 
returns the largest elements in $S$ that is smaller than $i$.
Let $S \subseteq \{1, \dots, n \}$ be a given set on which we want to support $\predec$ queries, with
$|S| = n/\delta$, where $\delta = O(n^c)$, for some constant $0 < c \le 1$. For this range 
of parameters, Patrascu and Thorup~\cite{Patrascu:2006:TTP:1132516.1132551} showed that 
any data structure that represents $S$ using $O(n \lg^{O(1)} n)$ bits needs $\Omega(\lg\lg n)$ time to support 
$\predec$ queries. We now show that any data structure that supports $\dselectA_1$ queries can
be used to obtain a data structure that supports $\predec$ queries, using asymptotically the same 
space and query time. The theorem immediately follows from this reduction.

Let $S$ be a given set. Let $S' = \{k\delta | 1 \le k \le \floor{n/\delta} \} \cup \{ n \}$.
We call the elements in $S'$ as the {\em dummy elements}. 
%For each element $x \in S'$, we store $\pred(x)$, in an array $D$ of length $O(n/\delta)$. 
Let $S_1 = S \cup S'$, and let $x_1, x_2, \dots, x_\ell = n$
be the elements of $S_1$ in sorted order, for some $n/\delta \le \ell \le 2n/ \delta$ (note that both $S$ and $S'$ have size $n/\delta$).
The dummy elements in $S_1$ ensure that $x_1 \le \delta$, and $x_i - x_{i-1} \le \delta$, for $1 < i \le \ell$.
Now, consider the bit-string $B = B_1 B_2 \dots B_\ell$, where block $B_1 = 0^{2 \delta - x_1} 1^{x_1}$, and
for $1 < i \le \ell$, block $B_i = 0^{2 \delta - x_i + x_{i-1}} 1^{x_i - x_{i-1}}$ (i.e., $B$ encodes the differences between 
successive elements of $S_1$ using fixed-length right-justified unary codes of size $2 \delta$). Note that $B$ contains $x_\ell = n$ 1's, and 
has length $2 \delta \ell \le 2n$.
In addition, we store an array $A$ of length $\ell$ such that $A[i] = \predec(x_i,S)$, which uses $O((n/\delta) \lg n)$ bits.

Suppose that there is a data structure $X$ that uses $s(n,\delta)$ space, and supports $\dselectA_1$ queries on $B$ in $t(n,\delta)$ time. 
To answer the query $\predec(x,S)$, we first perform the $\dselectA_1(x, B, \delta)$ on $X$. Let $B_i$ be the block to which this answer 
belongs. Since each block starts with a sequence of at least $\delta$ zeros, and since $\dselectA_1(x, B, \delta) \le \select_1(x,B)$,
it follows that $x_i \le x < x_{i+1}$. Hence we return $A[x_i]$ as the answer of $\pred(x)$. 
%Thus which takes $t(n,\delta)+O(1)$ time. 
Thus, from the assumption about the data structure $X$, we can obtain a structure that uses $s(n,\delta)+ O((n/\delta)\lg n)$ bits and supports $\predec$
queries in $t(n,\delta)+O(1)$ time. The theorem follows from this reduction, and the predecessor lower bound mentioned above.
\end{proof}

\section{Proof of Theorem~\ref{thm:ubds}}\label{app:ubds}

\begin{theorem*}
For a bit-string $B$ of length $n$, there is a data structure of size $(n/\delta)\lg{\delta}+o((n/\delta)\lg{\delta})$ bits, which supports $\vrankA_1$ queries on $B$ using $O(1)$ time and $\dselectA_1{}$ queries on $B$ using $SPS(n/\delta, n)$ time. 
\end{theorem*}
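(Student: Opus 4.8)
The plan is to mimic the reduction used in the proof of Theorem~\ref{thm:ub1}, but instead of collapsing each block of $\delta$ bits to a single bit recording "does this block contain a $j\delta$-th one", we record the *number* of ones it contains. Concretely, divide $B$ into $\ceil{n/\delta}$ blocks $B_1,\dots,B_{\ceil{n/\delta}}$ with $B_i=B[\delta(i-1)+1\dots\delta i]$ (the last block possibly longer, but still $O(\delta)$ bits), and let $A[i]$ be the number of ones in $B_i$. Each $A[i]$ fits in $\alpha\triangleq\ceil{\lg(\delta+1)}$ bits, so the array $A$ of length $\ceil{n/\delta}$ can be stored in $(n/\delta)\lg\delta+o((n/\delta)\lg\delta)$ bits using the structure of Lemma~\ref{lem:prefix}, which supports $\summ$ in $O(1)$ time and $\search$ in $SPS(n/\delta,\ (n/\delta)2^{\alpha})=SPS(n/\delta,n)$ time. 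This gives the claimed space bound.

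For the $\vrankA_1(i,B,\delta)$ query, I would return $\summ(\ceil{i/\delta}-1)$ on $A$ (treating $\summ(0)=0$): this equals $\rank_1(\delta\floor{i/\delta},B)$ when $i$ is a multiple of $\delta$, and more generally equals the number of ones in $B$ up to the start of the block containing position $i$. One checks this lies strictly above $\rank_1(i-\delta,B)$ (since it counts all ones in $B_1,\dots,B_{\ceil{i/\delta}-1}$, which together cover at least positions $1,\dots,i-\delta$) and is at most $\rank_1(i,B)$; the degenerate case where $\rank_1(i-\delta,B)=\rank_1(i,B)$ forces the block $B_{\ceil{i/\delta}}$ and its predecessor to be all zeros, so the returned value equals $\rank_1(i,B)$ as required. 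This is $O(1)$ time. For $\dselectA_1(i,B,\delta)$, I would use $\search(i-1)$ on $A$ to locate the block $B_j$ containing the $i$-th one of $B$ (i.e., the smallest $j$ with $\summ(j)\ge i$), and then return the position $\delta(j-1)+1$, the first position of block $B_j$ — or, to be a bit less wasteful, $\delta(j-1)+\bigl(i-\summ(j-1)\bigr)$ if one can afford to decode within the block; but since the block boundary already lies within $\delta$ of $\select_1(i,B)$ from the left and does not exceed it, returning $\delta(j-1)+1$ suffices. This costs the $\search$ time, namely $SPS(n/\delta,n)$.

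The main obstacle is the careful verification of the inequalities in the degenerate/boundary cases — in particular making sure the $\vrankA$ convention (that when $\rank_1(i-\delta,B)=\rank_1(i,B)$ we must return exactly $\rank_1(i,B)$) is met, and that the $\dselectA$ answer never overshoots $\select_1(i,B)$. Both reduce to the observation that every block spans exactly $\delta$ positions (except possibly the last, which only helps), so the block boundary preceding position $i$ is within the window $[i-\delta,i]$, and the block boundary preceding $\select_1(i,B)$ is within $[\select_1(i,B)-\delta,\ \select_1(i,B)]$. I would also remark, as in the note after Theorem~\ref{thm:ub1}, that when $\alpha=O(\lg\lg(n/\delta))$ (equivalently $\delta=\mathrm{polylog}(n/\delta)$) Lemma~\ref{lem:prefix} gives $O(1)$ time for $\search$ too, so $\dselectA_1$ is then also constant time; this matches the $SPS$ definition.
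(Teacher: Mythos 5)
Your proposal is essentially the paper's own proof: you build the same block-decomposition, store the per-block one-counts in a searchable partial-sum structure (Lemma~\ref{lem:prefix}), answer $\vrankA_1$ via a $\summ$ query and $\dselectA_1$ via a $\search$ query, and arrive at the same $SPS(n/\delta,n)$ bound. Your query formulas differ from the paper's by an off-by-one at block boundaries ($\summ(\ceil{i/\delta}-1)$ vs.\ $\summ(\floor{i/\delta})$, $\search(i-1)$ vs.\ $\search(i)$, and $\delta(j-1)+1$ vs.\ $(i-1)\delta$), but this is a cosmetic choice of convention and the argument is the same.
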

\begin{proof}
We divide the $B$ into $\ceil{n/\delta}$ blocks $B_1$, $B_2$, \dots $B_{\ceil{n/\delta}}$, defined exactly as in the proof of  Theorem~\ref{thm:ub1}. Now we define an array $C[1 \dots \ceil{n/\delta}]$ of length $\ceil{n/\delta}$ such that for $1 \le i \le \ceil{n/\delta}$, $C[i]$ is the number of $1$'s in $B_i$.
We represent the array $C$ using the structure of Lemma~\ref{lem:prefix}, to support $\summ{}$ and $\search{}$ queries on $C$, using $O((n/\delta)\lg{\delta})$ bits.
One can easily show that $\vrankA_1(j, B, \delta)$ query is same as the answer of $\summ{}(\floor{j/\delta})$ query on $C$, which can be answered in $O(1)$ time by Lemma~\ref{lem:prefix}. Also it is easy to show that 
$\rank_1{}(j-\delta, B) <  \summ{}(\floor{j/\delta}) \le \rank_1{}(j, B)$.
To answer the query $\dselectA_1(j, B, \delta)$, we first find the block $B_i$ in $B$ which contains the position $\select_1(j,B)$, using $i = \search{}(j)$ on $C$, and return $(i-1)\delta$ as answer to $\dselectA_1(j, B, \delta)$. It is easy to see that $\select_1(j,B) - \delta < (i-1) \delta \le \select_1(j,B)$.
\end{proof}
\section{Proof of Theorem~\ref{thm:largeAlphabetLB}}
\begin{theorem*}\label{app:largeAlphabetLB}
Given a multiset $S$ where each element is from the universe $U =\{1, 2, \dots , n\}$ of size $n$, any data structure that supports $\rankA$ queries on $S$ requires at least $\floor{n/\ceil{\delta/\ell}}\lg{(\max{(\floor{\ell/\delta},1)}+1)}$ bits, where $\ell$ is a bound on the  maximum frequency of each element in $S$.
\end{theorem*}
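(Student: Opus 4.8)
The plan is a counting (incompressibility) lower bound. I will construct a large family $\mathcal{F}$ of multisets over $U=\{1,\dots,n\}$, each with every element appearing at most $\ell$ times, such that any two distinct members of $\mathcal{F}$ must be stored differently by \emph{any} data structure supporting $\rankA$: for some index $i$ the set of legal answers to $\rankA(i,\cdot,\delta)$ will be disjoint for the two multisets. The bound is then just $\lg|\mathcal{F}|$.

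To build $\mathcal{F}$, I would set the block size $c=\ceil{\delta/\ell}$ and the per-element ``step'' $d=\ceil{\delta/c}$, observing that $d\le\ell$ because $c\ge\delta/\ell$. Partition the first $c\floor{n/c}$ elements of $U$ into $\floor{n/c}$ consecutive blocks of size $c$ (any leftover elements always get multiplicity $0$, so they never matter). A member of $\mathcal{F}$ is obtained by choosing, independently for each block, a level $j\in\{0,1,\dots,L\}$ with $L=\floor{\ell/d}$, and assigning multiplicity $jd$ to every element of that block; this is legal since $jd\le Ld\le\ell$. Hence $|\mathcal{F}|=(L+1)^{\floor{n/c}}$.

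The correctness core is short. Given $S\neq S'$ in $\mathcal{F}$, let $B$ be the leftmost block on which their levels differ, say level $j$ versus $j'$ with $j>j'$, and let $i$ be the last index of $B$. Since $S$ and $S'$ agree on everything before $B$, $\rank(i,S)-\rank(i,S')$ equals the difference of the two block masses, namely $c(j-j')d\ge cd\ge\delta$. If a value $r$ answered both $\rankA(i,S,\delta)$ and $\rankA(i,S',\delta)$, then $r>\rank(i,S)-\delta\ge\rank(i,S')$ yet $r\le\rank(i,S')$, a contradiction --- exactly the disjoint-interval trick already used in the proofs of Theorems~\ref{thm:lb1} and~\ref{thm:ldselect}. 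So the data structure must distinguish every pair, giving at least $\lg|\mathcal{F}|=\floor{n/c}\lg(L+1)$ bits.

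The only step needing genuine care --- and the place I expect the main obstacle --- is the arithmetic showing $\floor{n/c}\lg(L+1)$ is literally $\floor{n/\ceil{\delta/\ell}}\lg(\max\{\floor{\ell/\delta},1\}+1)$, i.e.\ that $L+1=\max\{\floor{\ell/\delta},1\}+1$. When $\delta\le\ell$ this is immediate: $c=1$, $d=\delta$, and $L=\floor{\ell/\delta}\ge1$. When $\delta>\ell$ one has $c\ge2$ and $\delta/\ell\le c<\delta/\ell+1$, from which I would squeeze $\delta/c$ into the interval $(\ell/2,\ell]$; this forces $d\in(\ell/2,\ell]$, hence $L=\floor{\ell/d}=1=\max\{0,1\}$, while simultaneously guaranteeing that each consecutive block-mass gap is at least $\delta$ (what the correctness argument needs). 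Pinning down these floor/ceiling inequalities is routine but is the step one should not hand-wave.
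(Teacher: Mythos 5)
Your proof is correct and follows essentially the same strategy as the paper's: partition $\{1,\dots,n\}$ into blocks of size $\ceil{\delta/\ell}$, restrict each block to a constant multiplicity from an arithmetic set whose consecutive block-sums differ by at least $\delta$, and use the disjoint-interval argument to conclude that $\rankA$ separates all members of the family. The only cosmetic difference is that you parameterize the levels via $d=\ceil{\delta/c}$ while the paper writes them directly as $\min(\delta k,\ell)$ (in particular using $\{0,\ell\}$ when $\delta>\ell$), but the two families coincide when $\delta\le\ell$ and yield the same count otherwise.
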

\begin{proof}
Note that $S$ can be represented by a sequence $S_1, S_2 \dots S_n$ of size $n$, where $S_i \le \ell$ denotes a frequency of $i$ in $S$. 
Now we first set $\mu = {\delta/\ell}$ and denote 
$I$ as $\{\min{(\delta k, \ell)} | k \in \{0, 1, \dots \max{(\floor{1/\mu}, 1)}\} \subset \{0, 1, \dots \ell\} \}$, 
and denote $\bar{I}$ as $\{\sigma^{\ceil\mu} | \sigma \in I\}$.
Next, consider all inputs that contains a sequence of 
$\floor{n/\ceil{\mu}}$ \textit{blocks} padded by zeros, such that each block is a member of $\bar I$; that is, consider 
$\mathcal I =  \bar I^{\floor{n/\ceil{\mu}}}\cdot 0^{n-(n\mod \ceil{\mu})}$. It is easy to show that every input of $\mathcal{I}$ gives a representation of $S$. We show that every two distinct inputs in $\mathcal I$ must lead to distinct answer of a $\rankA$ query, thereby implying a $\ceil{\lg |\mathcal I|}$ bits lower bound as required. 
Let  two distinct set $S_1$ and $S_2$ be represented by the sequences in $\mathcal{I}$ such as 
$x_1=x_{1,1}x_{1,2}\cdots x_{1,\floor{n/\ceil{\mu}}}0^{n-(n\mod \ceil{\mu})}$ and 
$x_2=x_{2,1}x_{2,2}\cdots x_{1,\floor{n/\ceil{\mu}}}0^{n-(n\mod \ceil{\mu})}$ respectively such that $x_{\alpha,\beta}\in \bar I$ for any $\alpha\in\{1,2\},\beta\in\{1,\ldots,\floor{n/\ceil{\mu}}\}$.
%be two distinct inputs in $\mathcal I$ such that $x_{\alpha,\beta}\in \bar I$ for any $\alpha\in\{1,2\},\beta\in\{1,\ldots,\floor{n/\mu}\}$.
Also let $t$ be a leftmost index such that $x_{1, t} \neq x_{2, t}$. 
Now we consider $\rankA(\ceil{\mu}t, S_1)$ and $\rankA(\ceil{\mu} t, S_2)$ queries. If $\mu \le 1$, then $\floor{n/\ceil{\mu}}=n$ and (due to the definition of $I$) $|x_{1,t}-x_{2,t}|\ge \delta$, which implies that there is no answer which satisfies both $\rankA(\ceil{\mu} t, S_1)$ and $\rankA{}(\ceil{\mu}t, S_2)$ queries.
On the other hand, $\mu > 1$ means that $I = \{0,\ell\}$ and thus either $x_{1,t}=0^{\ceil{\mu}}, x_{2,t}=\ell^{\ceil{\mu}}$ or $x_{1,t}=\ell^{\ceil{\mu}}, x_{2,t}=0^{\ceil{\mu}}$. In either case,
$|(\rankA{}(\ceil{\mu} t, S_1) - \rankA(\ceil{\mu} t, S_2)| \ge \delta$. We established that if two inputs in $\mathcal I$ lead to the same configuration of $\rankA{}$ queries, the error for one of them would be at least $\delta$ while we assumed it is strictly lower.
\end{proof}
\section{\rankA{} and \vselectA{} queries on multiset $S$ when the frequency 
of each elements in $S$ is at most $\ell$}
\label{sec:multisetupper}
\begin{itemize}
\item {\bf Case 1. \bm{$\delta \le \ell$}:}
In this case, we first observe that $|S| \le n \ell$. 
%We construct a data structure 
%similar to the case when $|S|$ known, to support $\rankA$ and $\vselectA$ queries on $S$. 
%Since $m/n \le \ell$ in $B_S$, the space of the data structure is at most
%The space usage of this data structure is at most 
Hence, $B_S$ is a bit-string with $n$ $0$'s and at most $n \ell$ $1$'s,
and $B'_S$ has $n$ $0$'s and at most $n \ell/\delta$ $1$'s.
To support $\rankA$ on $S$, we need to support $\select_0$ on $B'_S$.
We represent the bit-wise complement of $B'_S$ using the structure of Lemma~\ref{lem:RRR}(a), 
which takes $\mathcal{B}(n+\floor{n \ell/\delta}, \floor{n \ell/\delta})+o(n)$ bits
and supports $\select_0$ on $B'_S$ in $O(1)$ time.
Using this structure, we can achieve optimal space usage, and support $\rankA$ queries on $S$ in $O(1)$ time.
%\le n\lg{n(1+m/n\delta)}+o(n+m/\delta) \le n\lg{e(1+\ell/\delta)}+o(n+m/\delta)$ bits. 
%Since the lower bound of the space is $n\lg{n(1+m/n\delta)}$ bits in this case, 
%we use at most $n\lg{e}+o(n+m/\delta)$ bits more than the optimal space, 
Alternatively, we can represent $B'_S$ using the structure of Lemma~\ref{lem:RRR}(b),
which takes $\mathcal{B}(n+\floor{n \ell/\delta}, \floor{n \ell/\delta})+o(n+\floor{n \ell/\delta})$ bits,
and supports $\rank_0$, $\rank_1$, $\select_0$ and $\select_1$ queries on $B'_S$ in $O(1)$ time.
Using this structure, we can support both $\rankA$ and $\vselectA$ queries on $S$ in $O(1)$ time,
while using asymptotically optimal space when $\ell=\Theta(\delta)$.
%but the space-usage is sub-optimal.
%Thus, in this case we can achieve optimal (to within lower order terms) space~\cite{DBLP:journals/corr/Ben-Basat17},
%while supporting $\rankA$ and $\vselectA$ queries on $S$ in constant time.
\item {\bf Case 2. \bm{$\delta > \ell$}:}
In this case, we first set $\mu = \floor{\delta/\ell}$, and define a bit-string
$B'[1 \dots \ceil{n/\mu}]$ of length $\ceil{n/\mu}$ such that $B'[i] = 1$ if and only if 
there exists a $1$ between the positions of the $(i-1)\mu$-th $0$ and the $(\min(i\mu, n))$-th $0$ 
in $B'_S$. Since $\mu\ell \le \delta$, there exists at most a single $1$ between these two positions.
%the next position of $(i-1)\mu$-th $0$ and $i\mu$-th $0$ in $B'_S$. 
Now, using Lemma~\ref{lem:clark}, we construct a $n/\mu+o(n/\mu) = n\ell/\delta+o(n \ell/\delta)$-bit data structure which supports $\rank{}_1$ and $\select{}_1$ queries on $B'$ in constant time. 
Then one can show that $\delta(\rank{}_1(\floor{i/\mu}, B'))+\ell(i \mod \mu)B'[\ceil{i/\mu}]$ is an answer to the query $\rankA(i, S, \delta)$, using an argument similar to the one in the proof of Theorem~\ref{thm:ub1}.
%We can also answer $\vselectA(i, S, \delta)$ query in constant time by returning 
%$\mu(\select{}_1(\floor{i/\delta}, B')-1)+\floor{(i \mod \mu)/\ell}$.
For $\vselectA(i, S, \delta)$ queries, we set $\mu = \floor{\delta/2\ell}$ and construct a same structure as above, using $2n\ell/\delta+o(n \ell/\delta)$ bits.  
Since there are at most $((\mu+(i \mod \mu))\ell < \delta$ elements, 
we can answer $\vselectA$ query in constant time by returning $\mu(\select{}_1(\floor{i/\delta}, B')-1)$.
%Since the lower bound in this case is $\floor{n/\ceil{\delta/\ell}}$ bits, 
Therefore, our data structure supports $\rankA$ queries in constant time with optimal space, and twice the optimal space for supporting both $\rankA$ and $\vselectA$ queries in constant time (note that at least $\floor{n/\ceil{\delta/\ell}}$ bits are necessary in this case).
\end{itemize}
\section{Proof of Theorem~\ref{thm:large}}\label{apx:generalstaticproof}
\begin{theorem*}
Let $A$ be a string of length $n$ over the alphabet $\Sigma=\{1,2, \dots, \sigma\}$. Then for any $1 \le j \le \sigma$, 
one can support $\rankA{}_j$ and $\vselectA{}_j$ queries in $O(\lg{\lg{\sigma}})$ time
using  $2n/\delta\lg{(\sigma+1)}+o((n/\delta)\lg{(\sigma+1)})$ bits.
%+o((n/\delta)\lg{(\sigma+1)})$ bits.
\end{theorem*}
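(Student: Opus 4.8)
The plan is to imitate, for a general alphabet, the reduction behind Theorem~\ref{thm:ub1}: compress $A$ into a single much shorter string $A'$ over a slightly enlarged alphabet, store $A'$ with the structure of Lemma~\ref{lem:gmr}, and answer every approximate query on $A$ by $O(1)$ exact $\rank/\select$ queries on $A'$ plus a cheap within-block correction.

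\textbf{Construction.} First I would cut $A$ into $\ceil{n/\delta}$ consecutive blocks $A_1,\dots,A_{\ceil{n/\delta}}$, each of length $\delta$ except possibly the last (length $\le\delta$), exactly as in the proof of Theorem~\ref{thm:ub1}. Call a position $p$ a \emph{$j$-landmark} if $A[p]$ is the $(k\delta)$-th occurrence of $j$ in $A$ for some $k\ge 1$. Since each symbol occurs at most $\delta$ times inside a block, a single block contains at most one $j$-landmark for each fixed $j$ (though possibly landmarks of many different symbols), and the total number of landmarks is $\sum_j\floor{c_j/\delta}\le\floor{n/\delta}$, where $c_j$ is the number of occurrences of $j$ in $A$. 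Let $A'$ be the concatenation, over $i=1,\dots,\ceil{n/\delta}$, of the symbols sitting at the landmarks inside $A_i$ (listed left to right), each such group followed by a fresh separator symbol $\sigma+1$. Then $A'$ has length at most $\floor{n/\delta}+\ceil{n/\delta}\le 2\ceil{n/\delta}$ over an alphabet of size $\sigma+1$, so by Lemma~\ref{lem:gmr} it can be stored in $2n/\delta\lg(\sigma+1)+o((n/\delta)\lg(\sigma+1))$ bits while supporting $\rank_a$ on $A'$ in $O(\lg\lg\sigma)$ time and $\select_a$ on $A'$ in $O(1)$ time for every $a\in\{1,\dots,\sigma+1\}$.

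\textbf{Answering the queries.} For $\rankA{}_j(i,A,\delta)$ I would set $q=\floor{i/\delta}$; then $s=\select_{\sigma+1}(q,A')$ is the position of the $q$-th separator, and $m_j:=\rank_j(s,A')$ is exactly the number of $j$-landmarks lying in $A_1,\dots,A_q$, whence $\delta m_j\le\rank_j(\delta q,A)<\delta(m_j+1)$. As in Theorem~\ref{thm:ub1} I would additionally test whether block $A_{q+1}$ contains a $j$-landmark --- i.e.\ whether $\rank_j(\select_{\sigma+1}(q+1,A'),A')>m_j$, an indicator $b_j\in\{0,1\}$ --- and return $\delta m_j+(i\bmod\delta)\,b_j$; checking that this value lies in $(\rank_j(i,A)-\delta,\rank_j(i,A)]$ is the same partial-block case analysis carried out in the proof of Theorem~\ref{thm:ub1}. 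For $\vselectA{}_j(i,A,\delta)$ I would dually compute $\select_j(\floor{i/\delta},A')$, recover the index $\beta$ of the block containing that occurrence via $\rank_{\sigma+1}(\cdot,A')$, and return a suitable position inside block $\beta$ (of the shape $\delta(\beta-1)+(i\bmod\delta)+O(1)$), justified exactly as the $\vselectA_1$ formula in Theorem~\ref{thm:ub1}. Each query uses one $\select$ ($O(1)$ time) and one or two $\rank$ calls on $A'$ ($O(\lg\lg\sigma)$ time each), giving $O(\lg\lg\sigma)$ in total, and the boundary cases ($i\le\delta$, queries past the end of $A$, the short final block) are handled as there.

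\textbf{Main obstacle.} The one genuinely delicate point is re-running the within-block and boundary arguments of Theorem~\ref{thm:ub1} symbol by symbol: one must verify that recording merely the \emph{presence} of a $j$-landmark in the current block (not its offset) still pins the returned value to the required half-open interval, and that interleaving the separators with landmarks of other symbols causes no interference with the $\rank_j$/$\select_j$ counts on $A'$. I expect this to go through essentially verbatim, since the correction term depends only on $b_j$ and $i\bmod\delta$ exactly as in the bit-string case; everything else is routine bookkeeping.
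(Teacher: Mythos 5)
Your proposal follows essentially the same route as the paper: block $A$ into $\ceil{n/\delta}$ pieces, keep only the $k\delta$-th occurrences of each symbol as ``landmarks,'' interleave a fresh separator symbol between blocks, store the resulting length-$\le 2\ceil{n/\delta}$ string with Lemma~\ref{lem:gmr}, and answer each approximate query by $O(1)$ $\rank$/$\select$ calls on it plus the $(i\bmod\delta)\cdot(\text{block indicator})$ correction from Theorem~\ref{thm:ub1}. The only (minor, cosmetic) difference is that the paper introduces an intermediate separator-padded string $A'$ before subsampling to $A''$, and your explicit computation of the block indicator $b_j$ via two $\rank$/$\select$ calls on the stored string is a clean way of expressing what the paper denotes more tersely by $\kappa_j(\cdot)$.
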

\begin{proof}
We first 
%set $\delta' = \floor{\delta/2}$ and 
divide the string $A$ into $\ceil{n/\delta}$ 
blocks $A_1$, $A_2$ \dots $A_{\ceil{n/\delta}}$ such that for $1 \le i < \ceil{n/\delta}$, 
$A_i = A[\delta(i-1)+1 \dots \delta i]$ and $A_{\ceil{n/\delta}} = A[\delta(\ceil{n/\delta}-1)+1 \dots n]$.
% (as in the proof of Theorem~\ref{thm:ub1}). 
Then we construct a new string $A' = A_1\$A_2\$ \dots \$A_{\ceil{n/\delta}}\$$ of length $n+\ceil{n/\delta}$,
%which defined as $A_1\$A_2\$ \dots \$A_{\ceil{n/\delta}}\$$, 
where $\$$ is a symbol not in $\Sigma$.
% to indicate end positions of each block.
Now we construct yet another string $A''$ of length at most $\floor{n/\delta}+\ceil{n/\delta}$, 
which is a subsequence of $A'$, obtained by only keeping every $i \delta$-th occurrence of all 
the symbols from $\Sigma$, for $1 \dots i \le \floor{n/\delta}$ in $A'$, and also all the 
occurrences of $\$$ in $A'$, while removing all the other characters in $A'$.
%$i\delta$-th $1,2, \dots \sigma$ from $A'$, for $1 \dots i \le \floor{n/\delta'}$.
We then represent $A''$ using the structure of Lemma~\ref{lem:gmr}, which takes $(2n/\delta)\lg{(\sigma+1)}+o((n/\delta)\lg{(\sigma+1)})$ bits, and supports $\rank{}$ and $\select{}$ queries on $A''$ in $O(\lg{\lg{(\sigma+1)}})$ and $O(1)$ time, respectively. 

For answering the $\rankA{}_j(i, A, \delta)$ query, we first compute the position, $b_i$, of the $\floor{i/\delta}$-th $\$$ in $A''$, 
%we first compute the block index $b_i$ such that $b_i$ is the index of block in $A$ which is immediately left to the block that contains a position $i$.
in constant time, using $b_i = \select{}_{\$}(\floor{i/\delta}, A'')$.
%, we can compute such $b_i$ in constant time. 
Then by an argument similar to the one in the proof of Theorem~\ref{thm:ub1}, one can 
show that $\delta\rank{}_{j}(b_i, A'')+(i \mod \delta)\kappa_{j}(A[\ceil{i/\delta}])$ gives an answer of the $\rankA{}_j(i, A, \delta)$ query, where $\kappa_{j}(A[i])$ is an indicator function which defined as $\kappa_{j}(A[i])= 1$ if $A[i] = j$, and 0 otherwise.
Thus, $\rankA{}_j(i, A, \delta)$ query can be answered in $O(\lg{\lg{\sigma}})$ time.
Similarly, it is easy to see that we can answer the $\vselectA{}_j(i, A, \delta)$ query in 
$O(\lg{\lg{\sigma}})$ time by returning $\delta\rank_{\$}(\select_{j}(\floor{i/\delta}-1, A''), A'')+(i \mod d)$.
\end{proof}
\section{Succinct data structure for answering \suffixsum{} and \iss{} queries}
\label{subsec:nonapprox}
In this section, we describe an $n+o(n)$-bit data structure for answering $\suffixsum{}$ and $\iss{}$ queries on a binary steam in constant time while supporting constant time updates.
Our data structure is based on the data structure of Lemma~\ref{lem:clark} for answering $\rank{}$ and $\select{}$ queries on a bit-string. 
We first consider the stream as the stream of \textit{frames}, 
which is $n$-bit consecutive elements of the stream.
Our main goal is to maintain the size of the data structure at most $n+o(n)$ bits while 
answering $\suffixsum{}$ and $\iss{}$ queries on the stream whose answer (or range) covers both current and previous frames.
The overall idea for achieving the goal is as follows. 
When the element of new frame arrives we construct a data structure of Lemma~\ref{lem:clark} over the new frame,  
which replaces the data structure constructed over the oldest element in the previous frame.
%the data structure constructed on the previous frame is replaced from the oldest part. 
%
Now we describe the details as follows. 
We first store the last $n$ elements of the stream into a
circular array $\mathcal{W}[1 \dots n]$ of size $n$
such that $\mathcal{W}[i]$ stores the $i$-th leftmost element in the frame.
Also, for $1 \le t \le n$, let $f_t$ (resp., $f'_t$) be the $t$-th arrived element in the current (resp., previous) frame 
%and $f'_t$ be the $t$-th element of the previous frame. Note that for any $f_t$ and $f_{t'}$, $t < t'$ if and only if $f_t$ is arrived earlier than $f_{t'}$.
\\\\
\noindent\textbf{i) $\suffixsum{} (i, n)$ queries :} 
For answering $\suffixsum{}$ queries,
% it is enough to construct $C$ and $SC$
%corresponding to last $n$ elements in the steam.
%To store such information, 
We divide a frame into $\ceil{n/\lg^2{n}}$ blocks of size $\lg^2{n}$, 
and divide each block again into $2\lg{n}$ sub-blocks of size $\lg{n}/2$. 
At the end of the each block we store the number of 1's from the beginning of the current frame into an array $\mathcal{C}$ of size $\ceil{n/\lg^2{n}}$. 
Similarly at the end of the each sub-block
, we store the number of 1's from the beginning of the current block into an array $\mathcal{SC}$ of size $\ceil{2n/\lg{n}}$.
%For each position of 
%Both arrays can be constructed by 
To update both arrays in constant time, we maintain two counters $c$ and $sc$, which count the number of 1's in the current frame and the current block respectively.
Also, we construct a look-up table $T$ such that for any string $s \le \lg{n}/2$ 
and an index $1 \le i \le |s|$, $T[s][i]$ stores the number of 1's in the suffix of $s$ of size $i$.

Now we describe how to update the data structure in constant time when $f_t$ arrives from the stream.
% where $((t \mod n)+1) = i$.
We first update $\mathcal{W}[t]$ to $f_t$, increase $c$ and $sc$ by 1 if $f_t = 1$, and update auxiliary structures as follows.
%consider three cases as follows based on the $t$.
\begin{itemize}
\item {Case 1. ${(t \mod \lg{n}/2}) = 0$ :}
Since $t$ is the rightmost position in the sub-block, we set $\mathcal{SC}[2t/\lg{n}] = sc$. 
\item {Case 2. ${(t \mod \lg^2{n}}) = 0$ :}
Since $t$ is the rightmost position in the block,
we set $\mathcal{SC}[\ceil{2t/\lg{n}}] = sc$, $\mathcal{C}[t/\lg^2{n}] = c$, and reset $sc$ to zero. 
\item {Case 3. $t = n$ :}
Since $t$ is the rightmost position in the current frame, 
we set $\mathcal{SC}[\ceil{2t/\lg{n}}] = sc$, $\mathcal{C}[\ceil{t/\lg^2{n}}] = c$, 
and reset $sc$ and $c$ to zero. 
\end{itemize}

By the procedure described above, it is clear that 
whenever the new element arrives from the stream, 
we can update the data structure in constant time. 
Now we consider how to answer $\suffixsum{} (i, n)$ query after $f_t$ arrives.
If $i \le t$, it is enough to count the number of 1's in the current frame. 
In this case, we first count the number of 1's
in $f_{i'}, \dots f_t$
in constant time by returning $c - (\mathcal{C}[\floor{(t-i+1)/\lg^2{n}}]+\mathcal{SC}[\ceil{2(t-i+1)/\lg{n}}])$
, where $i'$ is the rightmost position of the sub-block which contains $f_{t-i+1}$. 
Also the number of 1's in $f_i \dots f_{i'}$ is
$T[f_{i'-\lg{n}/2} \dots f_{i'}][i'-i+1]$, which can be answered in constant time.
By adding these two values, we can answer $\suffixsum{} (i, n)$ query in constant time.
If $i > t$, the query range covers both current and previous frame. 
In this case, the answer of $\suffixsum{} (i, n)$ query is $c$+(number of 1's in $f'_{n-(i-t-1)} \dots f'_n$), 
which can be answered in constant time by using a similar argument as above.
Finally for space usage, we can store an array $\mathcal{W}$ and counters using $n+O(\lg{n}) = n+o(n)$ bits. 
Also we can store $\mathcal{C}$ using $O(n\lg{n}/ \lg^2{n}) = o(n)$ bits, $\mathcal{SC}$ using $O(n\lg{\lg{n}}/ \lg{n}) = o(n)$ bits, and $T$ using $O(2^{\lg{n}/2}\lg{n}\lg{\lg{n}}) = o(n)$ bits.
Therefore, the total space of the data structure is $n+o(n)$ bits.
\\\\
\noindent\textbf{ii)  $\iss{} (i, n)$ queries :} 
For answering $\iss{}$ queries, 
%For frame $f = f_1 \dots f_n$, 
we first mark  the positions of 
$f_1$, $f_n$, and every $\lg{n}\lg\lg{n}$-th 1's 
%(let these positions be \textit{marked} positions) 
in the frame $f = f_1 \dots f_n$ and define a block of $f$ as a sub-string between two marked positions.
%, except to the first and the last block.
For block $C = c_1 \dots c_{|C|}$ in $f$, if the size of $C$ is greater than $\lg^2{n}(\lg\lg{n})^2$,  
we store all the positions of 1's in $C$
into an array $\mathcal{C}$ of size $n/\lg^2{n}(\lg\lg{n})^2 \times \lg{n}\lg\lg{n} = n/\lg{n}\lg\lg{n}$.
If the size of $C$ is less than $\lg^2{n}(\lg\lg{n})^2$,
we mark the positions $c_1$, $c_{|C|}$, and every $(\lg\lg{n})^2$-th 1's in $C$ and
define a sub-block $SC$ of $C$ as a sub-string between two marked positions in $C$.
Now for each sub-block $SC$, if the size of $SC$ is greater than $(\lg\lg{n})^4$,  
we store all the positions of 1's in $SC$
into an array $\mathcal{SC}$ of size $n/(\lg\lg{n})^4 \times (\lg\lg{n})^2 = n/(\lg\lg{n})^2$.
If not, we answer $\iss{}$ queries in $SC$ using a look-up table $T'$.
For all possible bit-string $s$ of size $(\lg\lg{n})^4$ and $i \le (\lg\lg{n})^4$, 
$T'[s][i]$ stores a position of the $i$-th rightmost 1 in $s$.
Also all the marked positions in frames (resp., blocks) are stored into an array 
$M$(resp., $M_s$) of size $n/\lg{n}\lg\lg{n}$ (resp., $n/(\lg\lg{n})^2$).
%Note that for each element $M[m]$ we store some pointers as follows, 

If $M[m]-M[m-1] > \lg^2{n}(\lg\lg{n})^2$ (In the rest of this section, let $A[i]$ imply $A[(i \mod |A|)+1]$), 
we store a pointer in $M[m]$ which indicates the position of $\mathcal{C}$ 
that contains the position of the first 1 in $f_{M[m-1]+1} \dots f_{M[m]}$.
%$C$, we store a pointer which refers to the position of $M$ that contains the rightmost position of %$C$, and if the size of the $C$ is greater than $\lg^2{n}\lg{\lg^{n}}$, 
%we store an another pointer which refers to the position of $\mathcal{C}$ 
%that contains the position of the first 1 in $C$.
If not, we store a pointer in $M_s$ 
which indicates the first marked position  in $f_{M[m-1]+1} \dots f_{M[m]}$.
%, $\mathcal{C}$, and $M_s$. 
Similarly for each element in $M_s[m_s]$, we maintain a pointer which refers to the appropriate position of $\mathcal{SC}$ if
$M_s[m_s] - M_s[m_s-1] \le (\lg\lg{n})^4$.
We also maintain four indices $c$, $sc$, $m$, and $m_s$ such that $\mathcal{C}[c]$, $\mathcal{SC}[sc]$, $M[m]$, and $M_s[m_s]$ are the last-updated values respectively, and $\mu$ which stores the number of 1's in the current frame.
Finally for $c$, $sc$, and $m_s$, we maintain their copies $c'$, $sc'$, and $m_s'$ respectively, which are initially identical to their original values.
%for supporting updates in constant time. 
%$\mathcal{C}$, $c$, $\mathcal{SC}$, $sc$, $M_s$, and $m_s$, we their copies  
%$\mathcal{C'}$, $c'$, $\mathcal{SC'}$, $sc'$, $M_s'$, and $m_s'$ respectively
%for supporting fast updates of theses arrays and positions.
%Note that initially the copy of the above arrays and positions are identical to their original
%arrays and positions respectively.

Now we describe how to update the data structure in constant time when $f_t$ arrives from the stream.
% and $((t \mod n)+1) = i$.
We first update $\mathcal{W}[i]$ to $f_t$ and
if $f_t  = 1$, we i) increase $\mu$ by 1,
ii) update $c'$ and $sc'$ to be $(c'+1) \mod n/\lg{n}\lg\lg{n}$ and $(sc'+1) \mod n/(\lg\lg{n})^2$ respectively, 
and iii) set $\mathcal{C}[c'] = t$ and $\mathcal{SC}[sc'] = t - M[m]+1$.
Next based on the $c'$ and $sc'$, we update auxiliary structures as follows.

\begin{itemize}
\item {Case 1. $((sc' - sc) \mod (\lg\lg{n})^2) = 0$ :} 
We first increase $m'_s$ by 1 (in modulo $n/(\lg\lg{n})^2$) and set $M_s[m'_s] = t-M[m]+1$. 
%Now let $d$ be $t - M_s[m_s-1]+1$.
If $t - M_s[m_s-1]+1 < (\lg\lg{n})^4$, 
we reset $sc'$ to $sc$ since we do not store the position of 1's explicitly in this sub-block.
Also, if $t-M[m]+1 > \lg^2{n}(\lg\lg{n})^2$, we reset both $sc'$ and $m'_s$ to $sc$ and $m_s$
respectively, which implies there is no marked position in the block that contains $f_t$.
Otherwise, we store a pointer $M_s[m'_s]$ to $\mathcal{SC}[(sc'-(\lg\lg{n})^2+1)]$.
\item {Case 2. $((c'-c) \mod \lg{n}\lg{\lg{n}}) = 0$ or $t=n$ :}
We first increase $m$ and $m_s$ by 1 (in modulo $n/(\lg\lg{n})^2$ and $n/\lg{n}\lg{\lg{n}}$ respectively), 
and set $M[m] = t$ and $M_s[m_s] = t-M[m-1 ]+1$.
%Now let $d'$ be $ i - M[m-1]$.
If $t-M[m-1 ]+1  > \lg^2{n}(\lg\lg{n})^2$,
we reset both $sc'$ and $m'_s$ to $sc$ and $m_s$ respectively, and 
store a pointer $M[m]$ to $\mathcal{C}[c'-\lg{n}\lg\lg{n}+1]$.
If not, we first reset $c'$ to $c$ and store a pointer  $M[m]$ to $M_s[m_s - \lg{n}/\lg{\lg{n}}+1]$ and consider two cases as
%and check whether $i - M_s[m_s-1]$ is greater than $(\lg\lg{n})^4$ or not.
i) store a pointer $M_s[m_s]$ to $\mathcal{SC}[sc'-(\lg\lg{n})^2+1]$
if $i - M_s[m_s-1]+1>(\lg\lg{n})^4$, and ii) reset $sc'$ to $sc$ otherwise.
Finally, we update $c$, $sc$, and $m_s$ to be $c'$ $cs'$, and $m'_s$ respectively.
In addition to that when $t=n$, we copy $m$ and $\mu$ into 
another indices $ind$ and $total$ respectively using $\lg{n}$ bits, and reset $\mu = 0$.
\end{itemize}

By the procedure described above, it is clear that 
whenever the new element arrives from the stream, 
we can update the data structure in constant time. 
Now we describe how to answer $\iss{} (i, n)$ query after $f_t$ arrives from the stream. 
We first check whether $i \le \mu$ or not. 
If $i \le \mu$, the answer is in the current frame and we consider the following cases.
\begin{itemize}
\item{$i \le c'-c$ :}
In this case, we return $n-\mathcal{C}[c'-i+1]$.
\item{$i > c'-c$ :}
In this case, let $i' = (m-\ceil{(i-(c'-c))/\lg{n}\lg{\lg{n}}}+1) \mod n/\lg{n}\lg\lg{n}$. 
If the pointer in  $M[i']$ indicates $\mathcal{C}[i'']$, return
$n-\mathcal{C}[i''+d]$, where $d = \lg{n}\lg{\lg{n}}-(i-i'\lg{n}\lg{\lg{n}})$.
If the pointer in $M[i']$ indicates $M_s[i'_s]$, let
$j = (i'_s+\floor{ d/(\lg\lg{n})^2}) \mod n/(\lg\lg{n})^2$.
If the pointer $M_s[j]$ indicates $\mathcal{SC}[j']$, return
$n-(M[i']+\mathcal{SC}[j'+(i-(\lg\lg{n})^2\floor{i/(\lg\lg{n})^2})])$, 
and otherwise return $n-(M[i']+M_s[j-1]+T'[s][d'])$, where
$s=f'_{M[i']+M_s[j-1]} \dots f'_{M[i']+M_s[j]-1}$ and $d' = (\lg\lg{n})^2-(\lg\lg{n})^2\floor{(i-(c'-c))/(\lg\lg{n})^2}$.
\end{itemize}

If $i > \mu$, the answer of $\iss{} (i, n)$ query is in the position of the previous frame.
In this case, the answer of $\iss{} (i, n)$ query is same as the 
position of $(i-\mu)$-th rightmost one
in the previous frame, which can be answered by the similar argument as the above procedure using $ind$ and $total$.
%(only the difference is we first check whether $i-m$ is less than $total$ or not instead of $c'-c$).
Therefore, we can answer $\iss{} (i, n)$ queries in constant time. 
For space usage, $\mathcal{C}$ and $M$ takes $n\lg{n}/\lg{n}\lg{\lg{n}} = o(n)$ bits, and 
$\mathcal{SC}$ and $M_s$ takes $n\lg{\lg{n}}/(\lg\lg{n})^2 = o(n)$ bits.
Also we can store $T'$ using $O(2^{(\lg\lg{n})^4}(\lg\lg{n})^4\lg{\lg{\lg{n}}}) = o(n)$ bits, and 
it is clear that we can store all other counters and pointers using at most
$o(n)$ bits. Therefore, we can answer  $\iss{} (i, n)$ queries in constant time, using $n+o(n)$ bits of space.

\section{Proof of Lemma~\ref{lem:slidingQsrSpaceProof}}\label{apx:slidingQsrSpaceProof}
\begin{lemma*}
	Algorithm~\ref{alg:approxSliding} requires $\sFactor\cdot\floor{n/\max{(\floor{\mu},1)}}\cdot\log\big({\ceil{\mu^{-1}} + 1}\big) + O\parentheses{\log n}$~bits.
\end{lemma*}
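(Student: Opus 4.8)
The plan is to add up the space used by the individual structures that Algorithm~\ref{alg:approxSliding} maintains and then collapse the result using the definitions $\nu=\max\set{\floor{\mu\cdotpa{1-1/\lg n}},1}$, $s=\UBnBlocks$, $z=\floor{\mu^{-1}\nu}$, $\sensitivity=\floor{\lnrErrSymbol\cdotpa{1-1/\lg n}}$ and $\nBits=\ceil{\logp{n/\mu}+\lg\lg n}$. Only \ranker{} can be large: it is an instance of the exact suffix‑sum data structure of the previous subsection, built for window size $s$ over a stream of integers bounded by $z$, so that construction gives it size $s\cdot\logp{z+1}\cdot\sFactor$ bits. The $\sFactor$ there already swallows its internal lookup tables, so no further accounting of \ranker{}'s internals is needed.

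Next I would check that every other quantity fits in $\Op{\lg n}$ bits. The offset $o$ ranges over $\set{0,\dots,\nu-1}$ with $\nu\le n$. The accumulator $\mathfrak{r}$ is reset at the end of every chunk (line~\ref{line:reduceS}) to $\mathfrak{r}\bmod\sensitivity<\sensitivity\le\lnrErrSymbol$, and between two resets it grows by at most $\ell$ for each of at most $\nu$ inserted elements, so $\mathfrak{r}<\sensitivity+\nu\ell$; since the $\lnrErrSymbol$‑additive guarantee is vacuous once $\lnrErrSymbol>\ell n$ (one may then always answer $0$), we may assume $\mu\le n$, and since $\mathfrak{r}$ — like every $Round_{\nBits}(x)$ — is an integer multiple of $2^{-\nBits}\ell$, storing it exactly costs $\nBits+\Op{\lg\mu}=\Op{\lg n}$ bits. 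The remaining fixed scalars, including $\nBits$ itself, are plainly $\Op{\lg n}$ bits. Hence the non‑\ranker{} part contributes only $\Op{\lg n}$.

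The main step is then to show that $s\cdot\logp{z+1}=\sFactor\cdot\floor{n/\UBnItemsPerBlock}\cdot\logp{\ceil{\mu^{-1}}+1}+\Op{\lg n}$, which I would prove by splitting on the two branches of $\nu$. If $\floor{\mu\cdotpa{1-1/\lg n}}\ge 1$ then $\mu\ge 1$, so $\UBnItemsPerBlock=\floor{\mu}$ and $\ceil{\mu^{-1}}=1$; from $\mu\cdotpa{1-1/\lg n}-1<\nu\le\floor{\mu}$ one gets $s=\UBnBlocks=\sFactor\cdot\floor{n/\floor{\mu}}$ (the ratio $\floor{\mu}/\nu$ being $1+o(1)$ for $\mu=\omega(1)$), while $\mu^{-1}\nu\le 1$ keeps each value stored by \ranker{} bounded by a constant, so $\logp{z+1}=\Thetap{1}=\Thetap{\logp{\ceil{\mu^{-1}}+1}}$. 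If instead $\floor{\mu\cdotpa{1-1/\lg n}}=0$ then $\nu=1$, $s=n+\Op{1}$, $\UBnItemsPerBlock\in\set{1,2}$ and $z=\floor{\mu^{-1}}$; when $\mu=o(1)$ we have $\mu^{-1}=\omega(1)$, so $\ceil{\mu^{-1}}=\sFactor\cdot\mu^{-1}$ and $\floor{\mu^{-1}}=\sNegFactor\cdot\mu^{-1}$, whence $\logp{z+1}=\sFactor\cdot\logp{\ceil{\mu^{-1}}+1}$ and the product equals $\sFactor\cdot n\cdot\logp{\ceil{\mu^{-1}}+1}$. The leftover bounded‑$\mu$ sub‑cases (in particular $\mu\in\mathbb N$ or $\mu^{-1}\in\mathbb N$, as in the corollary) are closed by a direct substitution in which all floors and ceilings differ by at most additive constants, together with the $\floor{\lg n}$ lower bound of~\cite{DBLP:conf/swat/Ben-BasatEFK16}; this keeps the total within the stated $\sFactor$/$\Op{\lg n}$ envelope (and inside $\Op{\lnrLBSymbol}$ in general). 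Summing the three contributions yields the lemma.

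The part I expect to be delicate is this last step: two independent ``$(1-1/\lg n)$'' relaxations — one buried in $\nu$, one in $\sensitivity$ — together with the ceilings defining $s$ and the floor defining $z$ all have to be shown to dissolve into a single $\sFactor$, and one must be careful that $z=\floor{\mu^{-1}\nu}$, up to an additive constant, is a legitimate alphabet bound for \ranker{} on both branches (the precise per‑chunk bound on $\newInputLetter$ being part of the correctness argument of Theorem~\ref{thm:slidingCorrecntess}, which one can invoke here). By contrast, the space bookkeeping in the first two steps is routine once the standard convention $\ell=n^{\Op{1}}$ (so $\lg\ell=\Op{\lg n}$) is fixed.
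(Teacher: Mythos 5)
Your decomposition into the three stored variables --- \ranker{} (cost $\sFactor\cdot s\logp{z+1}$), $\mathfrak r$, and $o$ --- followed by an algebraic collapse in terms of $\nu$ and $\mu$ is exactly the paper's route; the only real difference is that you spell out the case split on the two branches of $\nu=\max(\floor{\mu\cdot\sNegFactor},1)$ that the paper compresses into a single closing sentence. One point of substance in your favour: your bound on $\mathfrak r$ is the correct one --- the paper charges $\mathfrak r$ as $O(\mathfrak b\log\nu)$, which as a standalone expression can exceed $\Theta(\lg n)$ (consider $\mu=\Theta(\sqrt n)$ with $\nu\approx\mu$, giving $\Theta(\lg^2 n)$ and overflowing the stated $+O(\lg n)$ additive term), whereas your granularity argument (multiple of $2^{-\nBits}\ell$, magnitude $\le\sensitivity+\nu\ell$) gives $\nBits+O(\lg\mu)=O(\lg n)$ bits, which is what the lemma actually requires; the paper presumably intended $O(\mathfrak b+\log\nu)$.
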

\begin{proof}
The algorithm utilizes three variables: $\ranker$ that requires $\sFactor\cdot s\logp{z+1}$ bits, $\mathfrak{r}$ that uses $O(\mathfrak{b}\log\nu)$ space, and $o$ that is allocated with $\ceil{\log n}$ bits. 
Recall that $s= \UBnBlocks$ is the number of blocks that can overlap with the maximal $n$-sized window and $z\triangleq\floor{\mu^{-1}\nu}$ is a bound on $\newInputLetter$.
Overall, the number of bits used by our construction is
\begin{align*}
&\sFactor\cdot s\logp{z+1} + O(\mathfrak{b}\log\nu) + \ceil{\log n} \\
=& \sFactor\cdot \UBnBlocks\logp{\floor{\mu^{-1}\nu+1}+1} + O({\ceil{\logp{n/\mu}+\log\log n}}\log\nu) + O\parentheses{\log n}.
\end{align*}
Since $\nu=\max{(\floor{\mu\cdot\sNegFactor},1)}$, we get the desired bound.
%First, consider the case where $\nu=1$ and our memory is thus
%$$ \sFactor\cdot n\logf{\mu^{-1}+2} + O(\log n)=\sFactor\lnrLBSymbol+O(\log n).$$
%Next, if $\nu
\end{proof}
\section{Proof of Theorem~\ref{thm:slidingCorrecntess}}\label{app:slidingCorrecntess}
\begin{theorem*}
	Algorithm~\ref{alg:approxSliding} solves ssA while processing elements and answering queries in constant time.
\end{theorem*}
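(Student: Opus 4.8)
\emph{The plan} is to verify the two assertions of the theorem in turn — that \textsc{Add} and \textsc{Query} run in constant time, and that \textsc{Query}$(i)$ returns some $r$ with $\suffixsum(i,n)-\delta < r \le \suffixsum(i,n)$ — the second being the bulk of the work.

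\emph{Running time.} I would dispose of this first, by inspection of Algorithm~\ref{alg:approxSliding}: \textsc{Add} performs a constant number of arithmetic operations plus at most one call $\ranker.\textsc{Add}$, and \textsc{Query} performs a constant number of operations plus at most two calls $\ranker.\textsc{Query}$; by the exact suffix-sum construction of the previous subsection, each $\ranker$ operation is $O(1)$. The only point to check is that all operands fit in $O(1)$ words: a rounded element $Round_{\nBits}(x)$ is described by $\nBits=\ceil{\lgp{n/\mu}+\lg\lg n}=O(\lg n)$ bits (so the values $\newInputLetter\le z$ fed to $\ranker$ also fit), $\mathfrak{r}$ always lies in $[0,\sensitivity+\nu\ell)\subseteq[0,2\delta)$, and the returned estimate is $O(n\ell)$ — all within the word size assumed for integer streams.

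\emph{Invariants.} Before correctness I would record three facts. Index the chunks in arrival order; write $y_c$ for the sum of $Round_{\nBits}(\cdot)$ over chunk $c$, $\mathfrak{r}^{(c)}$ for the value of $\mathfrak{r}$ right after chunk $c$ is flushed, and $\newInputLetter_c$ for the integer appended to $\ranker$ then. (i) Lines~\ref{line:setBit}--\ref{line:reduceS} give $\sensitivity\,\newInputLetter_c=\mathfrak{r}^{(c-1)}+y_c-\mathfrak{r}^{(c)}$ with $0\le\mathfrak{r}^{(c)}<\sensitivity$, which telescopes to
\[ \sensitivity\sum_{c=a}^{b}\newInputLetter_c \;=\; \sum_{c=a}^{b} y_c \;+\; \mathfrak{r}^{(a-1)}-\mathfrak{r}^{(b)} . \]
(ii) At all times $0\le\mathfrak{r}<\sensitivity+\nu\ell$. (iii) For each input, $0\le x-Round_{\nBits}(x)<\ell\,2^{-\nBits}\le\delta/(n\lg n)$, so the cumulative rounding loss over any window of at most $n$ elements lies in $[0,\delta/\lg n)$. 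I would also note $\nu\ell\le\delta(1-1/\lg n)$ and $\sensitivity\ge\delta(1-1/\lg n)-1$; these are precisely where the deliberately shrunk $\nu,\sensitivity$ provide the slack used below.

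\emph{Correctness.} Fix a query $i$ and split the last $i$ elements into the $o$ elements of the current, not-yet-flushed chunk, a block of complete intervening chunks, and a prefix-truncated oldest chunk. For the branch $i\le o$ (vacuous when $\nu=1$) the window sits inside the current chunk, so $\suffixsum(i,n)\le i\ell<\nu\ell<\delta$; hence the target interval already contains $0$, and I would just verify $\mathfrak{r}-(\sensitivity-1/2)\in[\,\suffixsum(i,n)-\delta,\ \suffixsum(i,n)\,]$ using (ii), (iii) and $\sensitivity\ge\delta(1-1/\lg n)-1$. For the branch $i>o$ I would substitute the telescoping identity into line~\ref{line:est} to rewrite the estimate as the $Round_{\nBits}$-mass of exactly the $i$ windowed elements, \emph{plus} two spurious non-negative terms: the $Round_{\nBits}$-mass of the out-of-window prefix of the oldest relevant chunk (at most $\ell\cdot\outBits$, since $Round_{\nBits}(x)\le\ell$) and a carried residual (less than $\sensitivity$). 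I would then check that the two subtractions in line~\ref{line:est}, namely $\ell\cdot\lastBit\cdot\outBits$ (with $\lastBit=\newInputLetter$ of the oldest chunk) and $\sensitivity-1/2$, together cover these two overcounts — giving $r\le\suffixsum(i,n)$ — while the combined over-subtraction plus the $<\delta/\lg n$ rounding loss stays strictly below $\delta$ — giving $r>\suffixsum(i,n)-\delta$. The corner case $\sensitivity=\minDelta$ reduces to the exact construction, as in the footnote.

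\emph{Main obstacle.} The hard part is exactly the last balancing act in the $i>o$ case: one must check that the \emph{single} term $\ell\cdot\lastBit\cdot\outBits$ simultaneously cancels the over-count from the fully-included oldest chunk \emph{and} never over-subtracts (which would break $r>\suffixsum(i,n)-\delta$), while $\sensitivity-1/2$ must absorb the propagated residual and still leave room for the rounding loss. Making these inequalities close is precisely why $\nu$ and $\sensitivity$ carry the $(1-1/\lg n)$ factor and why $\nBits$ is set to $\ceil{\lgp{n/\mu}+\lg\lg n}$; I expect that verifying this arithmetic, while keeping the floor/ceiling rounding of $i/\nu$ and the $\mu\le1$ versus $\mu>1$ regimes straight, is where most of the effort will go, with the constant-time claim being the easy half.
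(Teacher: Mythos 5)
Your outline matches the paper's proof essentially step for step: the same telescoping identity for the propagated residual (paper's Eq.~\eqref{eq2}), the same bound $0\le\mathfrak{r}_0\le\sensitivity-1$ at chunk boundaries (Eq.~\eqref{eq:s0}), the same rounding-error bound $0\ge\xi\ge-\delta/\lg n$, the same decomposition of $\widehat{S_i}-S_i$ into residual plus out-of-window prefix mass minus the two corrections, and the same two outer cases ($\nu=1$ versus $\nu>1$, and $i\le o$). The only thing your sketch leaves implicit is that the ``balancing act'' you flag is resolved in the paper by an explicit sub-case split on whether $\lastBit$ is $0$ or $1$ (when $\lastBit=1$ the prefix mass is bounded below using $\mathfrak r_0+\sum_{d\le\nu}Round_{\nBits}(x_d)\ge\sensitivity$, and when $\lastBit=0$ the same quantity is $<\sensitivity$ so the out-of-window overcount is absorbed entirely by the $\sensitivity-1/2$ subtraction); once you carry that out you would be reproducing the paper's argument.
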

\begin{proof}
\newcommand{\lastT}{h}
%\section{Proof of Theorem~\ref{thm:slidingCorrecntess}}\label{apx:slidingCorrectness}
%\begin{proof}
For the proof, we define a few quantities that we also use in our {\sc Query} procedure: $\numBits \triangleq \ceil{\frac{i-o}{\nu}}, \setBits \triangleq \ranker.\mbox{\sc Query}\parentheses{\numBits}, \lastBit \triangleq \setBits \allowbreak-\break \ranker.\mbox{\sc Query}\parentheses{\numBits-1}$ and $\outBits \triangleq \outBitsVal$. 
%as in our {\sc Query} function; 
We assume that the index of the most recent element is 
$\lastT\triangleq \outBits + i,$ such that $x_1$ is the first element in the chunk of $\lastBit$ and $o=\parentheses{\lastBit \mod \nu}$ is the offset within the current chunk. We also denote $\lastBlockEnd\triangleq \lastT-o$, such that $x_\lastBlockEnd$ is the last element of the most recently completed chunk.
Figure~\ref{fig:slidingRankerQueryProof} illustrates the setting.
% and that we have seen at least $w$ elements
%Notice that this implies that element $x_1$ was the first item of a chunk %and that this implies
%\begin{align}
%\lastT - 
%\end{align}. We now show that for any $w\le W$, the algorithm provides a $\bserror$-additive approximation for the sum of the last $w$ elements.
By the correctness of the $\ranker$ exact suffix sum algorithm, and as illustrated in Figure~\ref{fig:slidingRankerQueryProof}, we have that \setBits{} is the sum of the last \numBits{} added to $\ranker$, that \lastBit{} is the value of the element that represents the last chunk that overlaps with the queried window. Also, notice that \outBits{} is the number of elements in that chunk that are not a part of the  window.
\begin{figure}[]
	%\medskip
	\centering
	%\frame{
	\includegraphics[width=\linewidth]{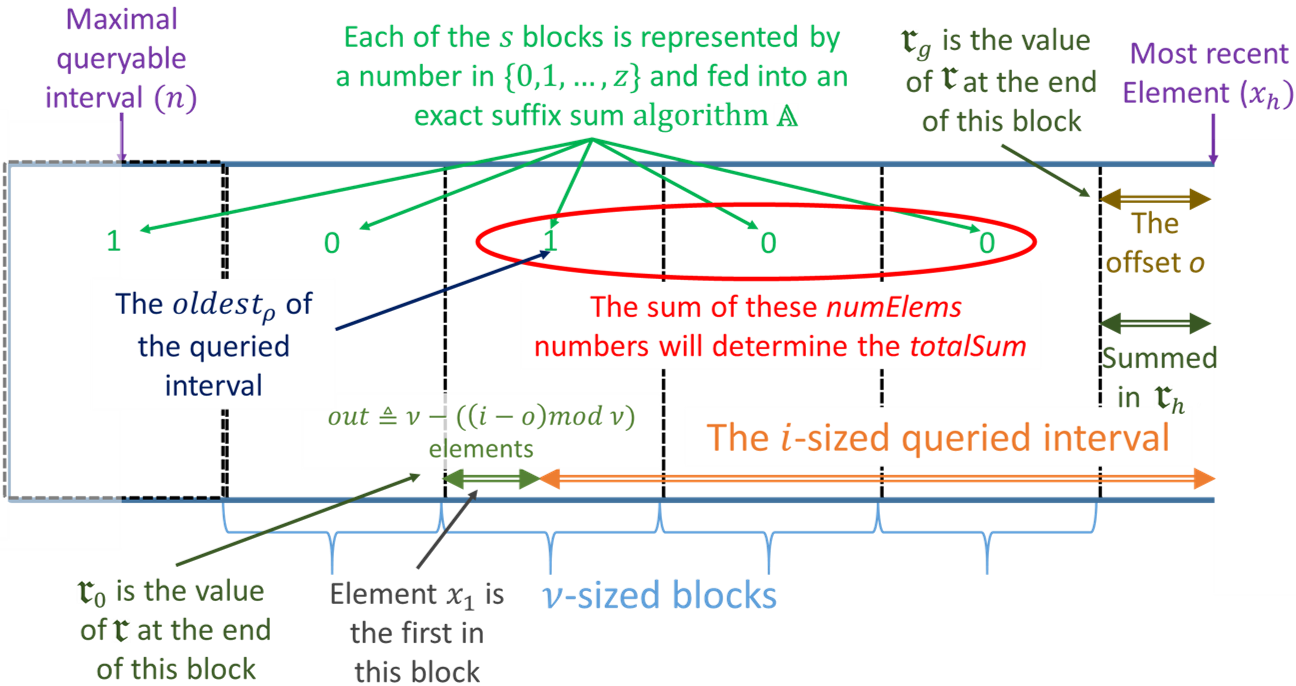}
	%}
	\caption{Theorem~\ref{thm:slidingCorrecntess} proof's setting, with all relevant quantities that Algorithm~\ref{alg:approxSliding} uses illustrated.}
	\label{fig:slidingRankerQueryProof}
\end{figure}
For any $t\in \mathbb N$, we denote by $\mathfrak{r_t}$ the value of $\mathfrak{r}$ \emph{after} the $t^{th}$ item was added; e.g., $\mathfrak{r_{\lastT}}$ is the value of $\mathfrak{r}$ at the time of the query and $\mathfrak{r_\lastBlockEnd}$ is its value before the current chunk. Notice that $\mathfrak{r_0}$ is also at the end of a chunk (that does not overlap with the queried interval).  For other variables, we consider their value at query time.

When a chunk ends (Line~\ref{line:end-of-block}), we effectively perform $\mathfrak{r}\gets \mathfrak{r}\mod \sensitivity$ (lines~\ref{line:setBit} and \ref{line:reduceS}), thus:
\begin{align}
0 \le \mathfrak{r_0} \le \sensitivity - 1.\label{eq:s0}
\end{align}
Our goal is to estimate the quantity 
\begin{align}
S_i \triangleq \sum_{d=\lastT-i+1}^{\lastT} x_d = \sum_{d=\outBits+1}^{\lastT} x_d.\label{eq0}
\end{align}
Recall that our estimation (Line~\ref{line:est}) is:
\begin{multline}
\widehat{S_i}\triangleq\mathfrak{r_{\lastT}} - \parentheses{\sensitivity - 1/2}+\sensitivity\cdot\setBits -\ell\cdot\lastBit\cdot\outBits\\
= \mathfrak{r_{\lastBlockEnd}} + \sum_{d=\lastBlockEnd+1}^{\lastT} Round_{\nBits}(x_d) - \parentheses{\sensitivity - 1/2}+\sensitivity\cdot\setBits -\ell\cdot\lastBit\cdot\outBits,
%\\\ \\
%\widehat{S_w}\triangleq \bsrange\cdotpa{S \cdot{\setBits}-\outBits\cdot \lastBit-S/2}+s_\lastT + 1/2 \\
%= \bsrange\cdotpa{S \cdot{\setBits}-\outBits\cdot \lastBit-S/2}+s_{\lastT-o} +\sum_{i=\lastT-o+1}^{\lastT} Round_{\nBits}(x_i)+ 1/2,
\label{eq1}
\end{multline}
where the last equality follows from the fact that within a chunk we simply sum the rounded values (Line~\ref{line:sum}).
Next, observe that we sum the rounded values in each chunk and that if $\mathfrak{r}$ is decreased by $k\cdot \sensitivity$ (for some $k\in\mathbb N$) at Line~\ref{line:reduceS}, then we set one of the last $\numBits$ elements added to $\ranker$ to $k$. This means that:
\begin{align}
\mathfrak{r_0} + \sum_{d=1}^{\lastBlockEnd} Round_{\nBits}(x_d) 
= \mathfrak{r_{\lastBlockEnd}} + \sensitivity\cdot\ranker.\mbox{\sc Query}\parentheses{\numBits} = \mathfrak{r_{\lastBlockEnd}} + \sensitivity\cdot\setBits.\label{eq2}
%\\\\\\
%s_0 + \sum_{i=1}^{\lastT-o} Round_{\nBits}(x_i) = s_{\lastT-o} + S\cdot R\cdot\ranker.\mbox{\sc Query}\parentheses{\numBits} = s_{\lastT-o} + S\cdot R\cdot\setBits.\label{eq2}
\end{align}
Plugging~\eqref{eq2} into~\eqref{eq1} gives us
\begin{multline}
\widehat{S_i}= \mathfrak{r_0} + \sum_{d=1}^{\lastBlockEnd} Round_{\nBits}(x_d)  + \sum_{d=\lastBlockEnd+1}^{\lastT} Round_{\nBits}(x_d) - \parentheses{\sensitivity - 1/2} -\ell\cdot\lastBit\cdot\outBits.
%\\\\
%\widehat{S_w}= s_0 + \sum_{i=1}^{\ell-o} Round_{\nBits}(x_i) + \sum_{i=\ell-o+1}^{\ell} Round_{\nBits}(x_i) - \bsrange\cdot{\outBits\cdot \lastBit} - S\cdot R/2 + 1/2\\
%= s_0 + \sum_{i=1}^{\outBits} Round_{\nBits}(x_i) + \sum_{i=\outBits+1}^{\ell} Round_{\nBits}(x_i) - \bsrange\cdot{\outBits\cdot \lastBit} - S\cdot R/2 + 1/2.
\label{eq3}
\end{multline}
Joining~\eqref{eq3} with \eqref{eq0}, we can express the algorithm's error as:
\begin{multline}
\widehat{S_i} - S_i = \mathfrak{r_0} + \sum_{d=1}^{\outBits} Round_{\nBits}(x_d)  + \sum_{d=\outBits+1}^{\lastT} \biggParentheses{Round_{\nBits}(x_d) - x_d} - \parentheses{\sensitivity - 1/2} -\ell\cdot\lastBit\cdot\outBits\\
= \mathfrak{r_0} + \sum_{d=1}^{\outBits} Round_{\nBits}(x_d)  + \xi - \parentheses{\sensitivity - 1/2} -\ell\cdot\lastBit\cdot\outBits
%\\\\\\
%\widehat{S_w} - S_w = s_0 + \sum_{i=1}^{\outBits} Round_{\nBits}(x_i) + \parentheses{\sum_{i=\outBits+1}^{\lastT} Round_{\nBits}(x_i) - \sum_{i=\outBits+1}^{\lastT} x_i} - \bsrange\cdot{\outBits\cdot \lastBit} - S\cdot R/2 + 1/2\\
%= s_0 + \sum_{i=1}^{\outBits} Round_{\nBits}(x_i) + \xi - \bsrange\cdot{\outBits\cdot \lastBit} - S\cdot R/2 + 1/2
,\label{eq4}
%s_0 + \parentheses{\sum_{i=1}^{\lastT} Round_{\nBits}(x_i)} - \bsrange\cdot{\outBits\cdot \lastBit}
\end{multline}
where $\xi$ is the rounding error, defined as
$
%\begin{align*}
\xi \triangleq \sum_{d=\outBits+1}^{\lastT} \biggParentheses{Round_{\nBits}(x_d) - x_d}.
%\end{align*}
$

Since each rounding of an integer $x\in\frange{\ell}$ has an error of at most $\frac{\ell}{2^{\nBits}}$, and as
we round $i\le n$ elements, we have that the rounding error satisfies
\begin{align}
0 \ge \xi \ge 0 -\frac{\ell\cdot n}{2^{\nBits}} \ge -\lnrErrSymbol/\lg n
,\label{eq5}
\end{align}
where the last inequality is immediate from our choice of the number of bits -- $\nBits\triangleq\ceil{\lgp{n/\mu}+\lg\lg n}$.
%We set the size of each chunk to be
%\begin{align}
%\nu \triangleq \max\set{\floor{\mu\cdotpa{1-1/\lg n}},1}
%%\floor{2\weps\cdotpa{1 - 1/\lgw}}
%.\label{sVal}
%\end{align}
We now split to cases based on the value of $\mu$. 
%As in the \lnr{} case, we 
We
start with the simpler $\mu< 2\cdotpa{1-1/\lg n}$ case, in which $\nu=1$ (and consequently, $out\equiv 0$). 
This allows us to express the algorithm's error of~\eqref{eq4} as
\begin{align}
\widehat{S_i} - S_i = \mathfrak{r_0}  + \xi - \parentheses{\sensitivity - 1/2}.
\end{align}
We now use \eqref{eq:s0},\eqref{eq5}, and the definition of \sensitivity{} to obtain:
\begin{align*}
\widehat{S_i} - S_i = \mathfrak{r_0}  + \xi - \parentheses{\sensitivity - 1/2} \le -1/2.
\end{align*}
Similarly, we can bound it from below:
\begin{align*}
\widehat{S_i} - S_i = \mathfrak{r_0}  + \xi - \parentheses{\sensitivity - 1/2} \ge \xi - \parentheses{\sensitivity - 1/2} \ge -\lnrErrSymbol + 1/2.
\end{align*}
We established that if $\nu=1$ we achieve the desired approximation.
%thereby ensuring that 
%\begin{align}
%|\xi \pm \nu\cdot \ell/2| \le \bserror.\label{erBound}
%\end{align}
Henceforth, we focus on the case where $\mu\ge 2\cdotpa{1-1/\lg n}$, which means that $\nu=\floor{\mu\cdotpa{1-1/\lg n}} > 1$ and $\lastBit\in\set{0,1}$.
We now consider two cases, based on the value of \lastBit.
\begin{enumerate}
	\item {\textbf{$\bm{\lastBit=1}$ case.}\\}
	In this case, we know that after the processing of element $x_\nu$ the value of $\mathfrak{r}$ was at least $\sensitivity$ (Line~\ref{line:setBit}). This implies that $\mathfrak{r_0} + \sum_{d=1}^{\nu} Round_{\nBits}(x_d) \ge \sensitivity$ and equivalently
	\begin{align*}
	\mathfrak{r_0} + \sum_{d=1}^{\outBits} Round_{\nBits}(x_d) \ge \sensitivity - \sum_{d=\outBits+1}^{\nu} Round_{\nBits}(x_d).
	\end{align*}
	Substituting this in~\eqref{eq4}, and applying~\eqref{eq5}, we get that:
	\begin{align*}
	\widehat{S_i} - S_i &= \mathfrak{r_0} + \sum_{d=1}^{\outBits} Round_{\nBits}(x_d)  + \xi - \parentheses{\sensitivity - 1/2} -\ell\cdot\outBits\\
	&\ge \sensitivity - \sum_{d=\outBits+1}^{\nu} Round_{\nBits}(x_d)  + \xi - \parentheses{\sensitivity - 1/2} -\ell\cdot\outBits\\
	&\ge - \parentheses{\sum_{d=\outBits+1}^{\nu} \ell}  + \xi + 1/2 -\ell\cdot\outBits\\
	&\ge -\lnrErrSymbol/\lg n -\ell\floor{\mu\cdotpa{1-1/\lg n}}+1/2 \ge - \lnrErrSymbol + 1/2.
	%	\\\\\\\\
	%	\widehat{S_w} - S_w &= s_0 + \sum_{i=1}^{\outBits} Round_{\nBits}(x_i) + \xi - \bsrange\cdot{\outBits} - S\cdot R/2 + 1/2\\
	%	&\ge S\cdot R - \sum_{i=\outBits+1}^{S} Round_{\nBits}(x_i) + \xi - \bsrange\cdot{\outBits} - S\cdot R/2 + 1/2\\
	%	&\ge S\cdot R - \parentheses{\sum_{i=\outBits+1}^{S} R} + \xi - \bsrange\cdot{\outBits} = \xi - S\cdot R/2 + 1/2 > -\bserror.
	\end{align*}
	In order to bound the error from above we use \eqref{eq:s0} and \eqref{eq5}:
	\begin{align*}
	\widehat{S_i} - S_i &= \mathfrak{r_0} + \sum_{d=1}^{\outBits} Round_{\nBits}(x_d)  + \xi - \parentheses{\sensitivity - 1/2} -\ell\cdot\outBits\\
	&\le \sensitivity - 1 + \ell\cdot \outBits  - \parentheses{\sensitivity - 1/2} -\ell\cdot\outBits \le -1/2.
	%	\\\\\\\\
	%	\widehat{S_w} - S_w &= s_0 + \sum_{i=1}^{\outBits} Round_{\nBits}(x_i) + \xi - \bsrange\cdot{\outBits} - S\cdot R/2 + 1/2\\
	%	&\le s_0 + \xi - S\cdot R/2  \le \xi + S\cdot R/2 - 1/2 < \bserror.
	\end{align*}
	\item {\textbf{$\bm{\lastBit=0}$ case.}\\}	
	Here, since the value of \lastBit{}  is $0$, we have that $\mathfrak{r_0} + \sum_{d=1}^{\nu} Round_{\nBits}(x_d) < \sensitivity$ and thus
	\begin{align*}
	\mathfrak{r_0} + \sum_{d=1}^{\outBits} Round_{\nBits}(x_d) \le \sensitivity - \sum_{d=\outBits+1}^{\nu} Round_{\nBits}(x_d) - 1.
	\end{align*}
	We use this for the error expression of~\eqref{eq4} to get:
	\begin{align*}
	\widehat{S_i} - S_i &= \mathfrak{r_0} + \sum_{d=1}^{\outBits} Round_{\nBits}(x_d)  + \xi - \parentheses{\sensitivity - 1/2}\\
	&\le \sensitivity - \sum_{d=\outBits+1}^{\nu} Round_{\nBits}(x_d) - 1  + \xi - \parentheses{\sensitivity - 1/2} \le -1/2\\
	%	\\\\\\\\
	%	&< S\cdot R - \sum_{i=\outBits+1}^{S} Round_{\nBits}(x_i) - 1 + \xi -  S\cdot R/2 + 1/2 \le \xi + S\cdot R/2 - 1/2 < \bserror.
	\end{align*}
	We now use \eqref{eq:s0}, \eqref{eq5}, and the fact that $\outBits\le \nu$ to bound the error from below as follows:
	\begin{align*}
	\widehat{S_i} - S_i &= \mathfrak{r_0} + \sum_{d=1}^{\outBits} Round_{\nBits}(x_d)  + \xi - \parentheses{\sensitivity - 1/2}\\
	&\ge \xi - \parentheses{\sensitivity - 1/2} \ge -\lnrErrSymbol + 1/2.
	%	\\
	%	\\\\\\
	%	\widehat{S_w} - S_w &= s_0 + \sum_{i=1}^{\outBits} Round_{\nBits}(x_i) + \xi -  S\cdot R/2 + 1/2\\
	%	&\ge \xi -  S\cdot R/2 + 1/2 > -\bserror.
	\end{align*}
\end{enumerate}
Finally, we need to cover the case of $i\le o$. In this case, we can return $\mathfrak r-\parentheses{\sensitivity - 1/2}$ as the estimation. This directly follows from~\eqref{eq:s0} and the fact that within a chunk we simply sum the rounded values (Line~\ref{line:sum}).
We established that in all cases $-\lnrErrSymbol < \widehat{S_i} - S_i \le 0$.
\qedhere
%, thereby proving the theorem.
%\end{proof}
\end{proof}
\fi
\end{document}